\newtheorem{definition}{Definition}[section]
\newtheorem{claim}{Claim}[section]
\newtheorem{theorem}{Theorem}[section]
\newtheorem{lemma}{Lemma}[section]
\newtheorem{fact}{Fact}[section]
\theoremstyle{remark}
\newtheorem*{remark}{Remark}
\newcommand{\qx}{\ensuremath{q_X}\xspace}
\newcommand{\qz}{\ensuremath{q_Z}\xspace}
\newcommand{\wx}{\ensuremath{w_X}\xspace}
\newcommand{\wz}{\ensuremath{w_Z}\xspace}
\newcommand{\OO}{\ensuremath{\mathcal{O}}}
\newcommand{\BB}{\ensuremath{\mathcal{B}}\xspace}
\newcommand{\QQ}{\ensuremath{\mathcal{Q}}\xspace}
\newcommand{\BBI}{\ensuremath{\mathcal{B}_i}\xspace}
\newcommand{\BBARI}{\ensuremath{\overline{\mathcal{B}_i}}\xspace}
\newcommand{\SSI}{\ensuremath{\mathcal{S}_i}\xspace}
\newcommand{\QQI}{\ensuremath{\mathcal{Q}_i}\xspace}
\newcommand{\XXI}{\ensuremath{\mathcal{X}_i}\xspace}
\newcommand{\RRI}{\ensuremath{\mathcal{R}_i}\xspace}
\DeclareMathOperator{\im}{im}
\title{Tradeoff Constructions for Quantum Locally Testable Codes}
\author{Adam Wills\thanks{Hon Hai Research Institute, Taipei. Email: \texttt{adamjwills7248@gmail.com}.} \and Ting-Chun Lin\thanks{Department of Physics, University of California San Diego, CA, and Hon Hai Research Institute, Taipei. Email: \texttt{til022@ucsd.edu}.} \and Min-Hsiu Hsieh\thanks{Hon Hai Research Institute, Taipei. Email: \texttt{min-hsiu.hsieh@foxconn.com}.}} 
\begin{document}
\maketitle
\begin{abstract}
    In this work, we continue the search for quantum locally testable codes (qLTCs) of new parameters by presenting three constructions that can make new qLTCs from old. The first analyses the soundness of a quantum code under Hastings' weight reduction construction for qLDPC codes \cite{hastings2021quantum} to give a weight reduction procedure for qLTCs. Secondly, we describe a novel `soundness amplification' procedure for qLTCs which can increase the soundness of any qLTC to a constant while preserving its distance and dimension, with an impact only felt on its locality. Finally, we apply the AEL distance amplification construction \cite{alon1995linear} to the case of qLTCs for the first time which can turn a high-distance qLTC into one with linear distance, at the expense of other parameters.

    These constructions can be used on as-yet undiscovered qLTCs to obtain new parameters, but we also find a number of present applications to prove the existence of codes in previously unknown parameter regimes. In particular, applications of these operations to the hypersphere product code \cite{hastings2016quantum} and the hemicubic code \cite{leverrier2022towards} yield many previously unknown parameters. Additionally, soundness amplification can be used to produce the first asymptotically good testable quantum code (rather than locally testable) - that being one with linear distance and dimension, as well as constant soundness. Lastly, applications of all three results are described to an upcoming work.
\end{abstract}

\section{Introduction}


Informally, a code, be it classical or quantum, is described as locally testable if one can determine, by acting only on a small subset of the bits or qubits (ideally constant in size), whether a given word is a valid codeword or not. There are consequences of this notion in many areas, most importantly in the theory of probabilistically checkable proofs (PCPs), where locally testable codes (LTCs) find their origin. Indeed, the celebrated PCP theorem was proved with the aid of certain classical LTCs \cite{arora1998proof,arora1998probabilistic,dinur2007pcp}. Quantum locally testable codes (qLTCs) were thus first introduced \cite{aharonov2015quantum} in the hope that they might aid in the conversion of the qPCP conjecture into a qPCP theorem \cite{aharonov2013guest}, although it is very much open whether this is possible, or indeed whether the conjecture is true. qLTCs garnered further attention in 2015 when it was proved in \cite{eldar2017local} that the existence of a qLTC of certain parameters would imply the well-known NLTS conjecture \cite{freedman2013quantum}. While the NLTS conjecture was resolved independently \cite{anshu2023nlts}, such qLTCs are still desired to provide an alternative proof of the NLTS theorem. Lastly, there are connections between the properties of qLTCs and passive quantum memories \cite{eczoo_self_correct} as well as single-shot decoding \cite{gu2023single}.

For the benefit of these problems and others, a search has been initiated for quantum locally testable codes of new parameters, in the hope of eventually writing down an optimal qLTC. In this paper, we will present constructions that can turn one qLTC into another that exhibits different parameters. These methods find immediate applications on some of the few qLTCs that are known to exist, producing qLTCs of previously unknown parameters, and will likely continue to find applications as more qLTCs are discovered.

\subsection{Overview of Previous Results}

When talking about qLTCs, one considers four parameters: the soundness, dimension, distance and locality, and asks how each of these parameters scale for a given code as the code length diverges to infinity. The dimension and distance are both standard parameters of error-correcting codes, being respectively the number of encoded logical qubits and the minimum weight of a logical error. We deal exclusively in quantum locally testable CSS codes, for which the distance is often expressed as $d = \min(d_X,d_Z)$, where $d_X$ and $d_Z$ are the X - and Z - distances: the minimum weights of X - type and Z - type logical errors respectively.

Soundness and locality are both important to us as they measure the degree of a code's local testability, although locality is also very important in the area of fault tolerance. For a CSS code with parity-check matrices $H_X$ and $H_Z$ respectively associated with the X - stabilisers and Z - stabilisers, the locality is defined as the maximum row or column weight of either $H_X$ or $H_Z$. For soundness, we first give a definition for classical codes and then the definition for quantum CSS codes will follow. We say that a classical code with parity-check matrix $H \in \mathbb{F}_2^{s \times t}$ is locally testable with soundness $\rho$ if for every $x \in \mathbb{F}_2^t$, $\frac{|Hx|}{s} \geq \rho\frac{d(x,\ker(H))}{t}$, where $d(x,\ker(H)) = \min\{|z| \text{ s.t. } x + z \in \ker(H)\}$. There is then a general definition of local testability for quantum codes given in \cite{aharonov2015quantum}, but this is rather involved. For now, it suffices to simply say that there is a result specialising this definition (Fact 17 of \cite{eldar2017local}) which tells us that, up to constant factors, a quantum CSS code is locally testable with soundness $\rho$ if and only if the classical codes defined by $H_X$ and $H_Z$ are locally testable with soundness $\rho$ according to the previous classical definition.

For an optimal qLTC, the existence of which is currently very much unknown, the soundness is as large as a constant, whereas the locality is as small as a constant. Having these two both being constant gives a ``true''  locally testable code, because the number of qubits acted on in testing a given word is constant. Additionally, for an optimal code, the distance and dimension are both as large as linear. While few qLTCs are known to exist, the history of qLDPC construction, where we care about only dimension, distance and locality, is full and rich due to the hope of using qLDPCs for fault-tolerant quantum computation.

Work on constructing quantum LDPC codes originated with Kitaev's toric code \cite{kitaev2003fault}. Some 20 years of effort went into the improvement of the dimension and distance of these codes under the condition of their being LDPC (having constant locality), until the eventual discovery of good qLDPC codes\footnote{A code is said to be good if it has linear dimension and linear distance.}, that being codes with optimally scaling dimension, distance and locality, in \cite{panteleev2022asymptotically}. Further constructions were found following this, both full \cite{leverrier2022quantum, dinur2023good}, as well as partial \cite{lin2022good}. A major feature of the journey towards better qLDPC codes was the so-called ``square-root distance barrier'', which was the difficulty experienced in constructing an LDPC quantum code with distance meaningfully exceeding a square-root, specifically, exceeding $\sqrt{N}\text{polylog}(N)$. Aside from the toric code itself, the codes of Freedman, Meyer and Luo in \cite{freedman2002z2} exhibit this, giving distance $\sqrt{N}\sqrt[\leftroot{-2}\uproot{2}4]{\log(N)}$, which was the highest known distance for a qLDPC code for around 20 years, however this code has constant dimension. Meanwhile, the hypergraph product codes of Tillich and Z\'emor \cite{tillich2013quantum} gave constant locality with square-root distance and linear dimension. The square-root distance barrier was eventually broken with the discovery of the fiber bundle codes due to Hastings, Haah and O'Donnell \cite{hastings2021fiber}, followed by the introduction of the lifted product \cite{panteleev2021quantum} and the balanced product \cite{breuckmann2021balanced}; good qLDPC codes were discovered shortly after \cite{panteleev2022asymptotically}.

On the classical side, the existence of good LDPC codes has been known for a long time \cite{gallager1962low}, although the existence of optimal classical locally testable codes, known as $c^3$ - LTCs, was not proved until 2021 in \cite{dinur2022locally}, shortly before the independent resolution in \cite{panteleev2022asymptotically}, after which further constructions were exhibited \cite{leverrier2022quantum,lin2022c}.

There are only four papers prior to the present work that aimed to construct qLTCs. The first two papers were from Hastings in 2016 \cite{hastings2016quantum} and Leverrier et al.~in 2019 \cite{leverrier2022towards}, which presented respectively the hyersphere product code and the hemicubic code. As is suggested by their names, these are both obtained from geometric constructions. Because of the similarity in the nature of their construction, and the similarity in their parameters, it makes sense for us to group these two together under the label ``geometric constructions''. Their parameters are shown in the central column of Table \ref{geomConstructionsParams}.

\renewcommand{\arraystretch}{1.5}
\begin{table}[h]

\centering
\begin{tabular}{c||c|c}
& \makecell{Geometric \\Constructions \cite{hastings2016quantum,leverrier2022towards} }& \makecell{Double Distance Balancing\\Applied to the Geometric Constructions \cite{wills2023general}}\\\hline\hline
Physical Qubits & $N$&$\Theta(Nt^2)$\\
Soundness & $\Omega\left(\frac{1}{\text{polylog}(N)}\right)$&$\Omega\left(\frac{1}{\text{polylog}(N)t^2}\right)$\\
Distance & $\Theta\left(\sqrt{N}\right)$&$\Theta(\sqrt{N}t)$\\
Dimension & $\Theta(1)$&$\Theta(t^2)$\\
Locality & $\OO(\log(N))$ & $\OO(\log(N))$
\end{tabular}\caption{The qLTCs of \cite{hastings2016quantum} and \cite{leverrier2022towards} are grouped under the name ``geometric constructions'' for brevity, although it would be remiss of us to not mention that the latter code exhibits a hard-won improvement in its soundness by a logarithmic factor over the former: from $\Omega\left(\frac{1}{\log(N)^2}\right)$ to $\Omega\left(\frac{1}{\log(N)}\right)$.}\label{geomConstructionsParams}
\end{table}
\renewcommand{\arraystretch}{1}

The third paper aiming to construct qLTCs \cite{cross2022quantum} had the primary aim of constructing codes with constant soundness. The authors start with the CSS code defined by the parity-check matrices 
\begin{equation}
H_Z = [I_n|I_n] \hspace{1cm} H_X = [\hat{H}|\hat{H}]
\end{equation}
where $I_n$ is an $n \times n$ identity matrix and $\hat{H}$ is a parity-check matrix for a classical $c^3$ - LTC. This code has constant soundness, constant locality and linear dimension, although constant X - distance and linear Z - distance, meaning it has the major deficiency of having constant overall distance. This is presently the only known example of a ``true'' quantum locally testable code: one with constant soundness and locality. By performing various operations on this code, the authors obtain other codes of improved distance, although for all these codes, as the distance grows, the locality increases (or the soundness drops) by the same amount as the distance is improved, meaning that polynomial distance cannot be obtained without either a polynomial locality or an inverse polynomial soundness. In particular, the authors obtain another code with constant soundness and non-constant locality (although constant \textit{average} locality) by distance balancing the above code using a modified parity-check matrix for the repetition code\footnote{The method behind distance balancing is described in Section \ref{homProduct}.}. This distance balancing procedure allows one to raise the overall distance of the quantum code, and is shown to preserve its soundness\footnote{Here, and throughout this work, saying a quantity is `preserved' means that it is preserved up to constant factors.}, but does so at the expense of its locality. Additionally, this result on soundness preservation applies only to the particular classical and quantum codes in question.

A previous result on distance balancing \cite{hastings2016weight} showed that when distance balancing \textit{any} quantum code with the usual parity-check matrix for the $t$-bit repetition code, we obtain the lower bound of $\Omega\left(\frac{\rho}{t}\right)$ in terms of the soundness of the original quantum code $\rho$. We generalised this in our previous work \cite{wills2023general} to say that when distance balancing any quantum code with any classical code of length $t$, the same lower bound may be obtained - $\Omega\left(\frac{\rho}{t}\right)$ - on the soundness of the new quantum code. This allows one to distance balance using a good classical LDPC code, which has the benefit over distance balancing with the repetition code that the dimension scaling of the quantum code will improve over the course of distance balancing, rather than worsening. As a distance balancing procedure alone, this construction may yield new parameters, for example when applied to the code described at the start of the previous paragraph. However, there is another, more general application of this idea that can potentially yield many more new parameters, which goes by the name `double distance balancing'.

Double distance balancing is a procedure that was originally used in the construction of quantum LDPC codes \cite{panteleev2021quantum}, but we give it a name here for the first time simply so we have something to refer to it by. Applied to any code, even one with already-balanced distances, one may distance balance once, and then again dually, using a good classical LDPC code. This has the effect of growing the dimension of the quantum code at hand, and causing its distance to tend towards a square-root, regardless of where it starts from. This procedure was applied to the geometric constructions in \cite{wills2023general} to obtain the parameters shown in the right-hand column of Table \ref{geomConstructionsParams}. This procedure can be applied to any qLTC, either currently existing, or yet to be discovered, to potentially discover many codes, whose parameters exhibit a tradeoff compared to those of the inputted code.

\subsection{Overview of Our Results}

The motivation behind the present work is to present constructions that can facilitate the rapid exploration of the parameter space of quantum locally testable codes. While it is hoped that one day a truly optimal quantum locally testable code will be written down, it is unclear how long this would take, and it is worth noting that it took some 30 years to construct optimal classical locally testable codes. Therefore, it is worth having tools that allow us to fill out the four-dimensional parameter space as quickly as possible, allowing for the identification of difficult parameter regimes in which research can be focused much more quickly than would be possible by constructing many individual codes.

We exhibit three constructions for this task: weight reduction, soundness amplification and distance amplification. Putting these on a par with the double distance balancing construction applied to qLTCs \cite{wills2023general}, we obtain a suite of four ``tradeoff'' constructions that can be applied to existing, or future quantum locally testable codes, to potentially quickly discover many new parameters. With the exception of the distance amplification construction, the procedures are very general and can be applied to most qLTCs. The distance amplification construction can be used to produce qLTCs of linear distance, although would not produce a code of interest if applied to a code that did not already have quite high distance, for example whose relative distance (distance divided by code length) does not exceed inverse polynomial. Applications will be discussed of the constructions to known qLTCs and to an upcoming work \cite{dinur2024towards} (in which all three constructions find applications).

For the first, and most difficult to prove, of our results, we analyse the soundness of a quantum code as it undergoes the weight reduction procedure of Hastings published in 2021 \cite{hastings2021quantum}, which we must modify slightly to be usable in our case. \cite{hastings2021quantum} replaces the former weight reduction paper \cite{hastings2016weight} from 2016, which contained an error. \cite{hastings2016weight} did, however, analyse the soundness of quantum codes under its procedures, and applied this result to the hypersphere product code to conclude the existence of a new qLTC of constant locality, although this does not now hold given the error in this paper. The newer paper \cite{hastings2021quantum} does not contain an analysis of soundness under its constructions, possibly because the constructions in this latter work are significantly more involved than those of the former. Such an analysis is performed in Section \ref{wtRedSection} to obtain the result of Theorem \ref{wtRedThm}.

\begin{theorem}[Informal]
Given a quantum locally testable CSS code on $N$ physical qubits, there is, under reasonable assumptions on this code, another qLTC of constant locality, whose dimension is equal to that of the first code, and whose number of physical qubits, distances and soundness differ from those of the first code by at most a polynomial factor in the original code's weights.
\label{wtRedThm}\end{theorem}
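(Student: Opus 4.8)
The plan is to take Hastings' weight reduction construction \cite{hastings2021quantum} step by step and track what happens to the classical local testability of the codes defined by $H_X$ and $H_Z$, invoking Fact 17 of \cite{eldar2017local} to translate back and forth between quantum soundness and classical soundness of the two matrices. Weight reduction for qLDPC codes proceeds in stages --- typically \emph{copying} (to reduce qubit degree), \emph{gauging} (to reduce check degree), and auxiliary steps to clean up --- each of which replaces the chain complex $\BB$ by a larger one $\BB'$ with a chain map between them, and with new ancilla qubits and checks carrying a local ``gadget'' structure (often a repetition-code-like or expander-like gadget). For each such stage I would prove a soundness-transfer lemma: given a string $x'$ on the qubits of $\BB'$ with small syndrome $|H'x'|$, first use the locality of the gadget to argue $x'$ is close to a string $x'_0$ that is ``consistent'' with the gadget (i.e. respects the copies / is in the image of the section map), then project $x'_0$ down to a string $x$ on the original qubits whose syndrome $|Hx|$ is bounded by $|H'x'|$ up to a factor polynomial in the original weights, and finally apply the soundness of the original code to get $d(x,\ker H)$ small, lifting the correction back up. The distance, dimension, and qubit-count bookkeeping is exactly as in \cite{hastings2021quantum} (dimension is preserved because the chain maps induce isomorphisms on homology; distances and qubit number change by at most $\mathrm{poly}(w)$ factors), so the genuinely new content is the soundness tracking.

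The first concrete step is to isolate the ``reasonable assumptions'' alluded to in the statement --- I expect these to be that the input code already has some mild sparsity (say each qubit and check has weight at most $w$, possibly with $w$ growing, since otherwise ``polynomial in the weights'' is vacuous), and possibly a lower bound relating soundness to the weights so that the error terms introduced by the gadgets do not swamp the signal. Then I would state the construction in the precise form I need, flagging the one modification required: Hastings' procedure as written may not preserve testability at some step (e.g.\ a step that is ``free'' for distance/dimension but destroys the LTC property of $H_X$ or $H_Z$), so I would replace that gadget with a locally testable analogue --- most naturally, wherever a repetition code appears as the 1-dimensional gadget, one checks that the standard repetition-code parity checks are themselves locally testable with constant soundness (this is classical and easy), and wherever an expanding structure is used, one uses that expander codes / the relevant small gadget code are LTCs. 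The modification has to be done carefully so that it still accomplishes the weight reduction it was there for.

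The main obstacle, I expect, is the \emph{gauging} step (check-weight reduction), because there a single high-weight check of the original code is broken into many low-weight checks linked by ancilla qubits in a chain, and a low-syndrome string on the new code need not restrict to anything with controlled syndrome on the original code unless one first ``repairs'' it along these chains; controlling the cost of that repair is exactly a local-testability statement about the chain gadget, and one has to be sure the repair does not move the string by more than $\mathrm{poly}(w) \cdot d(x', \ker H')$ and does not blow up the syndrome. A secondary subtlety is that the construction must be applied to $H_X$ and $H_Z$ in a coordinated ($CSS$-respecting) way, so the two soundness-transfer arguments cannot be run fully independently --- the ancilla qubits introduced to fix $H_X$ acquire $Z$-checks and vice versa, and one must check these interlocking pieces are still individually testable. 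Once each stage's soundness-transfer lemma is in hand, the theorem follows by composing the (constantly many) stages and multiplying the $\mathrm{poly}(w)$ loss factors, together with the standard distance/dimension/locality accounting from \cite{hastings2021quantum}.
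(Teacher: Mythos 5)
Your high-level strategy --- prove a soundness-transfer lemma for each stage of Hastings' construction via ``repair the string locally, project down, apply the original soundness, lift the correction back up,'' then compose the constantly many stages --- is exactly the skeleton of the paper's proof, and your observation that dimension/distance/qubit bookkeeping can be inherited from \cite{hastings2021quantum} is also how the paper proceeds. However, there are two substantive mismatches that would derail an execution of your plan.

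First, you have misidentified where the difficulty lies. You name gauging as the main obstacle, but gauging (X-type generator splitting) already comes with a soundness analysis in \cite{hastings2016weight} (Lemma 6 there), which the paper simply imports as a black box. The steps genuinely missing a soundness analysis are copying, the \emph{choosing heights} half of thickening-and-choosing-heights, and above all \emph{coning}, which your proposal does not mention at all. Coning is the primary technical contribution of \cite{hastings2021quantum} --- it is what reduces \wz while preserving the constancy of the other three weights, and without it the four-parameter reduction cannot be completed (the 2016-style plan of running generator splitting dually fails precisely because of the error in Lemma 1 of \cite{hastings2016weight}). The coning soundness proof is the bulk of the paper's work: one must show, for an X-operator $u=(u_1,u_2)$ on the cone code, that the projection $q_i'$ of the repaired $u_1'$ onto each $\QQI$ is actually realisable as a coboundary $\left(\partial_1^{\BBARI}\right)^T\bigl(u_2'^{(i)}\bigr)$, which requires $q_i'$ to have even support on every connected component of the graph $G_i$. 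This is exactly where the ``reasonable assumption'' enters: the code must be \emph{reasonable} (no Z-logical operator supported inside the support of a Z-stabiliser, guaranteed by $d_Z = \omega(\wz\wx)$), so that the product of Z's over a connected component of $G_i$ is a member of the stabiliser group. Your guess that the reasonable assumptions are mild sparsity plus a soundness-versus-weight bound misses this, and without it the coning soundness argument has no way to get off the ground.

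Second, your proposed modification --- swapping gadgets for locally testable analogues such as testable repetition or expander gadgets --- is not the modification the paper needs or makes. The repetition-code gadget in thickening is kept as is (its cost is simply a factor of $l$ in the soundness, per Lemma 7 of \cite{hastings2016weight}). The actual modifications are: (i) restricting to reasonable codes with no ``direct'' Z-stabilisers, and (ii) over-thickening inside the coning step, taking $l = \Theta(\wz\log(\wz)\qx)$ together with a strong hypergraph colouring so that each Z-stabiliser in $\QQI\otimes\mathcal{E}_0$ meets at most one of the height-chosen X-stabilisers from $\RRI$; this is what lets one push the support of a Z-operator off the internal edges of the cellulated discs without loss of generality in the reduced-cone-code soundness proof. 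So while your composition framework is sound, the proposal as written omits the step that constitutes essentially all of the new technical content.
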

A formal statement of Theorem \ref{wtRedThm} is presented in Lemma \ref{wtRedThmFormal} in Section \ref{fullWeightRed}.
\\

For our second result, we design a novel ``soundness amplification'' procedure for quantum locally testable codes that can amplify the soundness of any qLTC to a constant from a sub-constant value. It is worth emphasising that this is a non-trivial thing to do in this setting. On the classical side, the notion of local testability used to involve an algorithm that tested for code membership by querying several bits from a purported codeword. In this case, soundness amplification is trivial by a simple repeated application of this algorithm. However, the rather strong, combinatorial notion of local testability now used in the quantum case prohibits this. However, our result is as follows.

\begin{theorem}[Informal]
Given a quantum locally testable CSS code of soundness $\rho$ and locality $w$, there is another qLTC with the same number of physical qubits, the same distance, same dimension and constant soundness that has locality at most $w\;\text{poly}(1/\rho)$.
\label{SAInformalThm}\end{theorem}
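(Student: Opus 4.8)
The plan is to reduce to a purely classical statement and then amplify by adding carefully chosen \emph{redundant} checks. By Fact 17 of \cite{eldar2017local} it suffices to prove, for each of $H_X$ and $H_Z$ separately, the following classical claim: given a classical code with parity-check matrix $H\in\mathbb{F}_2^{s\times t}$ of soundness $\rho$ and row/column weight at most $w$, produce $H'$ with the \emph{same row space}, soundness $\Omega(1)$, and row/column weight $w\,\mathrm{poly}(1/\rho)$. Preserving the row space is the organising principle here: if $\mathrm{rowspace}(H_X')=\mathrm{rowspace}(H_X)$ and $\mathrm{rowspace}(H_Z')=\mathrm{rowspace}(H_Z)$, then the new CSS code is \emph{literally the same code} — same physical qubits, same $\ker H_X$, same $\im H_Z^T$, hence the same logical space, the same $d_X$ and $d_Z$, and the same dimension — and the orthogonality $H_X'(H_Z')^T=0$ is automatic. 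So we are free to append to $H$ any checks lying in $\mathrm{rowspace}(H)$; the only questions are which ones and how many, so that soundness jumps to a constant while column weights stay bounded.

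For the construction, fix $k=\Theta(1/\rho)$ and an explicit bipartite lossless/unique-neighbour expander $G$ with left vertex set $\mathcal F$ of size $\Theta(s)$, right vertex set $[s]$ (indexing the rows $h_1,\dots,h_s$ of $H$), and left-degree $k$. For each $T\in\mathcal F$, viewed as a size-$k$ multiset of row indices, append the check $v_T:=\sum_{i\in T}h_i$; set $H'$ to be $H$ stacked with all the $v_T$. Since each $v_T\in\mathrm{rowspace}(H)$ and the original rows are retained, $\mathrm{rowspace}(H')=\mathrm{rowspace}(H)$. Each $v_T$ has weight at most $kw$, and a coordinate $m\in[t]$ lies in $v_T$ only if $T$ meets $R_m:=\{i:m\in\mathrm{supp}(h_i)\}$, which has size at most $w$; by left-$k$-regularity and $|\mathcal F|=\Theta(s)$ the right-degrees are $\Theta(k)$, so at most $O(wk)$ of the $v_T$ contain $m$. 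Hence the locality of $H'$ is $O(wk)=w\,\mathrm{poly}(1/\rho)$.

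The heart of the argument is the soundness of $H'$. Fix $x\in\mathbb{F}_2^t$, let $c$ be a nearest codeword, $e=x-c$ so $|e|=d(x,\ker H)$, and set $A:=\{i:\langle h_i,x\rangle=1\}$; since $h_i\perp c$ we have $A=\{i:\langle h_i,e\rangle=1\}$, $|A|=|Hx|$, and $v_T$ is violated by $x$ iff $|T\cap A|$ is odd. Split on $|A|$. If $|A|\le\rho s$, the original soundness gives $d(x,\ker H)\le |A|\,t/(\rho s)$, and unique-neighbour (lossless) expansion of $G$ applied to the right-set $A$ yields $\Omega(k|A|\,|\mathcal F|/s)$ left vertices $T$ with $|T\cap A|=1$, i.e. that many violated new checks; one checks $\Omega(k|A|\,|\mathcal F|/s)\ge \rho'\,|\mathcal F|\,d(x,\ker H)/t$ with $\rho'=\Omega(k\rho)=\Omega(1)$. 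If instead $|A|>\rho s$, we only need $d(x,\ker H)\le t$, and we invoke a parity-mixing property of $G$ (from its spectral expansion): for every right-set $A$ of density at least $\rho$, $\mathbb{E}_{T\in\mathcal F}\big[(-1)^{|T\cap A|}\big]\le (1-2\rho)^{k}+o(1)\le 1-\Omega(1)$, so a constant fraction of the $v_T$ are violated, which is at least $\rho'|\mathcal F|\ge\rho'|\mathcal F|\,d(x,\ker H)/t$. Since $s+|\mathcal F|=\Theta(|\mathcal F|)$, $H'$ has soundness $\Omega(1)$, and Fact 17 of \cite{eldar2017local} lifts this back to constant quantum soundness; doing this for both $H_X$ and $H_Z$ completes the proof.

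I expect the main obstacle to be exactly this soundness analysis under the locality budget. A purely random choice of the $v_T$ would, via a union bound over the (exponentially many) cosets of the code, force a number of redundant checks large enough to incur polylogarithmic or worse factors in the locality; one therefore needs an \emph{explicit} expander that simultaneously provides unique-neighbour expansion for sets of density up to $\Theta(\rho)$ \emph{and} a parity-pseudorandomness/mixing guarantee for sets of density above $\Theta(\rho)$, with the two regimes glued at density $\Theta(\rho)$ and the degree $k=\Theta(1/\rho)$ chosen consistently with both. A secondary nuisance is the degenerate case in which the input code has very few checks relative to its length; ruling it out may require a mild ``reasonable assumption'' on the input qLTC or a short preprocessing step.
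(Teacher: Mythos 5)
Your organising idea -- keep the old checks, append redundant checks that are sums of old ones so that the row space (hence the code, its qubits, distance and dimension) is untouched, and certify soundness via unique-neighbour expansion of the set of violated old checks -- is the same as the paper's. But your execution is a genuinely different, single-shot route, and it has a gap at its central step. The paper never attempts to handle violated sets $A$ of large density: each round only targets soundness $\gamma\rho^{\kappa}$ with $\kappa=(1+\alpha)/2<1$, so whenever $|A|$ is large the retained old checks already certify that weaker target, and the new checks are only invoked when $|A|$ is small enough to lie below the $K_{max}$ of one of $\Theta(\log(1/\rho))$ lossless expanders at different scales; constant soundness is then reached by iterating $\Theta(\log\log(1/\rho))$ rounds. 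You instead aim for constant soundness in one round with degree-$k=\Theta(1/\rho)$ checks, which forces you to treat $|A|>\rho s$ via the ``parity-mixing'' bound $\mathbb{E}_{T}\bigl[(-1)^{|T\cap A|}\bigr]\le 1-\Omega(1)$. That bound does \emph{not} follow ``from spectral expansion'' as you assert: spectral expansion controls edge counts $|E(S,T)|$, i.e.\ the average of $|T\cap A|$, not its parity. Concretely, double every right vertex of a good bipartite expander and join each left vertex to both copies; the normalised second singular value is unchanged, yet $|T\cap A|$ is even for every $T$ whenever $A$ respects the pairing. As written, the second case of your soundness argument is therefore unsupported, and you yourself flag it as unresolved.

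The gap is fillable, though not by the route you sketch. A single random (configuration-model) graph with $|\mathcal F|=Cs$ left vertices of degree $k=\Theta(1/\rho)$ satisfies both of your requirements simultaneously with positive probability: the lossless-expansion union bound runs over sets of size $O(\rho s)$ exactly as in the paper's Lemma \ref{expanderExistenceLemma} (note you are at the information-theoretic ceiling $K_{max}=O(|\mathcal F|/k)=O(\rho s)$, so the constants need checking), while the parity-mixing union bound runs over at most $2^{s}$ syndromes -- this is your ``union bound over cosets'' -- each failing with probability $e^{-\Omega(Cs)}$ by a Chernoff bound, so a large enough constant $C$ suffices and costs only a constant factor in the right-degree, not a polylogarithmic one. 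Your stated fear that randomness forces a polylog overhead is unfounded, and explicitness is not required for the theorem (the paper's own construction is probabilistic). With the graph instantiated this way your argument goes through and in fact gives locality $O(w/\rho)$, slightly better than the paper's $O(w/\rho^{1+\delta})$; the paper's multi-round, multi-scale design buys the avoidance of the large-$|A|$ regime (and hence of any parity-sampling hypothesis) at the price of that extra $\rho^{\delta}$.
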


A formal statement of Theorem \ref{SAInformalThm} is presented in Lemma \ref{SALemma} in Section \ref{soundnessAmplificationSection}.
\\

For our third and final result, we apply the AEL distance amplification procedure \cite{alon1995linear} to qLTCs for the first time, which finds an application to the code in the upcoming work \cite{dinur2024towards}.

\begin{theorem}[Informal]
Given a quantum locally testable CSS code on $N$ physical qubits with locality $w$ and non-constant distance $d$, let $\Delta = \frac{d}{N}$ be the relative distance of this code. There is, under reasonable assumptions on this code, another qLTC of the same number of physical qubits, up to a constant, the same dimension, linear distance, with locality and soundness differing by at most a polynomial factor in $\frac{1}{\Delta}$ and $w$.
\label{DAInformalThm}\end{theorem}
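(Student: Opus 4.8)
\medskip
\noindent\textbf{Proof plan.} First I would set up a quantum analogue of the Alon--Edmonds--Luby distance amplification of \cite{alon1995linear}. Write the input qLTC as the CSS code with parity-check matrices $H_X,H_Z$ on $N$ qubits, soundness $\rho$, locality $w$, and $X$- and $Z$-distances at least $\Delta N$. Partition the qubits into $n$ blocks and fix a $D$-regular bipartite expander $G$ on two copies of $[n]$ with normalised second eigenvalue at most a small constant times $\Delta$, which forces $D=\mathrm{poly}(1/\Delta)$. The amplified code $\tilde Q$ is obtained, as in the classical construction, by letting each right-block of $G$ aggregate (redundant) copies of the qubits lying in its $D$ neighbouring left-blocks; its stabilisers are the ``consistency'' stabilisers that glue together the copies of each qubit --- phrased CSS-ly via a small inner CSS code --- together with the rows of $H_X$ and of $H_Z$ pulled back through $G$. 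Because the consistency and pulled-back parts mirror the CSS structure of $Q$, the relation $\tilde H_X\tilde H_Z^{\top}=0$ follows from $H_XH_Z^{\top}=0$; the induced map on codewords is a bijection, so the dimension is unchanged; every new stabiliser has weight $\OO(wD)=\mathrm{poly}(w,1/\Delta)$; and, with the blocking arranged as in the formal statement (this is where its regularity hypotheses on $Q$ enter), the physical-qubit count changes by only a constant factor.

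Next I would prove linear distance by the standard AEL argument transported to logical operators. Any $X$-logical operator of $\tilde Q$ is forced by the consistency stabilisers to be globally consistent, hence reads back through $G$ to an $X$-logical operator $L$ of $Q$; by hypothesis $|L|\ge\Delta N$, so $L$ meets a set $S$ of at least $\Omega(\Delta n)$ left-blocks. The expander mixing lemma applied to $S$ and to the set $T$ of right-blocks with no neighbour in $S$ gives $|T|\le\lambda(G)^2 n^2/(D^2|S|)\le n/2$ for our choice of $D$, so the original operator is supported on at least $n/2$ right-blocks, i.e.\ on $\Omega(N)$ qubits; the same holds on the $Z$ side. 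Hence $\tilde Q$ has linear distance.

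The real content is the soundness analysis, for which there is no classical precedent at this level of generality; by Fact~17 of \cite{eldar2017local} it suffices to show that $\ker\tilde H_X$ and $\ker\tilde H_Z$ are classically locally testable with soundness $\rho/\mathrm{poly}(w,1/\Delta)$, and I describe the $X$ side. Given an error $y$ on the qubits of $\tilde Q$: (i) if the syndrome of $y$ on the consistency/inner stabilisers is small, then, using the distance of the inner code, $y$ lies within a controlled relative weight of a globally consistent word $\hat y$, which reads back through $G$ to a block-level word $x$; (ii) the syndrome of $y$ on the pulled-back stabilisers is, up to the degree factor $D$, the $H_X$-syndrome of $x$; (iii) by the soundness $\rho$ of $Q$, $x$ is close to $\ker H_X$; (iv) taking a nearby codeword, re-spreading it through $G$ and re-encoding the inner code produces $\tilde c\in\ker\tilde H_X$, and a final routing estimate bounds $d(y,\tilde c)$ in terms of $d(x,\ker H_X)$, the regularity of $G$, and the sizes of the supports involved. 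Composing the constant and $\mathrm{poly}(1/\Delta,w)$ losses from (i)--(iv) gives the claimed soundness, and Fact~17 converts this to the quantum statement.

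I expect step~(iv), and more generally the two-way passage between errors on $\tilde Q$ and errors on $Q$, to be the main obstacle: the distance argument only needs to transport \emph{weights} one way through $G$, whereas soundness requires controlling \emph{distances to the codespace} in both directions --- pushing a nearby outer codeword forward to a nearby word of $\tilde Q$, and pulling a low-syndrome word of $\tilde Q$ back to a low-syndrome word of $Q$ --- while keeping the syndrome weights faithfully comparable on the two sides. This is precisely where the $\mathrm{poly}(1/\Delta,w)$ losses arise and where the regularity hypotheses on $Q$ do real work, ensuring for instance that block-level closeness reflects qubit-level closeness and that the pulled-back syndrome does not degenerate. A secondary point is the choice of inner CSS code: it must itself be locally testable with constant soundness on $\mathrm{poly}(1/\Delta)$ qubits, and rather than invoking an as-yet-unknown good qLTC I would take it to be essentially the consistency code alone --- a controlled spreading of the repetition code, whose soundness can be checked directly --- accepting the resulting loss in rate.
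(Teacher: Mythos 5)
Your overall architecture is recognisably the right one --- inner encoding, a pseudorandom/expander permutation of qubits among blocks, pulled-back outer stabilisers, a mixing-lemma distance argument, and a soundness analysis that pushes errors back and forth through the graph --- and your identification of the two-way passage between errors on $\tilde Q$ and errors on $Q$ as the crux matches where the paper's work actually lies (its proof is a three-case analysis via ``erasure'' vectors obtained by decoding valid blocks and erasing invalid ones at each layer). However, there is a genuine gap in your choice of inner code. You propose to take the inner code to be ``essentially the consistency code alone --- a controlled spreading of the repetition code.'' For a CSS code this cannot work: a quantum repetition code has distance $1$ in one of the two bases, so concatenating with it does not amplify (indeed it destroys) the distance on that side, and the final code would have $O(1)$ distance in one basis --- exactly the deficiency of the constant-soundness code of \cite{cross2022quantum} that distance amplification is meant to avoid. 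Your $X$-side distance argument would go through, but the dual argument would fail because a single-qubit logical of the inner code survives into $\tilde Q$.

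The fix, and the paper's essential quantum ingredient, is to take both the inner code and a final ``block'' code (re-encoding each right block after the permutation) to be asymptotically good quantum LDPC CSS codes of length $\Theta(1/\Delta)$, which exist unconditionally \cite{panteleev2022asymptotically,leverrier2022quantum,dinur2023good}. Your stated reason for avoiding this --- that the inner code would need to be ``locally testable with constant soundness,'' which would be circular --- is not correct: the paper's soundness analysis never tests \emph{within} an inner or block codeword. It only uses the fact that a block which is not a valid inner/block codeword violates at least one stabiliser of that small code, which costs factors of $N_{in}$ and $N_{block}$, i.e.\ $\mathrm{poly}(1/\Delta)$, in the final soundness --- exactly the loss the theorem statement permits. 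So good qLDPC (not qLTC) inner codes suffice, and with that substitution your plan aligns with the paper's proof.
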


A formal statement of Theorem \ref{DAInformalThm} is presented in Lemma \ref{DALemma} in Section \ref{DASection}.
\\

A very brief overview of the available tradeoff constructions is presented in Table \ref{constructionOverview}.

\renewcommand{\arraystretch}{1.3}
\begin{table}[h]

\centering
\begin{tabular}{c||c|c|c|c}
& \makecell{Weight\\Reduction}&\makecell{Soundness\\Amplification}&\makecell{Distance\\Amplification}&\makecell{Double\\Distance Balancing}\\\cline{2-5}
&\makecell{Theorem \ref{wtRedThm}\\Present Work}&\makecell{Theorem \ref{SAInformalThm}\\Present Work}&\makecell{Theorem \ref{DAInformalThm}\\Present Work}&\makecell{Corollary 1.1\\\cite{wills2023general}}\\\hline\hline
Soundness & $\searrow$ & $\nearrow$& $\searrow$&$\searrow$\\
Distance & $\searrow$& $\longrightarrow$& $\mapsto \Theta(N)$&$\longrightarrow\Theta(\sqrt{N})$\\
Dimension & $\searrow$ &$\longrightarrow$ & $\longrightarrow$&$\nearrow$\\
Locality & $\mapsto \Theta(1)$&$\nearrow$ & $\nearrow$&$\longrightarrow$
\end{tabular}\caption{This table is intended to give only the essential 
 details of the available tradeoff constructions. Arrows $\nearrow$, $\longrightarrow$ and $\searrow$ indicate a rise, preservation, or falling of the \textit{scaling} of a given parameter in terms of the number of physical qubits (not the actual value) for a typical qLTC, under the assumptions of the given result. $\mapsto X$ indicates that the scaling becomes $X$ immediately, whereas $\longrightarrow X$ indicates that the scaling tends towards $X$, irrespective of its initial value.}\label{constructionOverview}
\end{table}
\renewcommand{\arraystretch}{1}

While we believe that the main use of these results will be to as-yet undiscovered codes, there are several new parameter regimes in which we may immediately construct codes using these results. One use case for the weight reduction procedure are the geometric constructions. Technically, applying the tradeoff constructions to these two codes separately gives new parameters in each case, because the hypersphere product code has a better locality by a double logarithmic factor, whereas the hemicubic code has a better soundness by a logarithmic factor, but this is unnecessary detail, and we apply the construction only to the hemicubic code for simplicity. Both weight reduction and soundness amplification may (separately) be applied, and further double distance balancing may be used in each case to obtain many new parameters. The results of this are shown in Table \ref{newGeometricParams}.

\renewcommand{\arraystretch}{1.5}
\begin{table}[h]

\centering
\begin{tabular}{c||c|c|c|c}
& Hemicubic Code \cite{leverrier2022towards}& \makecell{WR and DDB\\Applied to\\Hemicubic Code}&\makecell{SA and DDB\\Applied to\\Hemicubic Code}&\makecell{DDB and SA\\Applied to\\Hemicubic Code}\\\hline\hline
\makecell{Physical\\Qubits} & $N$&$\mathcal{O}(N\text{polylog}(N)t^2)$&$\Theta(Nt^2)$&$\Theta\left(Nt^2\right)$\\
Soundness & $\Omega\left(\frac{1}{\log(N)}\right)$&$\Omega\left(\frac{1}{\text{polylog}(N)t^2}\right)$ & $\Omega\left(\frac{1}{t^2}\right)$&$\Theta(1)$\\
Distance & $\Theta\left(\sqrt{N}\right)$&$\Omega\left(\frac{\sqrt{N}t}{\text{polylog}(N)}\right)$&$\Theta\left(\sqrt{N}t\right)$&$\Theta\left(\sqrt{N}t\right)$\\
Dimension & $\Theta(1)$&$\Theta(t^2)$&$\Theta\left(t^2\right)$&$\Theta\left(t^2\right)$\\
Locality & $\OO(\log(N))$ & $\Theta(1)$ & $\mathcal{O}\left(\log(N)^{2+\delta}\right)$& $\mathcal{O}\left(\log(N)^{2+\delta}t^{2(1+\delta)}\right)$
\end{tabular}\caption{The third, fourth and fifth columns show the new code parameters that can be obtained by applying our results in different combinations to the hemicubic code. We note the great many of them demonstrates the capacity of the tradeoff constructions to generate many new code parameters from one input code. In the third column, weight reduction (WR) and double distance balancing (DDB) is applied (the order is irrelevant), where the classical codes in the latter construction are of length $t$. $t$ may be set to any value, for example a constant, or some function of $N$. The latter two columns show the result when soundness amplification (SA) and then DDB are applied, in this order and the reverse order, respectively. In the localities, $\delta > 0$ can be chosen to be any constant. The hemicubic code is omitted but similar, albeit previously unknown parameters can also be obtained.}\label{newGeometricParams}
\end{table}
\renewcommand{\arraystretch}{1}

As a second immediate application, the soundness amplification procedure may be applied to asymptotically good (LDPC) quantum codes. Since this operation preserves distance and dimension, we obtain a code of constant soundness, linear distance and linear dimension - producing the first known asymptotically good testable quantum code\footnote{It would interesting to determine if these parameters are also achievable via a random construction.} (rather than locally testable), to use the terminology of \cite{chapman2023efficiently}, for example.

Thirdly, each one of our three results finds applications in an upcoming work \cite{dinur2024towards}. Here, qLTCs are discovered with every parameter within a polylogarithm of optimality, but whose existence is contingent upon a conjecture in classical coding. This code and our applications to it are summarised in Table \ref{upcomingNewParams}.

\renewcommand{\arraystretch}{1.5}
\begin{table}[h]

\centering
\begin{tabular}{c||c|c|c}
& \makecell{Conjectured\\Codes of \cite{dinur2024towards}}& \makecell{Weight Reduction\\Applied to \cite{dinur2024towards}}&\makecell{Distance Amplification and\\Soundness Amplification\\Applied to \cite{dinur2024towards}}\\\hline\hline
Soundness & $\Omega\left(\frac{1}{\text{polylog}(N)}\right)$&$\Omega\left(\frac{1}{\text{polylog}(N)}\right)$ & $\Theta(1)$\\
Distance & $\Omega\left(\frac{N}{\text{polylog}(N)}\right)$&$\Omega\left(\frac{N}{\text{polylog}(N)}\right)$&$\Theta(N)$\\
Dimension & $\Theta(N)$&$\Omega\left(\frac{N}{\text{polylog}(N)}\right)$&$\Theta(N)$\\
Locality & $\OO(\text{polylog}(N))$ & $\Theta(1)$ & $\OO(\text{polylog}(N))$
\end{tabular}\caption{The parameters of the conjectured codes of \cite{dinur2024towards} followed by their parameters with weight reduction applied, and then distance amplification and soundness amplification (in this order). We note that because soundness amplification preserves distance and dimension, these latter two constructions can be applied together.}\label{upcomingNewParams}
\end{table}
\renewcommand{\arraystretch}{1}

\subsection{Structure of the Paper}

The paper is structured as follows. Section \ref{prelimsSection} presents the necessary preliminaries. In particular, Sections \ref{CSSPreliminaries} and \ref{locTestPrelims} give the broad background on quantum CSS codes and local testability, while Section \ref{homProduct} defines the homological product, which is necessary to understand the constructions of Sections \ref{wtRedSection}. Section \ref{wtRedPrelims} gives the necessary preliminary information on Hastings' weight reduction constructions. This is necessary to understand Section \ref{wtRedSection}, which gives the analysis of a qLTC under this weight reduction procedure, along with the necessary minor modifications to this procedure for our case. Section \ref{soundnessAmplificationSection} then describes our soundness amplification procedure and proves the claimed change in parameters of a qLTC under this operation. The AEL distance amplification construction is defined and applied to qLTCs in Section \ref{DASection}. Lastly, Appendix \ref{WRFullParams} gives tables relevant to Section \ref{wtRedSection} to present the full, detailed, changes in parameters.

\subsection{Further Directions and Discussion}

The search for qLTC constructions of new parameters that may, or may not, be inputtable to the tradeoff constructions of the present work to produce codes of previously unknown parameters is of great interest. In particular, an important next step is to obtain a qLTC of constant soundness, constant locality and scaling distance, where currently the only know code of constant soundness and constant locality also has constant distance \cite{cross2022quantum}. 

On the tradeoff constructions themselves, it would be interesting to see if the weight reduction procedure applied here to qLTCs could be used for a weight reduction of more general chain complexes while obtaining a lower bound on properties analogous to soundness. In addition, the presented weight reduction method for qLTCs turns a code of soundness $\rho$ and locality $w$ into a code of soundness $\Omega(\rho/\text{poly}(w))$, where the polynomial is of quite large degree in the worst case. On the other hand, the soundness amplification procedure turns the same code into one of constant soundness and weight $\mathcal{O}(w/\rho^{1+\delta})$ for arbitrary $\delta > 0$. It would be interesting in both cases to attempt to improve the analysis and/or the constructions to achieve soundness $\Omega(\rho/w)$ in the first case and weight $\mathcal{O}(w/\rho)$ in the second case which seem to be the natural scalings.

We wonder also about the applications of these constructions to Hamiltonian complexity theory. For example, operations along similar lines to the weight reduction and soundness amplifications found here may one day be useful ingredients in a gap amplification procedure for a qPCP theorem, in analogy to Dinur's combinatorial proof of the classical PCP theorem \cite{dinur2007pcp}, although this is highly speculative at the moment. One point to mention here is the explicitness of the constructions. While the weight reduction method is explicit, the soundness and distance amplification results rely on probabilistic methods, although we expect that it may be possible to make them explicit with further work.

We note that a previous version of the paper included the `identity product' construction - a homological product construction that is able to grow the dimension of a qLTC at the expense of its distance while preserving its soundness and locality. This was included despite the achievability of the same code transformation via the simple operation of copying the code because we hoped it would lead to further tradeoff constructions interpolating between the parameter transformations achievable by double distance balancing and the identity product. While we found such constructions, their parameter transformations turn out to be achievable via combinations of copying the code and double distance balancing. The identity product has therefore been removed and replaced with soundness amplification in this version.

Nevertheless, we comment that via code copying, more parameters are achievable. For example, applying this to the codes discovered in Table \ref{newGeometricParams} can achieve polynomial dimension and distance, along with either inverse polylogarithmic soundness and constant locality, or constant soundness and polylogarithmic locality.

\section{Preliminaries}\label{prelimsSection}
\subsection{Quantum CSS Codes}\label{CSSPreliminaries}

We review the relevant information on the homological formulation of quantum CSS codes but recommend \cite{breuckmann2021quantum}, along with its references, for further detail.

A quantum CSS code on qubits is equivalent to a three-term chain complex of vector spaces over the binary field, which may be written generally as

\begin{equation}
   C = \left(\mathbb{F}_2^{N_Z}\overset{H_Z^T}{\longrightarrow}\mathbb{F}_2^{N}\overset{H_X}{\longrightarrow}\mathbb{F}_2^{N_X}\right).\label{standardCC}
\end{equation}
where $N_Z$, $N$ and $N_X$ are respectively the number of Z - stabilisers, qubits, and X - stabilisers of the code. The idea behind this equivalence is to define parity-check matrices $H_Z \in \mathbb{F}_2^{N_Z \times N}$ and $H_X \in \mathbb{F}_2^{N_X \times N}$ by the stabilisers of this code; a row of $H_Z$, for example, denotes the support of one Z - stabiliser. Then the condition $H_X\cdot H_Z^T = 0$ is necessary and sufficient for commutativity of the stabilisers. Throughout, when describing a quantum CSS code by a three-term chain complex, the bottom `zeroth' space will always be that of the X - stabilisers, the `first' space will be that of the qubits and the top `second' space will be that of the Z - stabilisers. Given a code or chain complex such as $C$, the `code associated with the X - stabilisers of $C$' is the classical code $\ker(H_X)$ and similarly for the Z - stabilisers. Given that a quantum CSS code is nothing more than two classical codes satisfying a compatibility condition, it will sometimes be natural to conflate the terms `qubits' and `bits' in certain proofs, where appropriate, because often our proofs are only proofs about classical codes associated with various sets of stabilisers, and the bits of these classical codes are representative of qubits in the quantum code.

It is possible to `take the dual' of the chain complex $C$ to give the new chain complex

\begin{equation}
   C^* = \left(\mathbb{F}_2^{N_X}\overset{H_X^T}{\longrightarrow}\mathbb{F}_2^{N}\overset{H_Z}{\longrightarrow}\mathbb{F}_2^{N_Z}\right).\label{standardCCDual}
\end{equation}
which has the effect of swapping the role of X and Z. Taking the dual of more general chain complexes can be done by reversing the order of the spaces and taking the dual of all the linear maps. 

Throughout, we will endeavour to avoid using excessive homological language - nevertheless we define the following terms. We recall that the term `chain' is standard terminology for an element of a space in a chain complex, whereas a `cell' refers to an element of some special basis of one of the spaces, which is always the standard basis of a binary vector space for us. Therefore, for example, a 1-cell in the chain complex of Equation \eqref{standardCC} refers to an individual qubit, whereas a 0-chain in this chain complex refers to a collection of X - stabilisers. It will be important in certain places to ensure a distinction between a `stabiliser', which is a generator of the stabiliser code represented by individual 0-cells or 2-cells, and a `member of the stabiliser group', which is a product of generators of the stabiliser code represented by 0-chains or 2-chains. We recall also that, for a chain complex $X$ with spaces $X_i$ and boundary maps $\partial_i : X_i \rightarrow X_{i-1}$, we refer to i-cycles as the i-chains that are in $\ker \partial_i$ and i-cocyles as the i-chains that are in $\ker \partial_{i+1}^T$. i-boundaries and i-coboundaries, also called trivial i-cycles and i-cocyles, are respectively the i-chains that are elements of $\im\partial_{i+1}$ and $\im\partial_i^T$. For the chain complex $C$, we note that 1-cycles represent Z - operators that commute with all the X - stabilisers, while 1-boundaries represent members of the Z - stabiliser group. Similarly, 1-cocyles represent X - operators that commute with all the Z - stabilisers, while 1-coboundaries represent members of the X - stabiliser group.

It will be common for the spaces of chain complexes such as $C$ to be direct sums of smaller spaces, for example $\mathbb{F}_2^N = \mathbb{F}_2^A \oplus \mathbb{F}_2^B$ or $\mathbb{F}_2^{N_X} = \mathbb{F}_2^C\oplus\mathbb{F}_2^D$. In this case, when we refer to ``qubits in $\mathbb{F}_2^A$'', we are referring to the qubits represented by basis elements of $\mathbb{F}_2^A$, and similarly ``X - stabilisers in $\mathbb{F}_2^C$'' is shorthand for the X - stabilisers represented by basis elements of $\mathbb{F}_2^C$, etc.

We now define the parameters of our codes that are important to us. The X - and Z - distances of the code are the minimum weights of logical X - type and Z - type errors. These are

\begin{equation}
    d_X = \min\left\{|v| : v \in \ker(H_Z) \setminus \text{im}(H_X^T)\right\}
\end{equation}
and
\begin{equation}
    d_Z = \min\left\{|v| : v \in \ker(H_X) \setminus \text{im}(H_Z^T)\right\}
\end{equation}
which may be described homologically as the minimum weights of non-trivial elements of first cohomology and first homology respectively. The distance of the code is then the minimum of these two: $d = \min\{d_X,d_Z\}$, which is the minimum weight of a logical error of our code. Another important parameter of the code is its dimension, the number of logical qubits which it may encode, which is

\begin{equation}
    K = \dim\left(\ker(H_X)/\im(H_Z^T)\right)
\end{equation}
i.e. the dimension of the first homology group. Sometimes, the literature refers to rate and relative distance, which are respectively dimension divided by code length and distance divided by code length. 

Important quantities in this work will be the `weights' of the quantum code, also sometimes referred to as the localities. Following the notation of \cite{hastings2016weight} and \cite{hastings2021quantum}, we define

\begin{align}
    \wx &= \text{max row weight}(H_X)\\
    \wz &= \text{max row weight}(H_Z)\\
    \qx &= \text{max column weight}(H_X)\\
    \qz &= \text{max column weight}(H_Z).
\end{align}
\wx is therefore the maximum weight of an X - stabiliser, and similarly for \wz, whereas \qx is the maximum number of X - stabilisers acting on a given qubit, and similarly for \qz. The overall `locality' of the code is defined as

\begin{equation}
    w = \max\{\wx,\wz,\qx,\qz\}.
\end{equation}
In places, we will describe a code as `high weight' or `low weight', which simply means that it has non-constant locality or constant locality respectively. Similarly, to describe a given stabiliser as `high weight' means that its support is super-constant in size.

All of the quantities $N$, $N_X$, $N_Z$, $d_X$, and so on, will be standard for all quantum codes. We will often refer to the parameters of various codes with different decorations, for example decorating all parameters of a particular code with a prime or a tilde. Typically, the term `code' really refers to an infinite family of codes whose length (number of physical qubits) tends to infinity. We measure the size of the parameters that are important to us in terms of their asymptotic scaling in terms of the code length, $N$. The optimal scaling of dimension and distance is $\Theta(N)$, i.e. linear, whereas the optimal scaling of locality is constant, $\Theta(1)$, which is to say that it is not ``scaling''. A code with these three parameters ``scaling'' simultaneously optimally, is referred to as a good LDPC code.

\subsection{Classical and Quantum Local Testability}\label{locTestPrelims}

We will now review the relevant definitions in the area of local testability for both classical and quantum codes.

There are multiple notions of local testability on the classical side, with relations between them, but one in particular has become adopted in the recent quantum literature \cite{leverrier2022towards, panteleev2022asymptotically, cross2022quantum}. Given a classical code described by some parity-check matrix $H \in \mathbb{F}_2^{s \times t}$, the code is said to be locally testable with soundness $\rho$ if for every $x \in \mathbb{F}_2^t$,

\begin{equation}
    \frac{|Hx|}{s} \geq \rho \frac{d(x,\ker(H))}{t}\label{cLTCDef}
\end{equation}
where $d(x,\ker(H)) = \min\{|z| \text{ s.t. } x+z \in \ker(H)\}$ is the distance of the word $x$ from the code. The idea behind this is that one may check, up to a success probability $\frac{d(x,\ker(H))}{t}$, whether a given word $x$ is in the codespace or not, by checking a subset of the checks of $H$ chosen at random. If we see even one that is violated, we may conclude that $x \notin \ker(H)$, whereas if we see them all satisfied, we will return $x \in \ker(H)$. Since $\frac{|Hx|}{s}$ is the probability that a randomly chosen check is violated, in order to be correct with the desired success probability, we check $\OO\left(\frac{1}{\rho}\right)$ checks. We say also that the given classical code has locality $w$ if all of the rows and columns of $H$ have weight at most $w$\footnote{It is common in the case of classical codes to only define locality in terms of maximum row weight. However, it makes sense to define it here in terms of column weights too, particularly for when we take products. For example, with this definition, we can say that the locality of a quantum code defined via a homological product between a quantum code and a classical code scales with the worse of the two localities of the inputted codes; see Section \ref{homProduct}.}.

Quantum locally testable codes were first defined in \cite{aharonov2015quantum} for general quantum codes, although the definition specialises to the case of stabiliser codes as follows \cite{eldar2017local}. Given the generators of a stabiliser group $g_1, ..., g_m$, one may define a stabiliser code. With the projectors $\Pi_i = (I-g_i)/2$ being onto the $+1$-eigenspaces of each $g_i$, the codespace is exactly the 0-eigenspace of $\sum_{i=1}^m\Pi_i$. Calling this codespace $C$, the t-fattening of $C$ is the vector space defined by

\begin{equation}
    C_t = \text{span}\{(A_1\otimes...\otimes A_n)\ket{\psi} \text{ s.t. } \ket{\psi} \in C, |\{i \in \{1, ..., n\}: A_i \neq I\}| \leq t\}.
\end{equation}
We then let $\Pi_{C_t}$ be the projector onto the vector space $C_t$, from which we define the distance observable from $C$ as
\begin{equation}
    D_C = \sum_{t \geq 1}t\left(\Pi_{C_t}-\Pi_{C_{t-1}}\right).
\end{equation}
This allows us to say that the stabiliser code is locally testable with soundness $\rho$ if the operator inequality
\begin{equation}
    \frac{1}{m}\sum_{i=1}^m\Pi_i \succeq \frac{\rho}{n}D_C
\end{equation}
holds true. When dealing with CSS codes, which we do exclusively, the following lemma allows us to only worry about the above classical definition of local testability, rather than the general definition for stabiliser codes, or any other classical definition.

\begin{lemma}[Fact 17 of \cite{eldar2017local}]
    Let $C$ be a quantum CSS code defined by the parity-check matrices $H_X$ and $H_Z$. If $C$ is locally testable with soundness $\rho$, then the classical codes defined by the parity-check matrices $H_X$ and $H_Z$ are locally testable with soundness at least $\rho/2$. Conversely, if the classical codes defined by the parity-check matrices $H_X$ and $H_Z$ are locally testable with soundness $\rho$, then $C$ is locally testable with soundness $\rho$.
\end{lemma}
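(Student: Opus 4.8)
The plan is to replace the operator inequality defining quantum soundness by an equivalent, finite family of scalar inequalities indexed by syndromes, after which both implications in the lemma fall out almost immediately.

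The first and central step is to understand the fattenings $C_t$. Although the definition allows arbitrary single-qubit operators, any operator acting non-trivially on at most $t$ qubits expands as a linear combination of Pauli operators of weight at most $t$, so $C_t=\operatorname{span}\{P\ket{\phi}:\ket{\phi}\in C,\ \operatorname{wt}(P)\le t\}$. I would then decompose the Hilbert space into the joint eigenspaces $\mathcal{H}_{\mu,\nu}$ of the stabiliser generators, with $\mu\in\im(H_Z)$ the syndrome read by the Z-checks and $\nu\in\im(H_X)$ that read by the X-checks. For a Pauli $P=X^aZ^b$ and $\ket{\phi}\in C$ one has $P\ket{\phi}\in\mathcal{H}_{H_Za,\,H_Xb}$, and since $\dim\mathcal{H}_{\mu,\nu}=\dim C$ whenever the subspace is non-zero, this upgrades to the equality $X^aZ^b\,C=\mathcal{H}_{H_Za,\,H_Xb}$; in particular a single low-weight Pauli applied to the code fills out an entire syndrome subspace. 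Combining these observations gives the key identity
\[
C_t=\bigoplus_{\tau(\mu,\nu)\le t}\mathcal{H}_{\mu,\nu},\qquad \tau(\mu,\nu):=\min\bigl\{\,|\operatorname{supp}(a)\cup\operatorname{supp}(b)| : H_Za=\mu,\ H_Xb=\nu\,\bigr\},
\]
so that $D_C=\sum_{\mu,\nu}\tau(\mu,\nu)\,\Pi_{\mathcal{H}_{\mu,\nu}}$ acts as the \emph{scalar} $\tau(\mu,\nu)$ on each block. Since $\sum_i\Pi_i$ acts as the scalar $|\mu|+|\nu|$ on $\mathcal{H}_{\mu,\nu}$ (the number of violated checks there), the quantum soundness condition $\frac1m\sum_i\Pi_i\succeq\frac{\rho}{n}D_C$ is \emph{equivalent} to the statement that $\frac{|\mu|+|\nu|}{m}\ge\frac{\rho}{n}\tau(\mu,\nu)$ for every admissible pair $(\mu,\nu)$.

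With this reformulation both directions are short. For the forward implication, restrict to the blocks with $\nu=0$: there $\tau(\mu,0)$ is exactly the least weight of a vector with $H_Z$-syndrome $\mu$, so letting $\mu$ range over $\{H_Zx:x\in\mathbb{F}_2^{n}\}$ the inequalities become $\frac{|H_Zx|}{m}\ge\frac{\rho}{n}d(x,\ker H_Z)$, and since $m\ge N_Z$ this is precisely local testability of $H_Z$ with soundness at least $\rho$; the blocks with $\mu=0$ give the same for $H_X$. (This is a touch stronger than the stated $\rho/2$, the slack being harmless.) For the converse, bound $\tau(\mu,\nu)\le\tau(\mu,0)+\tau(0,\nu)$ by taking the union of the two separately optimal supports, insert the classical soundness of $H_Z$ and of $H_X$ to get $\frac{\rho}{n}\tau(\mu,\nu)\le\frac{|\mu|}{N_Z}+\frac{|\nu|}{N_X}$, and compare with $\frac{|\mu|+|\nu|}{m}$; this recovers the scalar inequality, hence quantum soundness, up to a universal constant (it is exactly $\rho/2$ when $N_X=N_Z$, and in general is off only by the factor $\min(N_X,N_Z)/m$, which is bounded below for all codes of interest and disappears entirely if one uses the type-by-type normalised form of the quantum definition).

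I expect the main obstacle to be the structural step, and within it the claim that a weight-$\le t$ Pauli applied to $C$ spans a \emph{full} syndrome subspace $\mathcal{H}_{\mu,\nu}$ — this is what collapses $D_C$ to a scalar on each block and thus turns the operator inequality into the elementary scalar family above. It rests on the CSS commutation relations and on the equality of dimensions of the syndrome subspaces; the reduction from general to Pauli fattenings, and all the subsequent steps, are routine, the only remaining care being the bookkeeping between the $1/m$ normalisation in the quantum definition and the $1/N_X$, $1/N_Z$ normalisations on the classical side.
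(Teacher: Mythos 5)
The paper does not prove this lemma; it is imported verbatim as Fact 17 of the cited Eldar--Harrow paper, so there is no in-paper proof to compare against. Judged on its own, your argument is the right one and its structural core is sound: reducing the fattenings to Pauli fattenings, the identity $X^aZ^b\,C=\mathcal{H}_{H_Za,H_Xb}$ by the dimension count $\dim\mathcal{H}_{\mu,\nu}=\dim C$ for admissible syndromes, the resulting block form $C_t=\bigoplus_{\tau(\mu,\nu)\le t}\mathcal{H}_{\mu,\nu}$, and the simultaneous diagonalisation of $D_C$ and $\frac1m\sum_i\Pi_i$ all check out, and they correctly convert the operator inequality into the scalar family $\frac{|\mu|+|\nu|}{m}\ge\frac{\rho}{n}\tau(\mu,\nu)$. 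The forward implication then follows cleanly from the $\nu=0$ and $\mu=0$ blocks, with constant $\rho$ rather than $\rho/2$, which is stronger than what is claimed and therefore fine.

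The one genuine gap is in the converse, and you have half-noticed it yourself. Your argument yields quantum soundness $\rho\cdot\min(N_X,N_Z)/(N_X+N_Z)$, whereas the lemma as stated claims soundness $\rho$ with no loss. This is not a removable slack in your estimate: with the normalisations used in this paper (classical soundness of $\ker H_Z$ measured against $N_Z$ rows, quantum soundness against all $m=N_X+N_Z$ generators), a pure $X$-error saturating the classical bound for $H_Z$ forces the quantum soundness down by the factor $N_Z/m$, so the lossless converse is simply not derivable — and indeed not true — when $N_X$ and $N_Z$ are very unbalanced. The clean constants in the quoted fact depend on a normalisation convention (effectively measuring both classical codes against $m$, or assuming $N_X=\Theta(N_Z)$) that differs from the one this paper sets up in Equation \eqref{cLTCDef}. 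You should state explicitly which convention you are proving the converse under, rather than waving at ``codes of interest''; once that is pinned down, your proof is complete.
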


Throughout, we use $\rho_X$ to denote the soundness of the X - operators of a given CSS code, and $\rho_Z$ to denote the soundness of the Z - operators of a given CSS code. Therefore, $\rho_X$ is the soundness of the code associated with the Z - stabilisers, $\ker(H_Z)$, whereas $\rho_Z$ is the soundness of the code associated with the X - stabilisers, $\ker(H_X)$\footnote{This is the opposite convention to that used in our previous work, \cite{wills2023general}, however we feel it is the best one because it brings the soundnesses into line with the distances - $d_X$ is the minimum weight of a logical X - type error, etc.}. As with distance, dimension and locality, we measure the size of the soundness in terms of its asymptotic scaling in terms of the code length. Ideally, the soundness is $\Theta(1)$, and therefore does not ``scale''. A code with simultaneously optimally ``scaling'' distance, dimension, locality and soundness, is called a $c^3$ - LTC.

Many of our proofs will therefore be showing the classical soundness of $\ker(H)$ for some parity-check matrix $H \in \mathbb{F}_2^{s \times t}$. In order to show that this code has soundness $\rho$, we will usually simply consider a general bit string $u \in \mathbb{F}_2^t$ and show that $\frac{|Hu|}{s} \geq \rho\frac{d(u,\ker(H))}{t}$. However, it will be common for us, at the start of such a proof, to add some other bit string $v$ to $u$ and claim that we can do so without loss generality. We can do this because these bit strings will always be in the space $\ker(H)$ for whatever $H$ we are talking about; for such a bit string $v$, $d(u+v, \ker(H)) = d(u,\ker(H))$ and $H(u+v) = Hu$. In terms of Pauli operators, if we are considering $u$ to represent some Z - operator, we may add without loss of generality bit strings $v$ corresponding to Z - operators that commute with all X - stabilisers. This means both Z - logical operators and Z - stabilisers. Note also that if $u$ represents some Z - operator, then $|H_Xu|$ is the number of violated X - stabilisers. Many of our proofs will be in terms of a number of violated stabilisers.

We add a final note on our definition of classical local testability, Equation \eqref{cLTCDef}. This definition includes the `dimensions' of the parity-check matrix $s$ and $t$. This is now standard in the literature, but in other places they may be simply omitted, for example \cite{hastings2016weight}, where the definition is replaced with $|Hx| \geq \rho d(x, \ker(H))$. This makes some of our results look slightly different to other works, like \cite{hastings2016weight}, that do not include them. We emphasise that the difference is usually negligible, because typically $s$ will scale at the same rate, or nearly the same rate, as $t$. For us, it means that many of our soundness results look more complicated than they otherwise would, including many dimensions such as $N$, $N_X$ and $N_Z$. This is mostly more of an annoyance than anything, and reasonable assumptions on the codes in question, for example $N = \Theta(N_X) = \Theta(N_Z)$, which is common, may lead to such terms being forgettable. The most important features of our various soundness bounds are almost always not the dimensions.

\subsection{The Homological Product}\label{homProduct}

The homological product is a tool that has been used, along with its generalisations, for over a decade now to construct quantum CSS codes. It first arose implicitly in the work of Tillich and Z\'emor in 2009 \cite{tillich2013quantum} and has since been generalised to more elaborate product constructions~\cite{panteleev2021quantum,breuckmann2021balanced} that yield better parameters. The basic homological product takes in two chain complexes, $X$ and $Y$, and returns another chain complex $X \times Y$. Supposing $X$ and $Y$ are both chain complexes over $\mathbb{F}_2$, the resulting chain complex will be over the same field. We will also assume that $X_p = 0$ for $p < N$ and $Y_q = 0$ for $q < N$, for some $N \in \mathbb{Z}$, where $A_p$ denotes the $p$-th space of the chain complex $A$. The spaces of $X \times Y$ are then defined via

\begin{equation}
    (X \times Y)_p = \bigoplus_{i=0}^p X_i \otimes Y_{p-i}.
\end{equation}
Denoting the boundary maps of the two input chains as $\partial_p^X : X_p \rightarrow X_{p-1}$ and $\partial_p^Y : Y_p \rightarrow Y_{p-1}$, the boundary map $\partial_p^{X \times Y}$ of $X \times Y$ acts on product elements $u \otimes v \in X_i \otimes Y_{p-i}$ via

\begin{equation}
    \partial_p^{X \times Y}(u \otimes v) = \left(\partial_i^X(u) \otimes v\right) \oplus \left(u \otimes \partial_{p-i}^Y(v)\right).
\end{equation}
Its action on the whole space may then be determined by enforcing linearity. It may be checked that this definition indeed yields a valid chain complex $X \times Y$.

The only homological product that we will take in this work is that of a 3-term chain (a quantum CSS code) with a 2-term chain (a classical code) in order to achieve the `distance balancing' construction. This will arise as a step in Hastings' weight reduction methods in Section \ref{thickenAndChooseHeightsPrelimSection}\footnote{Note that while the construction employed is distance balancing, the intention behind using it is not to `balance distances'.}. The idea behind the distance balancing construction is to address the case in which a code has, for example, a very poor X - distance and a better Z - distance, or vice versa. Because the overall distance is defined as the minimum of these two, improving the X - distance at the expense of the Z - distance yields an improved overall distance.

In order to distance balance a quantum code given by a chain complex $C = \left(\mathbb{F}_2^{N_Z}\overset{H_Z^T}{\longrightarrow}\mathbb{F}_2^{N}\overset{H_X}{\longrightarrow}\mathbb{F}_2^{N_X}\right)$, we use an auxiliary classical code with a parity-check matrix $H \in \mathbb{F}_2^{s \times t}$, which may be thought of as a 2-term chain complex $R = \left(\mathbb{F}_2^t \overset{H}{\longrightarrow} \mathbb{F}_2^{s}\right)$. The first step in the distance balancing procedure is to take the dual of the latter chain complex to yield $R^* = \left(\mathbb{F}_2^s \overset{H^T}{\longrightarrow} \mathbb{F}_2^{t}\right)$. The second step is then to take the homological product of $C$ with $R^*$ to yield a 4-term chain complex $C \times R^*$. Finally, we define our distance balanced version of $C$, which we may call $C'$, via the chain complex formed from the last three terms of $C \times R^*$: $\left(C \times R^*\right)_2$, $\left(C \times R^*\right)_1$ and $\left(C \times R^*\right)_0$. This procedure was first considered by Hastings in 2016 with the ingredient classical code $R$ being only the repetition code \cite{hastings2016weight}, and then generalised to the case of all classical codes by Evra, Kaufman and Z\'emor in 2020 \cite{evra2022decodable}. The scaling of the non-soundness parameters is given in the following lemma. The requirement that the parity-check matrix of the classical code has independent checks (meaning its rows are linearly independent) is technically important, although generally non-restrictive.

\begin{lemma}[{\cite[Theorem 4.2]{evra2022decodable}}]In the above distance balancing construction, as long as the classical code has independent checks,
    \begin{align}
        \dim(C') &= \dim(C)\dim(R)\\
        d_X(C') &= d_X(C)d(R)\\
        d_Z(C') &= d_Z(C).
    \end{align}
Furthermore, under the reasonable assumption that $N_X = \OO(N)$, the number of physical qubits of $C'$ is $\Theta(Nt)$. Finally, the locality of the resulting quantum code scales with the worse of the localities of the inputted quantum code $C$ and the classical code $R$, where here by the locality of $R$ we mean the maximum row or column weight of $H$.
\label{DBresults}\end{lemma}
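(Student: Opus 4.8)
The plan is to view the homological product $C \times R^*$ as the tensor product of chain complexes over $\mathbb{F}_2$, so that Künneth handles the dimension, a direct count handles the qubit number and a block-matrix inspection handles the locality, and then to treat the two distances separately. Index $C$ in degrees $2,1,0$ (boundary maps $H_Z^T,H_X$) and $R^*$ in degrees $1,0$ (boundary map $H^T$); the product has $(C\times R^*)_0 = C_0\otimes(R^*)_0$, $(C\times R^*)_1 = (C_1\otimes(R^*)_0)\oplus(C_0\otimes(R^*)_1)$, $(C\times R^*)_2 = (C_2\otimes(R^*)_0)\oplus(C_1\otimes(R^*)_1)$ and a top term $(C\times R^*)_3 = C_2\otimes(R^*)_1$ which is discarded to form $C'$. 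As $\partial_1,\partial_2$ survive this truncation unchanged, $H_1(C') = H_1(C\times R^*)$ and $H^1(C') = H^1(C\times R^*)$. The independent-checks hypothesis makes $H$ full row rank, so $H_1(R^*) = \ker H^T = 0$ and $\dim H_0(R^*) = t-s = \dim R$; Künneth over $\mathbb{F}_2$ then gives $H_1(C') = H_1(C)\otimes H_0(R^*)$, whence $\dim(C') = \dim(C)\dim(R)$. The qubit number is $\dim((C\times R^*)_1) = Nt + N_X s$, which lies between $Nt$ and $\OO(Nt)$ once $s\le t$ and $N_X = \OO(N)$, i.e.\ it is $\Theta(Nt)$. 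For locality, $\partial_1 = [\,H_X\otimes I_t \mid I_{N_X}\otimes H^T\,]$ and $\partial_2$ has an analogous two-blocks-per-summand shape; each block is $M\otimes I$ with $M$ a block of the boundary maps of $C$ or of $H$, and $M\otimes I$ has the same row and column weights as $M$, so every weight of $C'$ is bounded by $\operatorname{loc}(C)+\operatorname{loc}(R)$, i.e.\ it scales with the worse of the two.

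The distance upper bounds come from product (co)cycles whose nontriviality is certified by Künneth. For $d_Z$, take a minimum-weight nontrivial $1$-cycle $z$ of $C$ and a coordinate vector $e_j\in\mathbb{F}_2^t$ outside $\operatorname{rowspace}(H)$ --- such $j$ exists because $\operatorname{rowspace}(H)$ is a proper subspace (as $s<t$ whenever $\dim R\ge 1$) and no proper subspace contains all coordinate vectors --- so that $z\otimes e_j$ is a $1$-cycle of weight $d_Z(C)$ with $[z]\otimes[e_j]\neq 0$, giving $d_Z(C')\le d_Z(C)$. For $d_X$, with $v$ a minimum-weight nontrivial $1$-cocycle of $C$ and $w\in\ker H = H^0(R^*)$ of weight $d(R)$, the vector $v\otimes w$ is a $1$-cocycle of weight $d_X(C)d(R)$ with nonzero class, giving $d_X(C')\le d_X(C)d(R)$.

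The matching lower bounds are the substance of the lemma. For $d_Z(C')\ge d_Z(C)$: for each $w\in\ker H$ the degreewise map $\pi_w\colon C\times R^*\to C$ sending $c\otimes e_j\mapsto w_j c$ on the $\cdot\otimes(R^*)_0$ part and $c\otimes f_k\mapsto 0$ on the $\cdot\otimes(R^*)_1$ part is a chain map --- the one nontrivial verification, that the terms coming from $H^T$ cancel, is exactly $Hw=0$ --- and it never increases weight; on $H_1(C') = H_1(C)\otimes H_0(R^*)$ it acts as $\mathrm{id}\otimes\langle w,-\rangle$, so given any nontrivial cycle $\zeta$ one picks $w\in\ker H$ pairing nontrivially with a nonzero component of $[\zeta]$ along a basis of $H_1(C)$ and concludes $|\zeta|\ge|\pi_w(\zeta)|\ge d_Z(C)$. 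For $d_X(C')\ge d_X(C)d(R)$: write a nontrivial $1$-cocycle as $\alpha\oplus\beta$ with $\alpha=\sum_j\alpha_j\otimes e_j$; the cocycle equations force each $\alpha_j\in\ker H_Z$ and $\sum_j H_{kj}\alpha_j\in\im H_X^T$ for all $k$, i.e.\ the class vector $([\alpha_j])_j$ lies in $\ker H\otimes H^1(C) = R\otimes H^1(C)$, so the ``shadow'' $\sigma\in\mathbb{F}_2^t$ with $\sigma_j=1\iff[\alpha_j]\neq 0$ is a union of supports of codewords of $R$, and $|\alpha|\ge d_X(C)\,|\sigma|$. It remains to see $\sigma\neq 0$, and this --- the step I expect to be the real obstacle --- is where independent checks re-enters: if every $[\alpha_j]=0$ then $\alpha$ is itself a coboundary, so $\xi$ is cohomologous to a cocycle living purely in $C_0\otimes(R^*)_1$, and surjectivity of $H$ exhibits that remainder as a further coboundary, contradicting nontriviality; hence $|\sigma|\ge d(R)$ and $|\xi|\ge|\alpha|\ge d_X(C)d(R)$. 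Combining with the upper bounds yields the stated equalities. The whole distance analysis is a repackaging of \cite[Theorem~4.2]{evra2022decodable}, which one may instead simply cite.
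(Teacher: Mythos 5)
Your proof is correct. Note, though, that the paper itself offers no proof of this lemma: it is imported verbatim as \cite[Theorem 4.2]{evra2022decodable}, a fact you acknowledge in your closing sentence. What you have written is a self-contained reconstruction of essentially the Evra--Kaufman--Z\'emor argument: K\"unneth over $\mathbb{F}_2$ (with $H_1(R^*)=\ker H^T=0$ and $\dim H_0(R^*)=t-s=\dim R$ from the independent-checks hypothesis) for the dimension; the count $Nt+N_Xs=\Theta(Nt)$ for the length; the block structure $\partial_1=[\,H_X\otimes I\mid I\otimes H^T\,]$ for the locality; product (co)cycles certified nontrivial by K\"unneth for the distance upper bounds; the weight-nonincreasing chain maps $\pi_w$ for $w\in\ker H$ (whose chain-map property is exactly $Hw=0$, and whose induced maps $\mathrm{id}\otimes\langle w,-\rangle$ separate points of $H_1(C)\otimes H_0(R^*)$ because the pairing of $\ker H$ with $\mathbb{F}_2^t/\im H^T$ is perfect) for $d_Z(C')\ge d_Z(C)$; and the cocycle analysis forcing the class vector $([\alpha_j])_j$ into $\ker H\otimes H^1(C)$ for $d_X(C')\ge d_X(C)d(R)$. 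You correctly identified the one delicate step, namely that the shadow $\sigma$ cannot vanish: if every $[\alpha_j]=0$ then subtracting $\partial_1^T\bigl(\sum_j\gamma_j\otimes e_j\bigr)$ pushes the cocycle entirely into $C_0\otimes(R^*)_1$ with components in $\ker H_X^T$, and surjectivity of $H$ (restricted to $\ker(H_X^T)\otimes\mathbb{F}_2^t$) then exhibits the remainder as a coboundary, contradicting nontriviality. All verifications check out, so your argument is a legitimate alternative to the bare citation; it buys self-containment at the cost of length, and for the purposes of this paper the citation suffices.
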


Notice that while the Z - distance appears to remain unchanged, it will become worse in real terms (in its asymptotic scaling) since the number of physical qubits rises. The above procedure may also be performed `dually' (meaning we take the dual of the quantum code before and after the procedure) to raise the Z - distance at the expense of the X - distance. As for the soundness, it was shown in our previous work \cite{wills2023general} that under reasonable assumptions on the quantum code, the soundness of $C'$ may be written in terms of that of $C$ as 

\begin{equation}
    \rho(C') = \Omega\left(\frac{\rho(C)}{t}\right)
\end{equation}
which gives a direct generalisation of the same result by Hastings \cite{hastings2016weight} for the case of $R$ being the t-bit repetition code.
\subsection{Weight Reduction}\label{wtRedPrelims}

In 2016, Hastings published the paper \cite{hastings2016weight} providing constructions that could weight reduce quantum codes i.e. reduce \qx, \qz, \wx and \wz to \OO(1). In that paper, this went via two constructions: the X-type generator splitting of Section IIA and the Z-type qubit splitting of Section IIC. The first can reduce \wx to \OO(1) and the second can reduce \qz to \OO(1). It is then claimed in the paper that both of these constructions preserve constant \wz and \qx\footnote{A construction is said to preserve the constancy of a parameter if the constancy of that parameter before the construction implies its constancy after the construction.}. This would lead to these two constructions alone being sufficient for a full weight reduction of all four parameters to constant, as follows:

\begin{enumerate}
    \item Use the X-type generator splitting to reduce \wx to \OO(1).
    \item Use the Z-type qubit splitting to reduce \qz to \OO(1). This construction does indeed preserve the constancy of \wx.
    \item Use the X-type generator splitting dually\footnote{To apply a construction dually means to take the dual of the CSS code chain complex before and after the procedure, thereby applying the construction with the roles of X and Z swapped.} to reduce \wz to \OO(1). This would then preserve the constancy of \wx and \qz.
    \item Use the Z-type qubit splitting dually to reduce \qx to \OO(1), preserving the constancy of all other parameters.
\end{enumerate}

Unfortunately, as observed by Z\'emor, Lemma 1 of that paper contains an error, meaning that, during (non-dual) X-type generator splitting, constant \wz is not necessarily preserved. These constructions alone are therefore insufficient for a full weight reduction of a quantum code of reducing all four parameters to constant, because in step 3, reducing \wz may cause \wx to grow large. Fortunately, Hastings updated the paper with the publication of \cite{hastings2021quantum}. In this paper, the X-type generator splitting and Z-type qubit splitting are both re-used, but supplemented with additional constructions. The X-type generator splitting is renamed to ``gauging'' and the Z-type qubit splitting is renamed to ``thickening and choosing heights''\footnote{This construction really comes in two steps: first thickening, and second choosing heights. These are two separate mathematical steps, but often the whole construction is referred to by Hastings as simply ``thickening''. It will be important for us to emphasise the existence of both of these steps later on and so we will refer to the whole construction as ``thickening and choosing heights''.}. The two added constructions are called ``copying'' and ``coning''. It is shown in \cite{hastings2021quantum} that, as well as reducing the weights, each of these constructions causes the number of physical qubits, dimension and distance to change by a factor that is at most a polynomial in the original weights and the logarithm of the original length. Then, these four together are sufficient for a full reduction of all four parameters to constant as follows:

\begin{enumerate}
    \item Use copying to reduce \qx to a constant.
    \item Use gauging to reduce \wx to a constant. This preserves constant \qx.
    \item Use thickening and choosing heights to reduce \qz to a constant. This preserves constant \qx and \wx.
    \item Use coning to reduce \wz to a constant. This preserves constant \qx, \wx and \qz\footnote{The reader may wonder whether all four of these constructions are, in fact, necessary. Could we not, for example, forget about copying, and instead use thickening and choosing heights dually to reduce \qx in the first step? This is not possible, because dual thickening and choosing heights cannot reduce \qx to a constant as long as there is a non-constant \wz. Overall, this order of four constructions is quite delicately constructed.}.
\end{enumerate}

We comment that Hastings presents copying and gauging together, in one combined construction, in \cite{hastings2021quantum}, because using these two constructions together leads to lower overhead. It will make sense for proving our results to extricate these two constructions, but note that potentially slightly better parameters in some cases could be obtained by considering them together. More generally, our primary goal is to demonstrate a full weight reduction of any qLTC, under reasonable restrictions, incurring a decrease in the soundness by a factor that is at most a polynomial in the original weights and the logarithm of the original length. Note, however, that in general this polynomial is likely to have very large exponents. Particular cases may get better parameters with closer analysis in cases where, for example, a subset of the four weight parameters are already constant, there are only a few stabilisers acting on a large number of qubits, or similar. Before showing the constructions, we finally comment that in presenting some of the Lemmas from \cite{hastings2016weight} and \cite{hastings2021quantum}, small changes have been made in the pursuit of greater detail, but no changes made are particularly consequential.

Section \ref{wtRedSection} contains the necessary analyses of the soundness of quantum codes under these constructions and there is a full analysis of putting all these constructions together in Section \ref{fullWeightRed} for a full weight reduction of a quantum locally testable code.

\subsubsection{Copying}\label{copyingPrelimSection}

Copying is a simple construction presented by Hastings to reduce \qx to a constant (indeed, it can reduce \qx to at most 3). Again, this is presented in \cite{hastings2021quantum} in conjunction with gauging, but we separate the two constructions here.

Given an initial code $C$ for which we wish to reduce \qx\footnote{We will say that a parameter is ``reduced'' as shorthand for saying that it is ``reduced to a constant''.} to produce another code $\tilde{C}$, copying goes as follows:

\begin{enumerate}
    \item Concatenate $C$ with a repetition code of length \qx in the X - basis i.e. a stabiliser code on \qx qubits with stabilisers $X_1X_2$, $X_2X_3$, ..., $X_{q_X-1}X_{q_X}$. Explicitly, \begin{enumerate} 
    \item For every qubit $q$ of $C$, $\tilde{C}$ will have qubits labelled by $(q,j)$ for $j \in \{1, ..., \qx\}$. These are the only qubits of $\tilde{C}$.
    \item For every Z - stabiliser in $C$, $Z_{q_1}...Z_{q_w}$, $\tilde{C}$ will have a Z - stabiliser $\tilde{Z}_{q_1}...\tilde{Z}_{q_w}$, where by $\tilde{Z}_{q_i}$ we mean $\Pi_{j=1}^{\qx} Z_{(q_i,j)}$. These are the only Z - stabilisers of $\tilde{C}$.
    \item For every X - stabiliser in $C$, $S = X_{q_1}...X_{q_w}$, $\tilde{C}$ will have an X - stabiliser $\tilde{S} = X_{(q_1,j(S,1))}...X_{(q_w,j(S,w))}$ for some $j(S,i) \in \{1, ..., \qx\}$ to be specified. These are called the copied X - stabilisers. Additionally, for each qubit $q$ in $C$, $\tilde{C}$ will also have stabilisers $X_{(q,1)}X_{(q,2)}, X_{(q,2)}X_{(q,3)}, ..., X_{(q,q_x - 1)}X_{(q,q_X)}$. These are called the new X - stabilisers. The copied and new X - stabilisers are the only X - stabilisers of $\tilde{C}$.
    \end{enumerate}
    \item The $j(S,i)$ are chosen such that each qubit is acted on by at most one copied X - stabiliser.
\end{enumerate}

We explain the idea here in a more worded way, through which it will become clear how the choice in step 2 is possible. The logical Z - operator of the repetition code in the X - basis is the product of Z on all qubits i.e. $\Pi_{j=1}^{\qx}Z_j$. For the logical X - operator, we have a choice of an individual X operator on any one of the qubits i.e. $X_j$ for any $j$. Therefore, when we concatenate our code $C$ with this repetition code, each Z - stabiliser in $\tilde{C}$ is formed by simply repeating every physical Z in any Z - stabiliser of $C$ \qx times. On the other hand, each copied X - stabiliser will have the same weight as the old X - stabiliser it came from - we just take care that for every X - stabiliser acting on a given qubit $q$ in $C$, of which there are at most \qx, each resulting copied X - stabiliser acts on a different $(q,j)$.

We then have the following.

\begin{lemma}[Extricated from Lemma 1, on Copying and Gauging, in \cite{hastings2021quantum}] Let parameters with and without a tilde be those of $\tilde{C}$ and $C$ respectively, i.e. those after and before copying. Then

\begin{enumerate}
    \item $\tilde{N} = N\qx$, $\tilde{N}_X = N_X + \Theta(N\qx)$, $\tilde{N}_Z = N_Z$.
    \item $\tilde{K}$ = K.
    \item $\tilde{q}_X \leq 3$.
    \item $\tilde{w}_X = \max(w_X,2)$.
    \item $\tilde{q}_Z = \qz$.
    \item $\tilde{w}_Z = q_X\wz$.
    \item $\tilde{d}_Z = d_Zq_X$.
    \item $\tilde{d}_X = d_X$.
\end{enumerate}
\end{lemma}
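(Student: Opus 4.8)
The plan is to verify the eight items by a mix of direct bookkeeping from the construction (items 1, 3, 4, 5, 6) and a homological comparison of $\tilde{C}$ with $C$ through two explicit linear maps (items 2, 7, 8). I would write the qubits of $\tilde{C}$ as pairs $(q,j)$ with $q$ a qubit of $C$ and $j\in\{1,\dots,\qx\}$, and call $\{(q,1),\dots,(q,\qx)\}$ the \emph{line} over $q$. I would introduce the blow-up map $\beta\colon\mathbb{F}_2^{N}\to\mathbb{F}_2^{N\qx}$, $\beta(z)_{(q,j)}=z_q$ for all $j$, which is injective, satisfies $|\beta(z)|=\qx|z|$, and has image exactly the vectors that are constant on every line; the parity-per-line map $\pi\colon\mathbb{F}_2^{N\qx}\to\mathbb{F}_2^{N}$, $\pi(\tilde x)_q=\sum_j\tilde x_{(q,j)}$, which satisfies $|\pi(\tilde x)|\le|\tilde x|$; and the section $\sigma\colon\mathbb{F}_2^{N}\to\mathbb{F}_2^{N\qx}$ placing $x_q$ on $(q,1)$, so that $\pi\sigma=\mathrm{id}$ and $|\sigma(x)|=|x|$.

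Items 1, 3, 4, 5, 6 I would just read off the construction. There are $N$ lines of $\qx$ qubits, so $\tilde N=N\qx$; there are $N_X$ copied and $N(\qx-1)$ new X-stabilisers and $N_Z$ Z-stabilisers, so $\tilde N_X=N_X+\Theta(N\qx)$ and $\tilde N_Z=N_Z$. A qubit $(q,j)$ meets at most one copied X-stabiliser (the point of the choice in step~2) and at most the two new X-stabilisers $X_{(q,j-1)}X_{(q,j)}$, $X_{(q,j)}X_{(q,j+1)}$, giving $\tilde q_X\le 3$; a copied X-stabiliser has the weight of the stabiliser of $C$ it comes from and a new one has weight $2$, giving $\tilde w_X=\max(\wx,2)$. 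Each Z-stabiliser $Z_{q_1}\cdots Z_{q_w}$ of $C$ becomes $\prod_i\prod_j Z_{(q_i,j)}$ of weight $w\qx$, and the Z-stabilisers meeting $(q,j)$ are precisely the blow-ups of those meeting $q$; hence $\tilde w_Z=\qx\wz$ and $\tilde q_Z=\qz$.

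For items 2 and 7 I would identify the Z-side spaces of $\tilde{C}$. Since each $\tilde Z$-stabiliser is the blow-up of the corresponding row of $H_Z$, one has $\tilde H_Z^T=\beta\circ H_Z^T$, so $\im(\tilde H_Z^T)=\beta(\im(H_Z^T))$. A vector commutes with every new X-stabiliser iff it is constant on every line, i.e.\ lies in $\beta(\mathbb{F}_2^N)$; and for $\tilde z=\beta(z)$, commuting with the copied X-stabiliser from $S=X_{q_1}\cdots X_{q_w}$ reads $\sum_i z_{q_i}=0$, which is exactly orthogonality of $z$ to that row of $H_X$. Hence $\ker(\tilde H_X)=\beta(\ker(H_X))$. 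Therefore $\beta$ restricts to a weight-scaling-by-$\qx$ bijection $\ker(H_X)\setminus\im(H_Z^T)\to\ker(\tilde H_X)\setminus\im(\tilde H_Z^T)$, giving $\tilde d_Z=\qx d_Z$, and, passing to quotients, an isomorphism $\ker(H_X)/\im(H_Z^T)\cong\ker(\tilde H_X)/\im(\tilde H_Z^T)$, whence $\tilde K=K$.

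For item 8 I would show $\pi$ induces a minimum-weight-preserving isomorphism $\ker(\tilde H_Z)/\im(\tilde H_X^T)\cong\ker(H_Z)/\im(H_X^T)$. Because $\tilde H_Z$ is $H_Z$ with each column repeated, $\pi(\ker\tilde H_Z)\subseteq\ker H_Z$; and $\pi$ sends a copied X-stabiliser $\tilde S$ to $S$ (its qubits lie one per line over distinct qubits of $C$) and kills every new X-stabiliser, so $\pi(\im\tilde H_X^T)\subseteq\im H_X^T$ and $\pi$ descends. Surjectivity onto nontrivial classes: for $x\in\ker H_Z$, $\sigma(x)\in\ker\tilde H_Z$, $\pi\sigma(x)=x$, and if $\sigma(x)\in\im\tilde H_X^T$ then $x\in\im H_X^T$. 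Injectivity: if $\tilde x\in\ker\tilde H_Z$ has $\pi(\tilde x)=H_X^T y$, subtract from $\tilde x$ the combination of copied X-stabilisers indexed by $y$ to get $\tilde x'\in\ker\tilde H_Z$ with $\pi(\tilde x')=0$, i.e.\ of even weight on every line; since the new X-stabilisers on a line span that line's even-weight subspace, $\tilde x'\in\im\tilde H_X^T$, hence $\tilde x\in\im\tilde H_X^T$. Finally, $\sigma$ of a minimum-weight X-logical of $C$ gives $\tilde d_X\le d_X$, and for a minimum-weight X-logical $\tilde x$ of $\tilde C$ the vector $\pi(\tilde x)$ is a nontrivial X-logical of $C$ with $|\pi(\tilde x)|\le|\tilde x|$, giving $\tilde d_X\ge d_X$. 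The step I expect to require the most care is exactly this injectivity in item 8 — ruling out new low-weight X-logicals that fail to be constant on lines — since it is the one place where one genuinely needs that the new X-stabilisers $X_{(q,j)}X_{(q,j+1)}$ on a line generate the \emph{entire} even-weight subspace of that line, which is what allows any $\ker\tilde H_Z$ element with vanishing line-parities to be absorbed into $\im(\tilde H_X^T)$; everything else is immediate from the construction or a formal consequence of $\beta$, $\pi$ and $\sigma$.
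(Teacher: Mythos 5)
Your proposal is correct. Note that the paper itself gives no proof of this lemma --- it is imported verbatim from Hastings' weight-reduction paper as a black box --- so there is no in-text argument to compare against; your write-up supplies a complete, self-contained verification. The combinatorial items (1, 3--6) are read off the construction exactly as one would expect, and the homological bookkeeping via $\beta$, $\pi$, $\sigma$ is sound: the identities $\ker(\tilde H_X)=\beta(\ker H_X)$ and $\im(\tilde H_Z^T)=\beta(\im(H_Z^T))$ give items 2 and 7 immediately, and your injectivity argument for item 8 --- reducing modulo copied stabilisers to a chain with vanishing line-parities and absorbing it into the span of the new $X_{(q,j)}X_{(q,j+1)}$ generators --- is exactly the point that needs care and is handled correctly. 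This is consistent with the concatenation-with-a-repetition-code picture the paper sketches informally after the construction.
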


In Section \ref{copyingSection}, we will analyse the soundness of $\tilde{C}$ in terms of that of $C$.

\subsubsection{Gauging}

Gauging has already been fully analysed on its own in \cite{hastings2016weight} and so we take it here as a black-box construction to reduce \wx. The reader can either see this construction in conjunction with copying, in \cite{hastings2021quantum} or, what we recommend, on its own, under the name of ``X-type generator splitting'' in Section IIA of \cite{hastings2016weight}, and taking our Lemma \ref{correctedLemma} to replace Lemma 1 in that paper. The soundness of gauging was also analysed in \cite{hastings2016weight} in Lemma 6. This is the reason that we wanted to separate copying and gauging: gauging already has a soundness analysis while copying does not. Note that using the presented ingredients, it would be straightforward to analyse the soundness of the combined copying and gauging procedure of \cite{hastings2021quantum}, and again this would likely lead to better parameters, but we do not do this to avoid repeating too much of the gauging soundness proof.

\begin{lemma}[From \cite{hastings2016weight}, Lemma 1 ($\tilde{w}_Z$ is corrected) and Lemma 6]\label{correctedLemma} Let parameters with and without a tilde be those of the code after and before gauging respectively. Then
\begin{enumerate}
    \item $\tilde{N} = \OO(N\qx)$ $(\tilde{N} \geq N)$, $\tilde{N}_X = \OO(N_X + N\qx)$, $\tilde{N}_Z = N_Z$.
    \item $\tilde{K} = K$.
    \item $\tilde{q}_X = \max(q_X,2)$.
    \item $\tilde{w}_X \leq 3$.
    \item $\tilde{q}_Z \leq w_X\qz$.
    \item $\tilde{w}_Z \leq \wz(1+\wx\qx)$.
    \item $\tilde{d}_Z \geq d_Z$.
    \item $\tilde{d}_X \geq d_X \Theta\left(\frac{1}{\wx}\right)$.
    \item $\tilde{\rho}_Z \geq \frac{\tilde{N}}{\tilde{N}_X}\frac{N_X}{N}\rho_Z\frac{1}{1+\wx\left(\qx+\frac{N_X}{N}\rho_Z\right)}$.
    \item $\tilde{\rho}_X \geq \frac{\tilde{N}}{N} \rho_X$.
\end{enumerate}
Note then that under the reasonable assumptions that $N_X = \Theta(N)$, $\rho_Z = \OO(1)$, and assuming that copying has already been performed to get $\qx = \OO(1)$, we have $\tilde{\rho}_Z = \Omega\left(\frac{\rho_Z}{\wx}\right)$ and $\tilde{\rho}_X \geq \rho_X$.
\end{lemma}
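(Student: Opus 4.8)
Since this statement is presented as a faithful restatement of Hastings' results --- items~(1)--(8) being Lemma~1 of \cite{hastings2016weight} with the $\tilde w_Z$ bound corrected, and items~(9)--(10) being Lemma~6 of the same paper --- the plan is to recall the gauging (X-type generator splitting) construction, verify the combinatorial parameters by direct inspection, justify the corrected bound on $\tilde w_Z$, and re-derive the two soundness inequalities in the dimension-aware convention of Equation~\eqref{cLTCDef} (Hastings' own definition omits the matrix dimensions).

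First I would recall the construction. Gauging processes each X-stabiliser $g$ in turn: fixing an ordering $q_1,\dots,q_{w_g}$ of its support, it introduces new qubits $r_1,\dots,r_{w_g-1}$ and replaces $g$ by a ``path'' of X-checks of weight $\le 3$ threading $q_1,r_1,q_2,r_2,\dots,q_{w_g}$; to keep the chain complex valid, every Z-stabiliser touching some $q_i$ is enlarged so that it still commutes with each new path-check, i.e.\ it absorbs a suitable subset of the $r_j$ attached to $g$. With this in hand, items~(1)--(5), (7) and (8) follow by inspection: the number of new qubits is $\sum_g(w_g-1)\le N\qx$, giving $\tilde N=\OO(N\qx)$ with $\tilde N\ge N$, and likewise $\tilde N_X=\OO(N_X+N\qx)$, while no new Z-stabilisers are created so $\tilde N_Z=N_Z$; the path-checks have weight $\le 3$ so $\tilde w_X\le 3$; each new qubit lies in two path-checks, giving $\tilde q_X=\max(\qx,2)$; each new qubit $r_j$ of $g$ lies only in enlargements of Z-stabilisers meeting $\mathrm{supp}(g)$, of which there are at most $\wx\qz$, giving $\tilde q_Z\le\wx\qz$; $\tilde d_Z\ge d_Z$ because summing the path-checks telescopes to show a Z-logical of $\tilde C$ restricts to a Z-logical of $C$ of no larger weight; and $\tilde d_X\ge\Theta(1/\wx)\,d_X$ because collapsing each subdivided path back to its original check expands an X-logical by at most a factor $\wx$.

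The one combinatorial point needing care is item~(6), $\tilde w_Z\le\wz(1+\wx\qx)$. I would argue that a Z-stabiliser of weight $\le\wz$ meets at most $\wz$ original qubits, each such qubit lies in at most $\qx$ X-stabilisers, and for each of those X-stabilisers (weight $\le\wx$) the enlargement forces at most $\wx$ of its new qubits into the Z-stabiliser's support, so the extra weight is at most $\wz\qx\wx$. This is exactly where Lemma~1 of \cite{hastings2016weight} erred by dropping the $\wx\qx$ factor: bounded $\wz$ alone does not survive gauging, one also needs $\wx$ and $\qx$ bounded, and since gauging is the step that reduces $\wx$, the $\wz$-side genuinely grows here, which is what necessitates the later coning step.

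For the soundness I would follow the structure of \cite{hastings2016weight}, Lemma~6. The easy direction is $\tilde\rho_X$, the soundness of $\ker(\tilde H_Z)$: since gauging adds no Z-checks and only appends structured new columns for the new qubits, any test string $\tilde u=(u,u')$ can be corrected on its new-qubit part without increasing the number of violated checks, reducing to the soundness of $\ker(H_Z)$, with the only new ingredient being the code-length ratio $\tilde N/N$ in the denominator of \eqref{cLTCDef}, yielding $\tilde\rho_X\ge\frac{\tilde N}{N}\rho_X$. The hard part, and where I expect the bulk of the work to lie, is $\tilde\rho_Z$, the soundness of $\ker(\tilde H_X)$, since $\tilde H_X$ differs substantially from $H_X$: given $\tilde u$ with few violated X-checks, I would first use the weight-$\le 3$ path-checks to straighten $\tilde u$ on the new qubits of each subdivided stabiliser into a consistent configuration (at a cost scaling with the path length $\approx\wx$), then read off an induced string on the original qubits whose number of violated $H_X$-checks is controlled, invoke the soundness $\rho_Z$ of $C$ to find a nearby original codeword, and lift the correction back. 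The lifting cost --- each original bit lies in $\le\qx$ X-checks, and the correction produced by the old soundness has weight at most $\tfrac{N_X}{N}\rho_Z$ times its violation count --- is precisely what assembles into the denominator $1+\wx\bigl(\qx+\tfrac{N_X}{N}\rho_Z\bigr)$. The genuine difficulty throughout is making this chain of reductions quantitative while tracking every dimension factor consistently; once it is done, the concluding remark follows immediately by substituting $N_X=\Theta(N)$, $\rho_Z=\OO(1)$ and $\qx=\OO(1)$ into items~(9) and~(10).
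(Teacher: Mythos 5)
The paper offers no proof of this lemma: it is imported verbatim from Lemma~1 and Lemma~6 of \cite{hastings2016weight} (with only the $\tilde w_Z$ bound corrected), and gauging is explicitly treated as a black box. Your reconstruction of the construction, your item-by-item verification of the combinatorial parameters, your accounting $\tilde w_Z\le \wz+\wz\qx\wx$ for the corrected bound, and your outline of the two soundness arguments all match the content of the cited proofs and the stated inequalities, so the proposal is consistent with the paper's treatment; the only caveat is that items~(9) and~(10) remain a sketch rather than a worked derivation, which is acceptable here precisely because the paper itself defers those derivations entirely to the reference.
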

\subsubsection{Thickening and Choosing Heights}\label{thickenAndChooseHeightsPrelimSection}

Thickening and Choosing Heights was presented as ``Z-type Qubit Splitting'' in Section IIC of \cite{hastings2016weight} and under the present name in Section II of \cite{hastings2021quantum} and allows for the reduction of the parameter \qz. Both of these papers present a full analysis of the construction's non-soundness parameters. Further, \cite{hastings2016weight} presents an analysis of the soundness of the ``thickening only'' portion of the construction, without a choice of heights being made, in its Lemma 7
. Therefore, for thickening and choosing heights, our contribution is to analyse the soundness under the choice of heights. It is because of this that we emphasise the two steps in the name ``thickening and choosing heights'', as mentioned, rather than simply ``thickening'', which is how Hastings often refers to both steps collectively.

The first step of this construction, thickening, is exactly the distance balancing construction, explained in Section \ref{homProduct}, where the classical code being used is the repetition code of length $l$\footnote{Indeed, the discovery of thickening for weight reduction doubled as the discovery of the distance balancing procedure. This was followed by the distance balancing method of Evra et al. \cite{evra2022decodable} that allowed for the classical code to be general, not just the repetition code. This allows good classical LDPC codes to be used, meaning that the dimension of the quantum code does not drop when distance balancing.}. We therefore take the chain complex of the inputted quantum code: $C = \left(\mathbb{F}_2^{N_Z}\overset{H_Z^T}{\longrightarrow}\mathbb{F}_2^{N}\overset{H_X}{\longrightarrow}\mathbb{F}_2^{N_X}\right)$ and the chain complex of the classical repetition code, $R = \left(\mathbb{F}_2^l \overset{H}{\longrightarrow} \mathbb{F}_2^{l-1}\right)$, where

\begin{equation}
    H = \begin{pmatrix}1 & 1 & 0 & 0 &\cdots&0 & 0\\
    0 & 1 & 1 & 0 &\cdots & 0 & 0\\
    0 & 0 & 1 & 1 & \cdots & 0 & 0\\
    &&&&\ddots&&\\
    0 & 0 & 0 & 0 & \cdots & 1 & 1\end{pmatrix}\label{repCodePCM}
\end{equation}
and take the homological product of $C$ with the dual of the latter complex, $R^*$. This gives us the four-term chain complex shown in Figure \ref{thickenCC}.

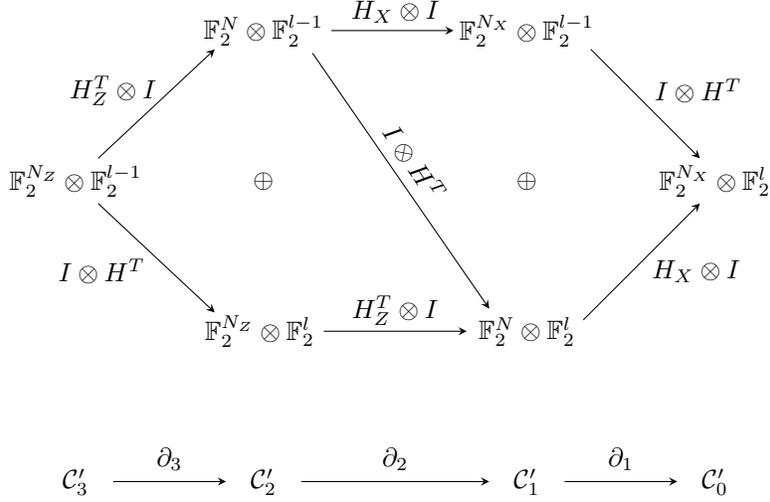
\begin{figure}[h]
\begin{center}
\begin{tikzpicture}

\node at (2.5,0) {$\mathbb{F}_2^{N_Z}\otimes \mathbb{F}_2^{l-1}$};
\node at (5,2) {$\mathbb{F}_2^{N}\otimes \mathbb{F}_2^{l-1}$};
\node at (8.5,2) {$\mathbb{F}_2^{N_X}\otimes \mathbb{F}_2^{l-1}$};
\node at (11,0) {$\mathbb{F}_2^{N_X}\otimes \mathbb{F}_2^l$};
\node at (4.95,-2) {$\mathbb{F}_2^{N_Z}\otimes \mathbb{F}_2^l$};
\node at (8.5,-2) {$\mathbb{F}_2^{N}\otimes \mathbb{F}_2^l$};
\node at (5,0) {$\oplus$};
\node at (8.5,0) {$\oplus$};

\draw [-stealth](2.8,0.3) -- (4.35,1.75);
\draw [-stealth](2.8,-0.3) -- (4.35,-1.75);
\draw [-stealth](5.9,2) -- (7.525,2);
\draw [-stealth](5.8,-2) -- (7.7,-2);
\draw [-stealth](9.35,1.75) -- (10.8,0.3);
\draw [-stealth](9.25,-1.85) -- (10.8,-0.3);
\draw [-stealth](5.65,1.7) -- (8,-1.7);

\node at (3,1.2) {$H_Z^T \otimes I$};
\node at (2.85,-1.2) {$I \otimes H^T$};
\node at (6.75,-1.725) {$H_Z^T \otimes I$};
\node at (6.75,2.25) {$H_X \otimes I$};
\node [rotate = -56] at (7,0.25) {$I \otimes H^T$};
\node at (10.78,1.2) {$I \otimes H^T$};
\node at (10.75,-1.2) {$H_X \otimes I$};

\node at (2.5,-4) {$\mathcal{C}'_3$};
\node at (5,-4) {$\mathcal{C}'_2$};
\node at (8.5,-4) {$\mathcal{C}'_1$};
\node at (11,-4) {$\mathcal{C}'_0$};

\draw [-stealth](3,-4) -- (4.5,-4);
\draw [-stealth](5.5,-4) -- (8,-4);
\draw [-stealth](9,-4) -- (10.5,-4);

\node at (3.75,-3.7) {$\partial_3$};
\node at (6.75,-3.7) {$\partial_2$};
\node at (9.75,-3.7) {$\partial_1$};

\end{tikzpicture}
\end{center}\caption{The homological product of $C$ with $R^*$, denoted $C'$. These diagrams can be ``read off vertically'', meaning that, for example $\mathcal{C}_2' = \left(\mathbb{F}_2^{N_Z}\otimes\mathbb{F}_2^l\right) \oplus \left(\mathbb{F}_2^N\otimes\mathbb{F}_2^{l-1}\right)$. The linear maps between $\mathcal{C}_i'$ can also be read off vertically, summing the images of any maps that map to the same space.}\label{thickenCC}
\end{figure}

Finally, we define another quantum code from the final three terms of this four-term chain complex i.e. $\mathcal{C}_2'\overset{\partial_2}{\longrightarrow}\mathcal{C}_1'\overset{\partial_1}{\longrightarrow}\mathcal{C}_0'$. For clarity, Z - stabilisers are associated with basis elements of $\mathcal{C}_2'$, qubits are associated with basis elements of $\mathcal{C}_1'$ and X - stabilisers are associated with basis elements of $\mathcal{C}_0'$.

At this stage, \qz has not been reduced. Indeed, there will be a qubit represented by some basis element in $\mathbb{F}_2^N \otimes \mathbb{F}_2^l$ that will be acted on by \qz Z - stabilisers represented by basis elements in $\mathbb{F}_2^{N_Z}\otimes\mathbb{F}_2^l$. What we now do, however, is to remove most of the Z - stabilisers associated with basis elements of $\mathbb{F}_2^{N_Z}\otimes\mathbb{F}_2^l$. This is the `choosing heights' step. Writing the standard basis of $\mathbb{F}_2^l$ as $\left(w_i\right)_{i=1}^l$, for each basis element $c \in \mathbb{F}_2^{N_Z}$, we pick one $k \in \{1, ..., l\}$ and remove all basis elements $c \otimes w_{k'}$ for $k' \neq k$, therefore keeping $c \otimes w_k$\footnote{The reason for the name `choosing heights' is that the chain complex for the repetition code may also be thought of as a cellulation of an interval, and so taking this homological product can be imagined as geometrically thickening the quantum code complex. In this second `height choice' step, we are choosing a coordinate along that interval at which to attach all of the 2-cells represented by basis elements of $\mathbb{F}_2^{N_Z}$, rather than attaching them at every coordinate/height.}.

After the choice of heights is made, the Z - stabilisers in $\mathbb{F}_2^{N_Z}\otimes\mathbb{F}_2^l$ that do act on a qubit $x_a \otimes w_k$ in $\mathbb{F}_2^N\otimes\mathbb{F}_2^l$ are exactly those from the linear combination of basis elements $H_Z(x_a)\otimes w_k$ that are kept, rather than removed. Hastings shows in both \cite{hastings2016weight} and \cite{hastings2021quantum} that it is possible to take a large enough $l$, and make a particular height choice, such that the \qz resulting from this procedure is constant. Indeed, Hastings gives the following lemma.

\begin{lemma}[Lemmas 2 and 4 of \cite{hastings2021quantum}] Let parameters with and without a tilde be respectively those of the code after and before ``thickening and choosing heights''. Then, for any $\epsilon > 0$, one may take $l = \Theta\left(\qz^{1+\epsilon}\min(\qz\wz,N)^{\OO(\epsilon)}\right)$ and then there is a choice of heights such that
    \begin{enumerate}
        \item $\tilde{N} = Nl + N_X(l-1)$, $\tilde{N}_X = N_Xl$, $\tilde{N}_Z = N_Z + N(l-1)$.
        \item $\tilde{K} = K$.
        \item $\tilde{q}_X = \max(\qx,2)$.
        \item $\tilde{w}_X = \wx + 2$ if $l \geq 3$ and $\tilde{w}_X = \wx + 1$ if $l = 2$.
        \item $\tilde{q}_Z = \max\left(\OO(1), \wx\right)$.
        \item $\tilde{w}_Z = \max(\wz,\qx+2)$.
        \item $\tilde{d}_Z = d_Z$.
        \item $\tilde{d}_X = ld_X$.
    \end{enumerate}
\end{lemma}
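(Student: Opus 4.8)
The plan is to split the construction into its two genuinely independent halves — \emph{thickening}, which is exactly the homological product (distance balancing) of $C$ with the dual $R^*$ of the length-$l$ repetition code, and the subsequent \emph{choosing heights}, which deletes all but one copy of each $2$-cell $c\in\mathbb{F}_2^{N_Z}$ — and to read off each item either from the product complex of Figure~\ref{thickenCC} or from the combinatorics of the height assignment.

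First, for thickening: the repetition parity-check matrix \eqref{repCodePCM} has independent rows, $\dim(R)=1$ and $d(R)=l$, so Lemma~\ref{DBresults} immediately hands us the values $K$, $l\,d_X$ and $d_Z$ that items 2, 8, 7 will eventually require of the final code, together with the $\Theta(Nl)$ qubit count when $N_X=\OO(N)$. The exact counts and all four weights I would compute cell by cell in $C\times R^*$: the qubit space $\mathcal{C}_1'=(\mathbb{F}_2^N\otimes\mathbb{F}_2^l)\oplus(\mathbb{F}_2^{N_X}\otimes\mathbb{F}_2^{l-1})$ gives $\tilde N=Nl+N_X(l-1)$, the X-stabiliser space $\mathbb{F}_2^{N_X}\otimes\mathbb{F}_2^l$ gives $\tilde N_X=N_Xl$, and, using that $H^T$ sends the $j$-th basis vector of $\mathbb{F}_2^{l-1}$ to $w_j+w_{j+1}$, evaluating $\partial_1$ and $\partial_2$ on each kind of basis cell yields $\tilde q_X=\max(\qx,2)$, $\tilde w_X=\wx+2$ (or $\wx+1$ when $l=2$), and $\tilde w_Z=\max(\wz,\qx+2)$. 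At this stage $\tilde q_Z$ is not yet controlled: a qubit $x_a\otimes w_k$ still meets all $\qz$ Z-stabilisers $c\otimes w_k$ whose $c$ is supported on $x_a$ (the rest of its degree being at most the two stabilisers $x_a\otimes e_{k-1},x_a\otimes e_k$ from the other summand), while a qubit $s\otimes e_j$ already has degree $\le\wx$.

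Second, for choosing heights: keeping only $c\otimes w_{h(c)}$ for each $c$ drops $\tilde N_Z$ from $N_Zl+N(l-1)$ to $N_Z+N(l-1)$ (item~1) and leaves the qubits, the X-stabilisers and the surviving Z-stabiliser supports alone, so $\tilde q_X,\tilde w_X,\tilde w_Z$ persist. The point I would emphasise is that the deleted rows of $H_Z$ are $\mathbb{F}_2$-linear combinations of the kept ones: applying $\partial_2\partial_3=0$ to the three-cell $c\otimes e_j$ of $C\times R^*$ and using $H_XH_Z^T=0$ gives $\partial_2\big((H_Z^Tc)\otimes e_j\big)=H_Z^Tc\otimes(w_j+w_{j+1})$, and telescoping this along the vertices between $w_k$ and $w_{h(c)}$ yields $\partial_2(c\otimes w_k)=\partial_2\big(c\otimes w_{h(c)}+\sum_{a,j}x_a\otimes e_j\big)$ with all summands kept cells. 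Hence the deletion changes neither $\operatorname{rank}(H_Z)$ (so $\tilde K=K$), nor $\ker(H_Z)$ (so $\tilde d_X=l\,d_X$), nor the row space $\im(H_Z^T)$ (so $\tilde d_Z=d_Z$), recovering items 2, 8, 7.

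The substantive step — and the one I expect to be the main obstacle — is exhibiting a height function $h\colon\{1,\dots,N_Z\}\to\{1,\dots,l\}$ achieving $\tilde q_Z=\max(\OO(1),\wx)$. After deletion the Z-stabilisers meeting $x_a\otimes w_k$ that come from $\mathbb{F}_2^{N_Z}\otimes\mathbb{F}_2^l$ are exactly the surviving $c\otimes w_k$ with $c$ supported on $x_a$, i.e.\ the stabilisers $c$ through $x_a$ with $h(c)=k$; so, given the $\le 2$ and $\le\wx$ bounds above, the target reduces to the combinatorial statement that for every qubit $x_a$ and every height $k$ only $\OO(1)$ of the $\le\qz$ stabilisers through $x_a$ receive height $k$. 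I would obtain this by the Lovász Local Lemma: choose each $h(c)$ uniformly and independently; the bad event ``more than $C_0$ of the stabilisers through $x_a$ get height $k$'' has probability $\OO\!\big((\qz/l)^{C_0+1}\big)$ and depends only on the heights of the $\le\qz$ stabilisers through $x_a$, hence is mutually independent of all but $\OO\!\big(\min(\qz\wz,N)\,l\big)$ other bad events; the symmetric LLL then asks for $(\qz/l)^{C_0+1}\min(\qz\wz,N)\,l\lesssim 1$, which with $C_0=\lceil 1/\epsilon\rceil$ is met precisely by $l=\Theta\!\big(\qz^{1+\epsilon}\min(\qz\wz,N)^{\OO(\epsilon)}\big)$, giving $C_0=\OO(1)$ and thus $\tilde q_Z=\max(C_0+2,\wx)=\max(\OO(1),\wx)$. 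Matching the quoted $l$ exactly — in particular the $\min(\qz\wz,N)$ factor, which records whether the LLL dependency degree is bounded by the local Tanner-graph size $\qz\wz$ or simply by the total qubit count $N$ — is where the care lies; this is what Lemmas 2 and 4 of \cite{hastings2021quantum} carry out.
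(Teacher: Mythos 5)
Your proposal is correct, and it reconstructs the argument along essentially the same lines as the source: the paper itself does not prove this lemma but imports it verbatim from Hastings (Lemmas 2 and 4 of \cite{hastings2021quantum}), where the weight and dimension/distance claims are read off the product complex exactly as you do, and the existence of a good height function is likewise established probabilistically via the Lov\'asz Local Lemma with the same dependency count $\min(\qz\wz,N)\,l$, yielding the same bound on $l$. Your telescoping argument showing the deleted $2$-cells remain in the span of the kept ones is also the same mechanism this paper later reuses in Section \ref{thickeningAndChoosingHeightsSection} to control the soundness under the height choice.
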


As for the soundness, Hastings analyses the thickening portion only in \cite{hastings2016weight}. We also note that, since thickening is exactly distance balancing with the repetition code, the following lemma can also be seen as a special case of our previous result \cite{wills2023general}, which applies to distance balancing with any classical code.

\begin{lemma}[Lemma 7 of \cite{hastings2016weight}] Let parameters with and without a prime be respectively those of the code after and before thickening (only). Then, 
    \begin{enumerate}
        \item $\rho_Z' \geq \frac{N'}{N_X'}\min\left(\frac{N_X\rho_Z}{N},1\right)\frac{1}{l}$.
        \item $\rho_X' \geq \frac{N'}{N_Z'}\min\left(\frac{N_Z\rho_X}{N},1\right)\frac{1}{l}$.
    \end{enumerate}
    where $N' = Nl + N_X(l-1)$, $N_X' = N_Xl$ and $N_Z' = N_Zl + N(l-1)$. Note then that under the reasonable assumptions that $N = \Theta\left(N_X\right) = \Theta\left(N_Z\right)$ and $\rho_Z$, $\rho_X = \OO(1)$, we have $\rho_Z' = \Omega\left(\frac{\rho_Z}{l}\right)$ and $\rho_X' = \Omega\left(\frac{\rho_X}{l}\right)$.\label{thickeningSoundnessLemma}
\end{lemma}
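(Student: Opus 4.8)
The plan is to prove the two inequalities separately. By definition $\rho_Z'$ is the classical soundness of the code $\ker(\partial_1)$ attached to the X-stabilisers of the thickened code, and $\rho_X'$ that of the code $\ker(\partial_2^T)$ attached to its Z-stabilisers, so in each case I argue entirely classically about a general word in the qubit space $\mathcal{C}_1' = (\mathbb{F}_2^{N_X}\otimes\mathbb{F}_2^{l-1})\oplus(\mathbb{F}_2^{N}\otimes\mathbb{F}_2^{l})$. Reading the boundary maps off Figure~\ref{thickenCC} and writing such a word as a pair of matrices $(A,B)$ with columns $a_1,\dots,a_{l-1}\in\mathbb{F}_2^{N_X}$ and $b_1,\dots,b_l\in\mathbb{F}_2^{N}$, one finds that $\partial_1(A,B)$ is the matrix with $k$-th column $a_{k-1}+a_k+H_X b_k$ (with $a_0=a_l=0$), and $\partial_2^T(A,B)$ is the pair of matrices with $j$-th column $H_X^T a_j+b_j+b_{j+1}$ and $k$-th column $H_Z b_k$. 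Consequently $(A,B)\in\ker(\partial_1)$ iff $H_X b_k = a_{k-1}+a_k$ for every $k$, while $(A,B)\in\ker(\partial_2^T)$ iff every $b_k\in\ker(H_Z)$ and $b_j+b_{j+1}=H_X^T a_j$ for every $j$. The structural fact driving the argument is that the repetition check $H$ of Equation~\eqref{repCodePCM} has full row rank, so defects can be telescoped across the $l$ slices $b_1,\dots,b_l$ and localised to one slice.

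For $\rho_Z'$: given $(A,B)$ with syndrome $S=\partial_1(A,B)$ whose $k$-th column is $s_k$, summing the $s_k$ telescopes the $a$-terms to $\sum_k s_k=H_X(\sum_k b_k)$, so $\sum_k b_k$ carries an $H_X$-syndrome of weight $\le|S|$. The classical soundness $\rho_Z$ of $\ker(H_X)$ in $C$ then furnishes $z$ with $\sum_k b_k+z\in\ker(H_X)$ and $|z|\le\tfrac{N}{\rho_Z N_X}|S|$. Setting $b_l^\star=b_l+z$, $b_k^\star=b_k$ for $k<l$, and $a_j^\star=H_X(\sum_{k\le j}b_k)$, a direct check gives that the resulting pair $(A^\star,B^\star)$ lies in $\ker(\partial_1)$, that $|B^\star-B|=|z|$, and that $|a_j-a_j^\star|=|\sum_{k\le j}s_k|\le|S|$; placing the correction at the \emph{last} height is precisely what leaves the $a_j^\star$ with $j\le l-1$ untouched and keeps this estimate clean (at any other height one would pick up a factor of $\qx$ from $H_X z$). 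Hence $d((A,B),\ker(\partial_1))\le|z|+(l-1)|S|\le l\max\!\big(\tfrac{N}{\rho_Z N_X},1\big)|S|$, which rearranges to exactly $\rho_Z'\ge\tfrac{N'}{N_X'}\min\!\big(\tfrac{N_X\rho_Z}{N},1\big)\tfrac1l$.

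The bound on $\rho_X'$ is parallel but genuinely separate, since thickening is not X/Z-symmetric. Given $(A,B)$ with syndrome $(T_1,T_2)$ whose $j$-th and $k$-th columns are $H_X^T a_j+b_j+b_{j+1}$ and $H_Z b_k$, telescoping gives $b_{j+1}=b_1+H_X^T(\sum_{i\le j}a_i)+\sum_{i\le j}(T_1)_{\cdot,i}$. I keep $A^\star=A$, use the soundness $\rho_X$ of $\ker(H_Z)$ to pick $b_1^\star\in\ker(H_Z)$ with $|b_1^\star-b_1|\le\tfrac{N}{\rho_X N_Z}|T_2|$, and set $b_{j+1}^\star=b_1^\star+H_X^T(\sum_{i\le j}a_i)$, anchoring every slice to $b_1^\star$. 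The key point is that the CSS relation $H_Z H_X^T=0$ makes $H_Z b_k^\star=H_Z b_1^\star=0$ for \emph{all} $k$ at once, so $(A^\star,B^\star)\in\ker(\partial_2^T)$; meanwhile $|b_{j+1}^\star-b_{j+1}|\le|b_1^\star-b_1|+|T_1|$, giving $d((A,B),\ker(\partial_2^T))\le l|b_1^\star-b_1|+(l-1)|T_1|\le l\max\!\big(\tfrac{N}{\rho_X N_Z},1\big)(|T_1|+|T_2|)$, which rearranges to the claimed $\rho_X'\ge\tfrac{N'}{N_Z'}\min\!\big(\tfrac{N_Z\rho_X}{N},1\big)\tfrac1l$.

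I do not anticipate a deep obstacle — the only care needed is the bookkeeping of the normalisers $N,N_X,N_Z$ and the $\min(\cdot,1)$ (it is merely the natural form of $\tfrac{l}{\min(N_X\rho_Z/N,1)}=l\max(N/(\rho_Z N_X),1)$), together with anchoring each correction at the correct extreme slice so the per-slice corrections telescope rather than propagating a weight factor. This is consistent with the fact that, thickening being distance balancing with the repetition code, the lemma is also a special case of the general distance-balancing soundness result of \cite{wills2023general}. The factor of $l$ in the denominators is intrinsic to the telescoping, and under the standard assumptions $N=\Theta(N_X)=\Theta(N_Z)$ and $\rho_Z,\rho_X=\OO(1)$ the minima resolve to their first arguments and the prefactors $N'/N_X'$, $N'/N_Z'$ are $\Theta(1)$, giving $\rho_Z'=\Omega(\rho_Z/l)$ and $\rho_X'=\Omega(\rho_X/l)$.
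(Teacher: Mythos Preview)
Your argument is correct. The paper does not supply its own proof of this lemma: it is quoted verbatim from Hastings' \cite{hastings2016weight} (and remarked to be a special case of the general distance-balancing soundness result of \cite{wills2023general}), so there is no in-paper proof to compare against. Your direct telescoping construction --- summing syndromes across slices to localise the defect to a single height, invoking the base soundness once there, and then reading off the per-slice corrections --- is exactly the natural approach and matches how such lemmas are proved in those references. The bookkeeping (the identity $a_j^\star-a_j=\sum_{k\le j}s_k$, the use of $H_ZH_X^T=0$ to kill all $H_Zb_k^\star$ simultaneously, and the passage from $\tfrac{N}{\rho N_\bullet}+(l-1)$ to $l\max(\tfrac{N}{\rho N_\bullet},1)$) all checks out, and your remark about anchoring the correction at an extreme slice to avoid an extraneous $H_Xz$ term is the right observation.
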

The analysis of the soundness of the code under the ``choosing heights'' step can be found in Section \ref{thickeningAndChoosingHeightsSection}.
\subsubsection{Coning}\label{coningPrelimSection}

The coning construction is introduced in \cite{hastings2021quantum} and forms the primary technical contribution of that paper. This is the last of the four constructions and can reduce \wz to constant, while preserving constant \qx, \wx and \qz. The soundness of this construction is not analysed prior to the present work: this analysis will be performed in Section \ref{coningSection}. The reader may find figures \ref{coneCodeChainComplex} to \ref{reducedConeCodeChainComplex} in that section helpful for the understanding of the exposition in this section.

Hastings gives the coning construction for all quantum CSS codes, but gives the construction for ``reasonable'' and ``unreasonable'' codes (to be defined) separately. The construction is most simple in the case of reasonable codes, and these are the only codes that we will consider, although this is not a major restriction. A code is called ``reasonable'' if there is no Z - logical operator whose support is contained entirely within the support of some Z - stabiliser. Therefore, any code whose Z - distance is at least equal to \wz is reasonable. Any reasonable code has the following property:

\begin{claim}\label{reasonableClaim}
    For a reasonable code, any Z - operator which is supported only on a subset of the support of one of the Z - stabilisers, and commutes with all the X - stabilisers, is a member of the stabiliser group\footnote{We recall the difference between a ``stabiliser'' and a ``member of the stabiliser group'', as defined in Section \ref{CSSPreliminaries}.}.
\end{claim}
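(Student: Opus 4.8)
The plan is to prove the contrapositive-flavoured dichotomy directly: take a Z-operator $z$ whose support is contained in the support of a single Z-stabiliser $g$, and which commutes with all X-stabilisers. I want to show $z$ is a member of the Z-stabiliser group, i.e. $z \in \im(H_Z^T)$. First I would set up notation: let $g = H_Z^T e_c$ be the stabiliser generator corresponding to the 2-cell $c$, so $\mathrm{supp}(z) \subseteq \mathrm{supp}(g)$. Since $z$ commutes with the X-stabilisers, $z \in \ker(H_X)$, i.e. $z$ is a 1-cycle. So $z$ represents either a nontrivial element of first homology (a Z-logical operator) or a trivial one (a member of the stabiliser group). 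I would argue the first case is impossible by invoking the definition of ``reasonable'': a reasonable code has, by hypothesis, no Z-logical operator supported entirely within the support of some Z-stabiliser. Since $z$ \emph{is} supported within $\mathrm{supp}(g)$, if $z$ were a Z-logical operator this would directly contradict reasonableness. Hence $z$ is a trivial 1-cycle, i.e. $z \in \im(H_Z^T)$, which is exactly the statement that $z$ is a member of the stabiliser group.

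The one subtlety I would be careful about is the edge case where $z = 0$: the zero operator is trivially a member of the stabiliser group (it equals $H_Z^T \cdot 0$), so there is nothing to prove there, and I would note this parenthetically rather than belabour it. A second point worth making explicit is the logical structure ``1-cycle = either logical or stabiliser-group member'': this is immediate from the homological definitions recalled in Section~\ref{CSSPreliminaries}, namely that $\ker(H_X)$ decomposes, modulo $\im(H_Z^T)$, into the trivial class and the nontrivial classes, the latter being precisely the Z-logical operators. I would phrase the argument so that the definition of ``reasonable'' is applied to rule out \emph{all} nontrivial classes at once, since ``supported within $\mathrm{supp}(g)$'' is a property of the operator $z$ itself, not of its homology class, and the definition quantifies over all Z-logicals.

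Honestly, I expect there to be no real obstacle here — this claim is essentially an unpacking of the definition of ``reasonable'' combined with the standard homological dictionary, and the entire proof is two or three lines. The only thing requiring the slightest care is making sure the quantifiers line up: ``reasonable'' says \emph{no} Z-logical is contained in \emph{some} stabiliser's support, and we have a specific operator contained in a specific stabiliser's support, so the contradiction is with the existential inside the negation. If I wanted to be maximally pedantic I would also remark that ``member of the stabiliser group'' was defined in Section~\ref{CSSPreliminaries} as a 1-boundary (element of $\im \partial_2 = \im H_Z^T$), so that the conclusion ``$z$ is a trivial 1-cycle'' and ``$z$ is a member of the stabiliser group'' are literally the same statement, closing the argument.
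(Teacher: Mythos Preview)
Your proposal is correct and follows exactly the same approach as the paper: the paper's proof is a single sentence observing that any Z-operator commuting with all X-stabilisers is either a Z-logical operator or a member of the stabiliser group, and the former is ruled out by the definition of reasonable. Your write-up is simply a more detailed unpacking of this same dichotomy in homological language.
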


The proof of this is immediate, because any Z - operator commuting with all X - stabilisers must be a Z - logical operator, or a member of the stabiliser group, by definition.

As well as only considering reasonable codes, for the sake of simplicity in our soundness proofs, we will present a slightly abridged version of Hastings' coning construction in which there are no ``direct Z - stabilisers'' (to use the language of \cite{hastings2021quantum}). Practically, what this means, is that we work with a ``worst-case scenario'' in which we weight reduce all Z - stabilisers. In a specific use-case, where there are only a few, or one, high weight Z - stabilisers requiring weight reduction, an analysis more specific to that particular case would likely yield better bounds; see the original construction in \cite{hastings2021quantum} for this. We will attempt to keep our notation and language as similar as possible to that paper for the sake of comparability.

To perform the coning construction, we must first start by defining new ``auxiliary'' chain complexes that will ultimately allow us to ``induce'' the effect of the high weight Z - stabilisers by only using constant weight Z - stabilisers. Let us start by defining $N_Z$ two-term chain complexes, $\left(\BB_i\right)_{i=1}^{N_Z}$, which we will say each have 1-cells and 0-cells. For each Z - stabiliser of the original code, $C$, denote the set of qubits on which that Z - stabiliser acts as $Q_i$. The 1-cells of \BBI are then defined to be $Q_i$ and the space they span is denoted $\QQ_i$, i.e. $\left(\BB_i\right)_1 = \QQ_i$, where all our vector spaces are over the binary field.

Consider now, for each $i$, the set $S_i$, of X - stabilisers with some support (not necessarily their whole support) on $Q_i$. By assumption on commutativity, each element of $S_i$ must have even support on $Q_i$. Therefore, for each stabiliser $S \in S_i$, we can choose some pairing of the qubits in $Q_i$ on which it acts. Define the 0-cells in \BBI to be the set of tuples $(S,j)$ such that $S \in S_i$ and $j$ is a pair of qubits in $Q_i$ on which $S$ acts. We will denote the set of 0-cells of \BBI by $X_i$ and the binary vector space they span as \XXI $\left(=\left(\BB_i\right)_0\right)$. The boundary operator of the complex going from $\left(\BB_i\right)_1$ to $\left(\BB_i\right)_0$ is then the obvious one: the boundary of a qubit in $\left(\BB_i\right)_1$ is then given by all the tuples $(S,j)$ for which the given qubit is in the pair $j$. Notice that this means that the size of the coboundary of any 0-cell is equal to 2.

We will now define a further chain complex, \BBARI, from each \BBI. \BBARI will be a 3-term chain complex, considered to have 1-cells, 0-cells and -1-cells. In \BBARI, the 1-cells, 0-cells and the boundary operator going between them will be identical to those of \BBI. We add in -1-cells to make \BBARI, as well as a boundary operator going from the 0-cells to the -1-cells. The purpose of this is to construct a chain complex, \BBARI, which has trivial zeroth homology.

Let us see how to do this. First observe the following. Given that the coboundary of any 0-cell in \BBI has size 2, we can define a graph $G_i$. This graph will have vertices identified with the 1-cells of \BBI and edges identified with the 0-cells of \BBI. An edge will then join the two 1-cells which it has in its coboundary. With this, we observe that the elements of zeroth cohomology in \BBI\footnote{By an element of zeroth cohomology in \BBI, we mean a co-closed 0-chain.} then correspond exactly to closed edge chains in $G_i$. Employing the decongestion lemma \cite{freedman2021building}, Hastings then argues that a basis of simple cycles can be found for these closed edge chains. The ``weight'' of one of these simple cycles is the number of edges that it contains; the sum of the weights of the elements of this basis turns out to be $\OO\left(\left|Q_i\right|\log\left|Q_i\right|\right)$, noting that the graph has $\left|Q_i\right|$ vertices. Further, it turns out that each edge appears at most $\OO(\log\left|Q_i\right|)$ times in the basis.

We may then construct \BBARI by adding a set of -1-cells given by this basis. The coboundary of any of these -1-cells is given by the set of edges in that simple cycle (recalling that 0-cells correspond to edges). The set of -1-cells is denoted $R_i$ and the binary vector space which they span is denoted $\RRI$ $\left(=\left(\BBARI\right)_{-1}\right)$. \BBARI is constructed to have trivial zeroth cohomology, because any co-closed 0-chain corresponds to a closed edge chain in the graph $G_i$, which then can be written as some combination of simple cycles, meaning that the co-closed 0-chain is also a 0-coboundary. Because \BBARI has trivial zeroth cohomology, it has trivial zeroth homology.

Adding these -1-cells to form \BBARI admits a geometrical interpretation. As explained, the complex \BBI can be thought of geometrically as the graph $G_i$. Adding these simple cycles as -1-cells is then like geometrically adding 2-discs to the graph $G_i$ to form a complex $\SSI$. This is potentially a little confusing as the vertices, edges and discs in $\SSI$ are respectively 0-, 1- and 2-dimensional geometrical objects, but correspond to 1-, 0-, and -1-cells in $\BBARI$; nevertheless, the geometrical picture will be useful.

From here, we may define the ``cone code'', $C_{cone}$. This will be a code that has the high weight Z - stabilisers induced by low weight Z - stabilisers, and will have all of the old X - stabilisers and qubits, but will now have additional qubits and X - stabilisers. Unfortunately, these added X - stabilisers may have high weight and, further, the new qubits may be acted on by a large (non-constant) number of new X - stabilisers. Fortunately, we will display a remedy for this by \textit{reducing} the cone code to form the ``reduced cone code''.

The cone code has spaces corresponding to Z - stabilisers, qubits and X - stabilisers respectively as follows:

\begin{align}
    \left(\mathcal{C}_{cone}\right)_2 &= \bigoplus_{i=1}^{N_Z} \mathcal{Q}_i\\
    \left(\mathcal{C}_{cone}\right)_1 &= \mathcal{C}_1 \oplus \left(\bigoplus_{i=1}^{N_Z}\mathcal{X}_i\right)\\
    \left(\mathcal{C}_{cone}\right)_0 &= \mathcal{C}_0 \oplus \left(\bigoplus_{i=1}^{N_Z}\mathcal{R}_i\right)
\end{align}
where $\mathcal{C}_1$ and $\mathcal{C}_0$ are the spaces of qubits and X - stabilisers of the original code, respectively. We may then define the boundary maps of the cone code by defining maps between the spaces $\QQI$, $\mathcal{C}_1$, $\XXI$, $\mathcal{C}_0$ and $\RRI$, and then the boundary maps follow as usual by summing images in the same space. The chain complex of the cone code is shown in Figure \ref{coneCodeChainComplex} in Section \ref{coningSection}. There are maps as follows:

\begin{enumerate}
    \item There is a map from \QQI to \XXI for each $i$ as inherited from \BBARI, i.e. $\partial_1^{\BBARI}$.
    \item There is a map from \XXI to \RRI for each $i$ as inherited from \BBARI, i.e. $\partial_0^{\BBARI}$.
    \item There is a map from $\mathcal{C}_1$ to $\mathcal{C}_0$ as inherited from the original code $C$, i.e. $H_X$.
    \item There is the obvious projection map from \QQI to $\mathcal{C}_1$ for each $i$ (any qubit $q \in \QQI$ is mapped to $q \in \mathcal{C}_1$), denoted $\pi_q^{(i)}$.
    \item There is the obvious projection map from \XXI to $\mathcal{C}_0$ for each $i$ (any tuple $(S,j)$ is mapped to $S \in \mathcal{C}_0$), denoted $\pi_X^{(i)}$.
\end{enumerate}
The boundary maps of the cone code, $\partial_i^{C_{cone}}$, are formed from these maps. At this point, the cone code does have the action of the high weight Z - stabilisers induced by low weight Z - stabilisers, but now, as mentioned, there are two problems. First, X - stabilisers in \RRI may act on up to $\left|Q_i\right|$ qubits in \XXI, making them high weight stabilisers. Second, qubits in \XXI may be acted on by up to $\OO(\log\left|Q_i\right|)$ stabilisers in \RRI, meaning that they are potentially acted on by a non-constant number of X - stabilisers. It is possible to ``reduce the cone code'' to solve both of these problems. The second problem has an easy solution, and we solve this one first. The solution is simply to thicken and choose heights dually, to reduce \qx (where we only need to choose heights for the stabilisers coming from \RRI, not $\mathcal{C}_0$). This procedure is described (non-dually, for the reduction of \qz) in Section \ref{thickenAndChooseHeightsPrelimSection}. Hastings argues that it is possible that for any $\epsilon > 0$, one may thicken by an amount $l = \Theta\left(\log\left(\wz\right)^{2+2\epsilon}\wz^\epsilon\right)$ and choose heights to make it such that at most one stabiliser in $\RRI\otimes\mathcal{E}_0$ is attached to any given qubit in $\XXI\otimes\mathcal{E}_0$. The resulting chain complex is depicted in Figure \ref{thickendAndHeightChosenConeCodeChainComplex}.

All that is left to address is some potentially high weight X - stabilisers, coming from \RRI, each attached at one height. We may think of these geometrically as potentially high weight discs, attached to a large number of edges coming from \XXI. The idea is to ``cellulate'' these discs, adding edges going across the disc, and potentially vertices in the disc, to break up the large disc into many smaller faces, thus re-introducing a constant \wx. There are many ways this cellulation could be done. One example given by Hastings is, if the vertices within a disc are labelled $0, ..., w-1$, one may add an edge between vertices $j$ and $w-j$ for all $j$ with $0 < j < w-j-1$. For this cellulation, no additional vertices need to be added. This particular cellulation will be important to us, and so it is depicted in Figure \ref{cellulationFigure}, and the chain complex that results is depicted in Figure \ref{reducedConeCodeChainComplex}. Note that every disc must be cellulated and they are each cellulated at whatever height they are attached, thus the edges (and vertices if they are added) are only added at the heights at which they are needed. The end result is the ``reduced cone code''.

From this construction, we have the following:

\begin{lemma}[Adapted\footnote{Note that we have the lemma in this form because we do not take any direct stabilisers, as mentioned. Furthermore, Hastings gives $\tilde{d}_Z$ in terms of a parameter $\lambda$, for which we take a trivial lower bound $\frac{1}{\wz}$.} from Lemma 8 of \cite{hastings2021quantum}]\label{coningNonSoundnessLemma} Let parameters with and without a tilde be respectively those of the reduced cone code and the code pre-coning. Then, for any $\epsilon > 0$, letting $l = \Theta\left(\log\left(\wz\right)^{2+2\epsilon}\wz^\epsilon\right)$, we have

    \begin{enumerate}
        \item $\tilde{N} = \OO\left(l\left(N+\wz\qx\wx N_Z\right)\right)$, $\tilde{N}_X = \OO\left(l\left(N + \wz\qx\wx N_Z + N_X\right)\right)$, $\tilde{N}_Z = \OO(N_Z\wz l)$.
        \item $\tilde{K} = K$.
        \item $\tilde{q}_X = \max\left(\qx+\OO(1),\OO(1)\right)$.
        \item $\tilde{w}_X = \max\left(\OO\left(\wx^2\qz\right),\OO(1)\right)$.
        \item $\tilde{q}_Z = \max\left(\qz,\OO(1)\right)$.
        \item $\tilde{w}_Z \leq \qx + \OO(1)$.
        \item $\tilde{d}_Z \geq \frac{d_Zl}{\wz}$.
        \item $\tilde{d}_X \geq d_X$.
    \end{enumerate}
\end{lemma}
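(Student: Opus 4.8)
The plan is to follow Hastings' analysis of the coning construction \cite{hastings2021quantum} through the three stages of Section~\ref{coningPrelimSection}---passing from $C$ to the cone code $C_{cone}$, then dually thickening and choosing heights by an amount $l$ to reduce $\qx$, then cellulating the resulting high-weight discs to reduce $\wx$---and to read the eight claims off the resulting chain complex, using that $C$ is reasonable and that we take no direct $Z$-stabilisers.

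For the cell counts of claim~1, I would first record the uniform bounds $|Q_i|\le\wz$, $|S_i|\le\qx\wz$ (each of the $\le\wz$ qubits of $Q_i$ lies in $\le\qx$ of the $X$-stabilisers), and $|X_i|=O(\qx\wx\wz)$ (each $S\in S_i$ contributes $\le\tfrac12\wx$ pairs); the decongestion-lemma cycle basis \cite{freedman2021building} then gives $|R_i|=O(\qx\wx\wz)$, with total disc weight $O(\wz\log\wz)$ and each $\XXI$-cell on $O(\log\wz)$ discs. Summing over the $N_Z$ auxiliary complexes and adding the $N$ original qubits and $N_X$ original $X$-stabilisers gives the pre-thickening counts; multiplying by the $\Theta(l)$ blow-up of the dual thickening---where only the $\RRI$-stabilisers are pinned to a single height---and by a constant factor from the cellulation yields claim~1. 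Claim~2 is where the construction earns its keep: each $\BBARI$ has trivial zeroth homology (by construction of the simple-cycle basis) and one-dimensional first homology generated by $1_{Q_i}$, whose image in $\mathcal{C}_1$ is the $i$th original $Z$-stabiliser, so a mapping-cone / long-exact-sequence argument gives $H_1(C_{cone})\cong H_1(C)$; thickening is distance balancing against the repetition code and so preserves dimension by Lemma~\ref{DBresults}, and the cellulation subdivides the discs and so preserves homotopy type. Hence $\tilde K=K$.

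The local weights (claims~3--6) I would read off Figure~\ref{coneCodeChainComplex}. A $2$-cell $q\in\QQI$ has boundary $\{q\}\cup\{\text{pairs of }Q_i\text{ containing }q\}$, of size $\le 1+\qx$, and dual thickening and cellulation do not touch $Z$-stabilisers beyond an $O(1)$ additive term, so $\tilde w_Z\le\qx+O(1)$; dually an original qubit lies in $\le\qz$ of the $\QQI$ and each $\XXI$-qubit in exactly the two $2$-cells forming its coboundary, so $\tilde q_Z=\max(\qz,O(1))$. For $\tilde q_X$, an original qubit meets its $\le\qx$ old $X$-stabilisers plus $O(1)$ from thickening, while an $\XXI$-qubit meets its $\pi_X^{(i)}$-image, at most one $\RRI$-stabiliser (exactly what the dual height choice of Lemmas~2 and~4 of \cite{hastings2021quantum} with $l=\Theta(\log(\wz)^{2+2\epsilon}\wz^\epsilon)$ secures), and $O(1)$ more from cellulation, giving $\tilde q_X=\max(\qx+O(1),O(1))$. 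For $\tilde w_X$, an original $X$-stabiliser $S$ picks up, besides its $\le\wx$ original qubits, the pairs $(S,j)$ over the $\le\wx\qz$ indices $i$ with $S\in S_i$, at most $\tfrac12\wx$ per index, hence $O(\wx^2\qz)$ cells, whereas the disc-stabilisers, after the explicit $j\leftrightarrow w-j$ cellulation of Figure~\ref{cellulationFigure}, break into faces of weight $O(1)$, so $\tilde w_X=\max(O(\wx^2\qz),O(1))$.

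For the distances: an original $X$-logical still realises an $X$-logical on the original qubits of $C_{cone}$, so $d_X(C_{cone})\ge d_X$, and dual thickening preserves $d_X$ (Lemma~\ref{DBresults} with $X$ and $Z$ swapped) while cellulation does not decrease it, giving claim~8; for claim~7, Hastings shows $d_Z(C_{cone})\ge\lambda d_Z$ with $\lambda$ measuring how a minimal $Z$-logical must spread over the induced stabilisers, and I would use only the trivial bound $\lambda\ge 1/\wz$, after which dual thickening multiplies the $Z$-distance by $l$ and cellulation does not decrease it, so $\tilde d_Z\ge d_Zl/\wz$. I expect the two genuinely load-bearing steps to be claim~2 and the $\lambda\ge 1/\wz$ input to claim~7: both rest on the structural fact that the cone ``induces'' each high-weight $Z$-stabiliser through constant-weight cells without creating spurious homology or short $Z$-logicals, which is precisely where the triviality of $H_0(\BBARI)$---hence the decongestion lemma---is used. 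The weight bookkeeping of claims~3--6 and the counts of claim~1 are then essentially mechanical once the boundary maps are written out, the only subtlety being to check that the chosen cellulation keeps $\tilde w_X$ and $\tilde q_X$ constant simultaneously, which is why that particular cellulation is singled out.
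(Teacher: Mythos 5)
This lemma is one the paper does not prove: it is imported, up to the two adaptations recorded in its footnote (no direct $Z$-stabilisers; the trivial bound $\lambda \ge 1/\wz$), from Lemma 8 of \cite{hastings2021quantum}, so the only proof to compare against is Hastings'. Your reconstruction follows that same route --- mapping cone, dual thickening with a height choice, cellulation --- and your cell counts agree with the quantities the paper itself relies on elsewhere ($|X_i| \le \wz\qx\wx$, total disc weight $\OO(\wz\log\wz)$, each edge on $\OO(\log\wz)$ discs), so the bookkeeping behind claims 1 and 3--6 is sound. Two points need repair. First, your justification of claim 8 runs in the wrong direction: observing that an original $X$-logical survives into $C_{cone}$ gives $d_X(C_{cone}) \le d_X$, not the claimed lower bound. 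The lower bound requires that every 1-cocycle of the cone code restrict on $\mathcal{C}_1$ to a cocycle of $C$ (which follows by summing the cocycle condition over all 2-cells of a given $\QQI$, using $\partial_1^{\BBARI}\left(1_{Q_i}\right) = 0$) and that this restriction be nontrivial whenever the original class is --- the latter being exactly where the triviality of $H^0\left(\BBARI\right)$ you invoke for claim 2 is used again. Second, $H_1\left(\BBI\right)$ is one-dimensional and generated by $1_{Q_i}$ only when $G_i$ is connected; in general its dimension is the number of connected components of $G_i$, and reasonableness is needed precisely so that each component maps under $\pi_q^{(i)}$ to a member of the stabiliser group of $C$ (this is the Claim inside the paper's proof of the cone-code $\rho_X$ bound), killing the surplus classes in the long exact sequence. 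With those two corrections your sketch matches the cited proof.
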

The analysis of the soundness of this construction, under the assumption of the code being reasonable, will be performed in Section \ref{coningSection}. We also mention that in order to draw our conclusions on the soundness of the code under coning, we take a larger value of $l$ here, $l = \Theta\left(\wz\log(\wz)\qx\right)$, where these un-decorated parameters are those of the code pre-coning, as is explained in Section \ref{reducedConeCodeSoundnessSection}.

\section{Quantum Soundness Under Weight Reduction}\label{wtRedSection}

Our objective is to analyse the soundness of a quantum code under each of the weight reduction constructions, as necessary, so that we may obtain a lower bound on the soundness of a quantum code after a full weight reduction in terms of its soundness before weight reduction. A full analysis of a weight reduction of a qLTC will be presented in Section \ref{fullWeightRed}.

It will be common to represent X and Z - operators by bit strings in the obvious way: a Z - operator (for example) on $\tilde{N}$ qubits may be represented by a bit string $u \in \mathbb{F}_2^{\tilde{N}}$, where there is a Pauli Z in the Z - operator exactly wherever $u$ has a 1. With this equivalence, it makes sense to conflate our thinking about $u$ as a bit string and as an operator, often conflating the bits on which the bit string is defined with the qubits on which the operator is defined. We then refer to $u$ as being ``in a given quantum code'' if it commutes with all the X - stabilisers of that quantum code i.e. it is in the classical code defined by the X - stabilisers of that code. We will also talk about ``putting u in a quantum code'' to refer to the process of flipping bits in $u$ such that it is ``in the quantum code''. We will often specify a process by which this will occur. For example, many of our soundness proofs will go via a line of reasoning as follows:

\begin{enumerate}
    \item Specify a method by which we put a bit string in a post-construction code, for example the post-copying code. The number of bits flipped in this method provides an upper bound on the distance of $u$ from the post-construction code.

    \item We will then argue that flipping this many bits in $u$ implies that a certain number of stabilisers must have been violated by $u$ in the first place, from which we can deduce soundness.
    
\end{enumerate}
The equivalent of all of the above will be true for X - operators also. Note that the soundness in question is deduced because in our definition of soundness, $\frac{|Hx|}{m} \geq \rho \frac{d(x,\ker H)}{n}$, $|Hx|$ translates to a number of violated stabilisers of a given type.

Finally, note that we give our soundness lower bounds in full for the sake of giving the greatest detail, but we say that this is at risk of hiding the most important features of each bound. In particular, the presence of code lengths and numbers of stabilisers often adds greater detail than will be needed - generally $N$, $N_X$ and $N_Z$ will scale at the same rate, or nearly at the same rate. Assumptions like this, as well as the reasonable assumption that the soundness is at most a constant at all stages of the construction, will bring out the important features of the lower bound, which is how many factors of each weight the given soundness is decreased by in some step of the construction.
\subsection{Copying}\label{copyingSection}

We now analyse the soundness of a quantum code under the copying construction of Section \ref{copyingPrelimSection}. Let us recall here that $\rho_X$ is the soundness of the classical code associated with the Z - stabilisers and $\rho_Z$ is the soundness of the classical code associated with the X - stabilisers.

\begin{lemma}
Let parameters with and without a tilde be those of the code after and before copying respectively. Then

\begin{equation}
    \tilde{\rho}_Z \geq \frac{\tilde{N}}{\tilde{N}_X}\frac{\rho_Z}{\qx\rho_Z+\frac{N}{N_X}\qx\left(\qx^2+1\right)}.
\end{equation}
\end{lemma}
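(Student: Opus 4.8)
The plan is to verify the classical local-testability inequality directly for the code $\ker(\tilde{H}_X)$ defined by the X-stabilisers of $\tilde{C}$. Fix an arbitrary bit string $u \in \mathbb{F}_2^{\tilde{N}}$ indexed by the qubits $(q,j)$ of $\tilde{C}$, and let $V_{\mathrm{new}}$ and $V_{\mathrm{cop}}$ denote the numbers of new and copied X-stabilisers of $\tilde{C}$ that $u$ violates, so $|\tilde{H}_X u| = V_{\mathrm{new}}+V_{\mathrm{cop}}$. I would upper bound $d\big(u,\ker(\tilde{H}_X)\big)$ by correcting $u$ to a codeword in two stages. First note that commuting with all new X-stabilisers $X_{(q,j)}X_{(q,j+1)}$ forces a string to be constant in $j$ on the copies of each qubit, and that commuting with the copied stabilisers then forces the common-value vector to lie in $\ker(H_X)$; hence every element of $\ker(\tilde{H}_X)$ has the form $\bar{v}$ with $\bar{v}_{(q,j)}=v_q$ and $v\in\ker(H_X)$. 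Stage one: for each $q$ let $w_q$ be the majority value of $(u_{(q,j)})_j$, and flip the $a_q := |\{j: u_{(q,j)}\neq w_q\}| \le \qx/2$ minority copies, turning $u$ into $\bar{w}$ at cost $\sum_q a_q$. Stage two: pick $v\in\ker(H_X)$ nearest to $w$ and flip all $\qx$ copies of each qubit where $v_q\neq w_q$, at cost $\qx\, d\big(w,\ker(H_X)\big)$; the result lies in $\ker(\tilde{H}_X)$, so $d\big(u,\ker(\tilde{H}_X)\big) \le \sum_q a_q + \qx\, d\big(w,\ker(H_X)\big)$.

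The two remaining tasks are to bound $\sum_q a_q$ and $d\big(w,\ker(H_X)\big)$ by violated stabilisers of $\tilde{C}$. For the first: the new X-stabilisers at a fixed $q$ are exactly the edges of a path through the copies of $q$, so the number $k_q$ of them violated by $u$ equals the number of sign changes in $(u_{(q,j)})_j$; in particular $a_q>0$ implies $k_q\ge 1$, and since $a_q\le\qx/2$ always, $a_q \le \tfrac{\qx}{2}k_q$, whence $\sum_q a_q \le \tfrac{\qx}{2}V_{\mathrm{new}}$. (The factor $\qx$ here is intrinsic: it is the inverse of the $\Theta(1/\qx)$ soundness of the length-$\qx$ repetition code, which is the code the new X-stabilisers define.) For the second, I would use the soundness $\rho_Z$ of $\ker(H_X)$ in the form $d\big(w,\ker(H_X)\big) \le \tfrac{N}{N_X\rho_Z}|H_X w|$, so it suffices to bound $|H_X w|$. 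For an X-stabiliser $S=X_{q_1}\cdots X_{q_w}$ of $C$, write $e_i$ for the indicator that the copy $j(S,i)$ of $q_i$ is a minority copy; then $u_{(q_i,j(S,i))} = w_{q_i}\oplus e_i$, so, modulo $2$, $(H_X w)_S = [\,\tilde{S}\text{ violated by }u\,] \oplus \sum_i e_i$, and hence whenever $(H_X w)_S=1$ either $\tilde{S}$ is violated or some $j(S,i)$ is a minority copy. Because each qubit of $\tilde{C}$ lies in at most one copied X-stabiliser, distinct such $S$ can be charged injectively to distinct minority copies, giving $|H_X w| \le V_{\mathrm{cop}} + \sum_q a_q$.

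Combining, and using $V_{\mathrm{new}},V_{\mathrm{cop}}\le |\tilde{H}_X u|$,
\begin{equation*}
d\big(u,\ker(\tilde{H}_X)\big) \;\le\; \sum_q a_q + \frac{\qx N}{N_X\rho_Z}\Big(V_{\mathrm{cop}}+\sum_q a_q\Big) \;\le\; \Big(\frac{\qx}{2} + \frac{\qx N}{N_X\rho_Z} + \frac{\qx^2 N}{2N_X\rho_Z}\Big)\,|\tilde{H}_X u|,
\end{equation*}
which rearranges to the local-testability inequality $\tfrac{|\tilde{H}_X u|}{\tilde{N}_X}\ge\tilde{\rho}_Z\,\tfrac{d(u,\ker(\tilde{H}_X))}{\tilde{N}}$ with $\tilde{\rho}_Z = \tfrac{\tilde{N}}{\tilde{N}_X}\big(\tfrac{\qx}{2}+\tfrac{\qx N}{N_X\rho_Z}+\tfrac{\qx^2 N}{2N_X\rho_Z}\big)^{-1}$. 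Multiplying numerator and denominator by $\rho_Z$ and then crudely enlarging the denominator (using $\qx\ge 1$, so $\qx^2/2\le\qx^3$ and $\qx\rho_Z/2\le\qx\rho_Z$) to $\qx\rho_Z+\tfrac{N}{N_X}\qx(\qx^2+1)$ yields the claimed bound. I expect the two delicate points to be the inequality $\sum_q a_q\le\tfrac{\qx}{2}V_{\mathrm{new}}$ — i.e.\ pinpointing where the weak testability of the repetition code enters — and the charging argument for $|H_X w|$, which is exactly where the defining property of the copying construction (each qubit acted on by at most one copied X-stabiliser) is essential; the remainder is routine bookkeeping with the code dimensions.
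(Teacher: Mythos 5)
Your proof is correct, and it follows the same basic decomposition as the paper's (majority-vote each block, then lift a correction of the base code, so that $d(u,\ker\tilde{H}_X)\le f+\qx\,d(\hat u,\ker H_X)$ with $f=\sum_q a_q$), but the accounting afterwards is genuinely different. The paper runs a two-case analysis on whether $f$ dominates $\qx\,d(\hat u,\ker H_X)$, bounds the copied-stabiliser violations by discarding \emph{every} stabiliser of $C$ supported anywhere on a non-uniform block (at most $\qx f$ of them), and then optimises a parameter $\alpha$ to balance the two cases; this is where the $\qx(\qx^2+1)$ in the denominator comes from. You instead prove the single linear inequality $|H_X w|\le V_{\mathrm{cop}}+\sum_q a_q$ via the syndrome-parity identity $(H_Xw)_S=[\tilde S\text{ violated}]\oplus\sum_i e_i$ and an injective charging of the offending $S$ to minority copies (using that each qubit of $\tilde C$ lies in at most one copied stabiliser), then feed everything into one chain of inequalities with no case split or optimisation. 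This buys two things: the argument is shorter and parameter-free, and the count of "lost" stabilisers is $f$ rather than $\qx f$, so your denominator $\frac{\qx\rho_Z}{2}+\frac{N}{N_X}\qx+\frac{N}{N_X}\frac{\qx^2}{2}$ is strictly smaller than the paper's $\qx\rho_Z+\frac{N}{N_X}\qx(\qx^2+1)$, i.e.\ you prove a slightly stronger bound ($\qx^2$ rather than $\qx^3$ in the dominant term) before deliberately weakening it to match the stated lemma. Both delicate points you flag are handled correctly: $a_q\le\qx/2$ together with $a_q>0\Rightarrow k_q\ge1$ gives $\sum_q a_q\le\frac{\qx}{2}V_{\mathrm{new}}$, and the injectivity of the charging is exactly the defining property of the copying construction.
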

\begin{proof}
    Let us denote the quantum code before and after copying as $C$ and $\tilde{C}$ respectively. Consider a Z - operator on the new set of qubits represented by some bit string $u \in \mathbb{F}_2^{\tilde{N}}$. We label the $\tilde{N}$ physical qubits of the new code as $(q,j)$ where $q$ labels a qubit in $C$ and $j \in \{1, ..., \qx\}$. For a given $q$, we refer to all the qubits $(q,j)$ as a block of qubits in $\tilde{C}$. We can then label blocks of qubits in $\tilde{C}$ by $q$ i.e. the same set of labels as those that label qubits in $C$.

    We say that $u$ is uniform on a given block $q$ if

    \begin{equation}
        u(q,j) = 1 \text{ for some $j$ } \implies u(q,i) = 1 \text{ for all } i \in \{1, ..., \qx\}
    \end{equation}
    i.e. it is either all zeros or all ones on that block. Then, in order for a bit string $u$ to be in the code $\tilde{C}$, it is necessary that it is uniform on every block. This is because it must commute with all the stabilisers of the form $X_{(q,i)}X_{(q,i+1)}$. Given a bit string $u$ that is uniform on every block, we may define a bit string $\hat{u} \in \mathbb{F}_2^N$, where $\hat{u}(q) = u(q,j)$. Then $u$ is in the code if and only if it is uniform on every block and the corresponding $\hat{u}$ is in the code $C$ (the second condition ensures commutativity with the encoded stabilisers of $C$).

    We can put $u$ into the code $\tilde{C}$ as follows. For every block $q$ on which $u$ is not uniform, set $u(q,k) = 1$ for all $k$ if at least half of the bits in that block are 1, and set $u(q,k) = 0$ for all $k$ otherwise. In doing this, suppose that $f$ bits are flipped. The resulting bit string is uniform on every block and so we may define a corresponding $\hat{u}$. It is then possible to put $\hat{u}$ in the code $C$ by flipping as few bits as are needed. This entails flipping $d(\hat{u}, \ker H_X)$ bits in $\hat{u}$, where $\ker H_X$ refers to the classical code defined by the X - stabilisers of $C$. We may then finish putting $u$ in the code by flipping, for every bit $q$ in $\hat{u}$ that gets flipped, all the bits $(q,i)$ in $u$ for every $i$. Denoting also the classical code defined by the X - stabilisers of $\tilde{C}$ as $\ker \tilde{H}_X$, we then have

    \begin{equation}
        d(u, \ker \tilde{H}_X) \leq f + \qx d(\hat{u},\ker H_X).
    \end{equation}
    Let us consider two cases. First, consider the case for which $f \geq \alpha\qx d(\hat{u},\ker H_X)$, for some positive number $\alpha$ to be later determined. Then

    \begin{equation}
        d(u, \ker \tilde{H}_X) \leq f\left(\frac{\alpha+1}{\alpha}\right).
    \end{equation}
    When we flip the $f$ bits in $u$, there must be bits flipped in at least $\frac{f}{\qx}$ blocks. If $u$ is non-uniform in a given block $q$, at least one of the stabilisers of the form $X_{(q,i)}X_{(q,i+1)}$ is violated. Therefore, $u$ violates at least

    \begin{equation}
        \frac{f}{\qx} \geq \frac{\alpha}{\alpha+1}\frac{d(u,\ker \tilde{H}_X)}{\qx}\label{copyingZVio1}
    \end{equation}
    stabilisers. Now consider the case $f \leq \alpha \qx d(\hat{u},\ker H_X)$, for which we have

    \begin{equation}
        d(u, \ker \tilde{H}_X) \leq (1+\alpha)\qx d(\hat{u},\ker H_X).
    \end{equation}
    $\hat{u}$ is at a distance $d(\hat{u},\ker H_X)$ from $\ker H_X$, and therefore violates at least $\frac{\rho_ZN_X}{N}d(\hat{u},\ker H_X)$ X - stabilisers of $C$.

    Because there are $f$ bits flipped in $u$ in the first stage of putting $u$ in the code, there are bits flipped in at most $f$ blocks of $u$. This means that $u$ is non-uniform in at most $f$ blocks. Thus there are at least $N-f$ uniform blocks in $u$. On any uniform block $q$, we can deduce the value of $\hat{u}(q)$. Consider the set of such qubits in $C$ for which $u$ is uniform on the corresponding block. Call this set $S$, where we know that $|S| \geq N - f$. There are at most $\qx\left(N-|S|\right)$ X - stabilisers of $C$ supported anywhere outside of $S$, and therefore at least $N_X - \qx\left(N-|S|\right) \geq N_X - \qx f$ X - stabilisers of $C$ supported entirely within $S$. There are therefore at least

    \begin{equation}
        \frac{\rho_ZN_X}{N}d(\hat{u},\ker H_X) - \qx f \geq \left(\frac{\rho_ZN_X}{N} - \qx^2\alpha\right)d(\hat{u},\ker H_X)
    \end{equation}
     X - stabilisers of $C$ that are violated by $\hat{u}$ and supported entirely within $S$. For every such X - stabiliser of $C$, the corresponding encoded X - stabiliser in $\tilde{C}$ is violated by $u$. As such, we know that at least

    \begin{equation}
         \left(\frac{\rho_ZN_X}{N} - \qx^2\alpha\right)d(\hat{u},\ker H_X) \geq \left(\frac{\rho_ZN_X}{N} - \qx^2\alpha\right)\frac{d(u,\ker \tilde{H}_X)}{\qx(1+\alpha)}\label{copyingZVio2}
    \end{equation}
    X - stabilisers in $\tilde{C}$ are violated by $u$. We now choose $\alpha$ to make the right-hand sides of Equations \eqref{copyingZVio1} and \eqref{copyingZVio2} equal. This is $\alpha = \frac{\rho_ZN_X}{N\left(\qx^2+1\right)}$. From this, we have that at least

    \begin{equation}
        \frac{\rho_ZN_X}{\qx\rho_ZN_X + N\qx\left(\qx^2+1\right)}d(u,\ker \tilde{H}_X)
    \end{equation}
    stabilisers in $\tilde{C}$ are violated in both cases, from which we deduce the required soundness.
\end{proof}

\begin{lemma}
    Let parameters with and without a tilde be those of the code after and before copying respectively. Then

    \begin{equation}
        \tilde{\rho}_X \geq \qx\rho_X.
    \end{equation}
\end{lemma}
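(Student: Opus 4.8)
The plan is to track the $Z$-distances rather than the $X$-distances, since $\tilde{\rho}_X$ is the soundness of the classical code associated with the $Z$-stabilisers, $\ker(\tilde{H}_Z)$. Recall from the copying construction that the $Z$-stabilisers of $\tilde{C}$ are obtained by simply repeating every physical $Z$ of a $Z$-stabiliser of $C$ across all $\qx$ qubits in its block, i.e. $\tilde{H}_Z$ is (a relabelling of) $H_Z$ tensored with the all-ones column of length $\qx$. Thus for a bit string $u \in \mathbb{F}_2^{\tilde{N}}$ representing an $X$-operator on the new qubits, if we write $u$ in terms of blocks $(q,j)$, the syndrome $\tilde{H}_Z u$ depends only on the parities $p(q) := \bigoplus_{j} u(q,j)$ of $u$ within each block, and in fact $|\tilde{H}_Z u| = |H_Z p|$ where $p \in \mathbb{F}_2^N$ is the vector of block parities. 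So the number of violated $Z$-stabilisers of $\tilde{C}$ equals the number of $Z$-stabilisers of $C$ violated by $p$.

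First I would lower bound $|\tilde{H}_Z u|=|H_Z p| \geq \frac{\rho_X N_Z}{N}\, d(p, \ker H_Z)$ using the soundness of $C$ (here $\ker H_Z$ denotes the classical code associated with the $Z$-stabilisers of $C$). Next I would relate $d(p, \ker H_Z)$ to $d(u, \ker \tilde H_Z)$ from below: given $u$, one cheap way to put it in $\tilde{C}$ is to first correct the block parities of $p$ to land in $\ker H_Z$ by flipping $d(p, \ker H_Z)$ bits (one bit per block that needs its parity flipped), and then — since any bit string all of whose block parities are $0$ and whose parity-vector lies in $\ker H_Z$ is automatically in $\ker\tilde H_Z$ (it commutes with the copied $Z$-stabilisers by the parity condition, and with the new $X$-stabilisers... wait, those are $X$-type, irrelevant for an $X$-operator; commutation is with the $Z$-stabilisers only) — we are done after possibly nothing more. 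Actually more carefully: $u \in \ker\tilde H_Z$ iff $p \in \ker H_Z$, since the only $Z$-stabilisers of $\tilde C$ are the copied ones. Hence $d(u, \ker\tilde H_Z) = d(p, \ker H_Z)$: to the left, flipping one bit per offending block suffices and is necessary in the sense that each block-parity flip requires at least one bit flip, and once $p \in \ker H_Z$ the string is already in $\ker\tilde H_Z$. So in fact $d(u,\ker\tilde H_Z) = d(p, \ker H_Z)$ exactly — flipping a single bit $(q,1)$ per block changes exactly that block's parity.

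Combining, $|\tilde H_Z u| = |H_Z p| \geq \frac{\rho_X N_Z}{N} d(p,\ker H_Z) = \frac{\rho_X N_Z}{N} d(u, \ker\tilde H_Z)$. Since $\tilde N_Z = N_Z$ and $\tilde N = N\qx$, this reads $\frac{|\tilde H_Z u|}{\tilde N_Z} \geq \frac{\rho_X}{N} d(u,\ker\tilde H_Z) = \rho_X \qx \cdot \frac{d(u,\ker\tilde H_Z)}{\tilde N}$, which is exactly $\tilde\rho_X \geq \qx \rho_X$.

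The only subtle point — and the one I would double-check — is the claim $d(u,\ker\tilde H_Z) = d(p,\ker H_Z)$, i.e. that there is no cheaper correction of $u$ that exploits the block structure. The lower bound direction is the substantive one: any $z$ with $u + z \in \ker\tilde H_Z$ must have $p(u+z) = p(u) \oplus p(z) \in \ker H_Z$, and $|z| \geq |p(z)| \geq d(p(u), \ker H_Z)$ since $p(z)$ witnesses a correction of $p(u)$ of weight at most $|z|$. So the main (and really only) obstacle is just being careful that the $Z$-stabilisers of $\tilde C$ are exactly the copied ones, with no new $Z$-stabilisers constraining $u$ — which is immediate from the construction in Section \ref{copyingPrelimSection}. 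Everything else is bookkeeping with the dimensions $\tilde N_Z = N_Z$ and $\tilde N = N\qx$.
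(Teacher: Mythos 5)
Your proposal is correct and follows essentially the same route as the paper: your block-parity vector $p$ is exactly the paper's $\hat{u}$ (which the paper obtains after first reducing $u$ modulo the new $X$-stabilisers to have at most one support per block), and both arguments combine the soundness of $C$ applied to this reduced vector with the one-to-one correspondence of violated $Z$-stabilisers and the bookkeeping $\tilde{N} = N\qx$, $\tilde{N}_Z = N_Z$. Your observation that $d(u,\ker\tilde{H}_Z) = d(p,\ker H_Z)$ holds with equality is a mild strengthening of the inequality the paper actually uses, but the substance is identical.
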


\begin{proof}
    We use the same language and notation as in the previous lemma of blocks of qubits etc. Consider an X - operator on the new set of qubits represented by some bit string $u \in \mathbb{F}_2^{\tilde{N}}$. Without loss of generality, $u$ is supported at most once on each block. The reason for this is that $\tilde{C}$ has stabilisers of the form $X_{(q,i)}X_{(q,i+1)}$. For some $q$, the bit strings corresponding to these stabilisers for this $q$ may be added together to give any bit string of even weight on this block. Adding such a bit string to $u$ within each block $q$ can reduce the support of $u$ within that block to at most one point.

    From $u$, we may define a corresponding bit string $\hat{u} \in \mathbb{F}_2^N$. If $u$ is supported in some place on a block $q$, then we set $\hat{u}(q) = 1$. Otherwise, $\hat{u}(q) = 0$. Then, $u$ is in the code $\tilde{C}$ if and only if the corresponding $\hat{u}$ is in the code $C$. From this, we may put $\hat{u}$ in the code $C$ by flipping as few bits in it as are necessary. This will be $d(\hat{u},\ker H_Z)$, where $\ker H_Z$ denotes the classical code defined by the Z - stabilisers of $C$. We may then put $u$ in the code $\tilde{C}$ following the way that $\hat{u}$ is put in the code $C$: if a bit $q$ in $\hat{u}$ is flipped from 1 to 0, then flip the bit that is set to 1 for $u$ in the block $q$ to 0. If a bit $q$ in $\hat{u}$ is flipped from 0 to 1, flip any bit in $u$ in the block $q$ to 1. Thus, $d(u,\ker \tilde{H}_Z) \leq d(\hat{u}, \ker H_Z)$.

    $\hat{u}$ violates at least $\frac{\rho_XN_Z}{N}d(\hat{u},\ker H_Z)$ Z - stabilisers in $C$. For every Z - stabiliser that $\hat{u}$ violates in $C$, $u$ violates one in $\tilde{C}$. Thus, $u$ violates at least $\frac{\rho_XN_Z}{N}d(u,\ker \tilde{H}_Z)$ Z - stabilisers in $\tilde{C}$, from which we deduce the required soundness using $\tilde{N} = \qx N$ and $\tilde{N}_Z = N_Z$.
\end{proof}
\subsection{Thickening and Choosing Heights}\label{thickeningAndChoosingHeightsSection}

Our objective now is to have an analysis of soundness under the full ``Thickening and Choosing Heights'' construction used to reduce \qz. We recall from Section \ref{thickenAndChooseHeightsPrelimSection} that the soundness of a quantum code under the thickening portion is already known. It is therefore left for us to analyse the soundness of the quantum code under the choice of heights. We recall that our original quantum code chain complex $C = \left(\mathbb{F}_2^{N_Z}\overset{H_Z^T}{\longrightarrow}\mathbb{F}_2^{N}\overset{H_X}{\longrightarrow}\mathbb{F}_2^{N_X}\right)$ becomes the chain complex $\mathcal{C}_2'\overset{\partial_2}{\longrightarrow}\mathcal{C}_1'\overset{\partial_1}{\longrightarrow}\mathcal{C}_0'$ during thickening, as shown in Figure \ref{thickeningAndChoosingHeightsCC}, where $H$ is the usual parity-check matrix associated with a length $l$ classical repetition code, as shown in Equation \eqref{repCodePCM}. We then, in the height-choice step, for each basis element $v \in \mathbb{F}_2^{N_Z}$, pick one $k \in \{1, ..., l\}$, and remove all basis elements $v \otimes w_{k'}$ (where $\left(w_i\right)_{i=1}^l$ forms the standard basis for $\mathbb{F}_2^l$) in $\mathbb{F}_2^{N_Z} \otimes \mathbb{F}_2^l$ for which $k' \neq k$, but keeping $v \otimes w_k$. We therefore keep all the qubits and X - stabilisers of the thickened code, but keep only a subset of the Z - stabilisers. Therefore, we have immediately that

\begin{fact}
    $\tilde{\rho}_Z = \rho'_Z$, where $\tilde{\rho}_Z$ refers to the soundness of the Z - operators of the code after thickening and choosing heights and $\rho'_Z$ refers to the same for the code after thickening but before choosing heights.
\end{fact}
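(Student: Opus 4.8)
The plan is to observe that the ``choosing heights'' step is, by its very definition, nothing more than the deletion of a subset of the basis elements of $\mathcal{C}_2'$ — that is, discarding some Z\,-\,stabilisers of the thickened code — and that it touches neither the qubit space $\mathcal{C}_1'$, nor the X\,-\,stabiliser space $\mathcal{C}_0'$, nor the boundary map $\partial_1 : \mathcal{C}_1' \to \mathcal{C}_0'$ between them. Consequently the parity-check matrix describing the X\,-\,stabilisers is literally the same for the thickened code and for the thickened-and-height-chosen code, and hence so is the classical code associated with the X\,-\,stabilisers, namely $\ker(\partial_1)$.

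From there I would simply unwind the definitions. Recall from Section~\ref{locTestPrelims} that $\rho_Z$ denotes the soundness of the \emph{classical} code associated with the X\,-\,stabilisers; by Equation~\eqref{cLTCDef} this is the supremum over $\rho$ such that $\tfrac{|H_X u|}{N_X} \geq \rho\,\tfrac{d(u,\ker H_X)}{N}$ holds for every word $u$. Since the matrix $H_X$, the code $\ker(H_X)$, and the relevant dimensions (the qubit count and the X\,-\,stabiliser count) are all unchanged by the height choice, both sides of this inequality are, for each fixed $u$, exactly the same quantities before and after, so the extremal constant $\rho$ is the same number. This yields $\tilde\rho_Z = \rho_Z'$ at once.

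There is essentially no obstacle: the content is pure bookkeeping, and the only point meriting a moment's care is the convention (fixed in Section~\ref{locTestPrelims}) that the subscript $Z$ on the soundness refers to the code \emph{cut out by the X\,-\,stabilisers}, which is precisely the part of the chain complex left intact by the height choice — under the opposite convention the statement would be false, since deleting Z\,-\,stabilisers genuinely alters $\ker(H_Z)$. The reason the Fact is worth recording is that it lets us compose it with Lemma~\ref{thickeningSoundnessLemma} so that the $\tilde\rho_Z$ bound for the full ``thickening and choosing heights'' construction follows with no extra work, leaving only $\tilde\rho_X$ — where the deletion of Z\,-\,stabilisers does have a genuine effect — to be analysed in the remainder of this subsection.
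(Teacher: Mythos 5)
Your proposal is correct and matches the paper's reasoning: the paper simply notes that choosing heights keeps all qubits and X\,-\,stabilisers and removes only Z\,-\,stabilisers, and declares the fact immediate, which is exactly your observation that $H_X$, $\ker(H_X)$, $N$ and $N_X$ are all untouched so the classical soundness of the code associated with the X\,-\,stabilisers is literally the same quantity. Your extra remark about the $\rho_Z$/$\rho_X$ convention is a sensible clarification but not a departure from the paper's argument.
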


\begin{figure}[]
\begin{center}
\begin{tikzpicture}

\node at (2.5,0) {$\mathbb{F}_2^{N_Z}\otimes \mathbb{F}_2^{l-1}$};
\node at (5,2) {$\mathbb{F}_2^{N}\otimes \mathbb{F}_2^{l-1}$};
\node at (8.5,2) {$\mathbb{F}_2^{N_X}\otimes \mathbb{F}_2^{l-1}$};
\node at (11,0) {$\mathbb{F}_2^{N_X}\otimes \mathbb{F}_2^l$};
\node at (4.95,-2) {$\mathbb{F}_2^{N_Z}\otimes \mathbb{F}_2^l$};
\node at (8.5,-2) {$\mathbb{F}_2^{N}\otimes \mathbb{F}_2^l$};
\node at (5,0) {$\oplus$};
\node at (8.5,0) {$\oplus$};

\draw [-stealth](2.8,0.3) -- (4.35,1.75);
\draw [-stealth](2.8,-0.3) -- (4.35,-1.75);
\draw [-stealth](5.9,2) -- (7.525,2);
\draw [-stealth](5.8,-2) -- (7.7,-2);
\draw [-stealth](9.35,1.75) -- (10.8,0.3);
\draw [-stealth](9.25,-1.85) -- (10.8,-0.3);
\draw [-stealth](5.65,1.7) -- (8,-1.7);

\node at (3,1.2) {$H_Z^T \otimes I$};
\node at (2.85,-1.2) {$I \otimes H^T$};
\node at (6.75,-1.725) {$H_Z^T \otimes I$};
\node at (6.75,2.25) {$H_X \otimes I$};
\node [rotate = -56] at (7,0.25) {$I \otimes H^T$};
\node at (10.78,1.2) {$I \otimes H^T$};
\node at (10.75,-1.2) {$H_X \otimes I$};

\node at (2.5,-4) {$\mathcal{C}'_3$};
\node at (5,-4) {$\mathcal{C}'_2$};
\node at (8.5,-4) {$\mathcal{C}'_1$};
\node at (11,-4) {$\mathcal{C}'_0$};

\draw [-stealth](3,-4) -- (4.5,-4);
\draw [-stealth](5.5,-4) -- (8,-4);
\draw [-stealth](9,-4) -- (10.5,-4);

\node at (3.75,-3.7) {$\partial_3$};
\node at (6.75,-3.7) {$\partial_2$};
\node at (9.75,-3.7) {$\partial_1$};

\end{tikzpicture}
\end{center}\caption{}\label{thickeningAndChoosingHeightsCC}
\end{figure}

We now turn to the soundness of the X - operators.

\begin{lemma}
Let parameters with and without a tilde refer to those of the code after and before ``thickening and choosing heights'' respectively. Let parameters with a prime refer to those of the code after thickening but before choosing heights. Then

\begin{equation}
    \tilde{\rho}_X \geq \frac{N_Z'}{\tilde{N}_Z}\frac{\rho_X'}{1+\wz\qz l}.
\end{equation}
and since $\tilde{N}_Z \leq N'_Z$, we have $\tilde{\rho}_X \geq \Omega\left(\frac{\rho'_X}{\wz\qz l}\right)$.
\end{lemma}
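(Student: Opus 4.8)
The plan is to bound the distance of an X-operator from the height-chosen code in terms of its distance from the thickened (pre-height-choice) code, and simultaneously control how many X-stabilisers (which are unchanged) get violated. Since thickening and choosing heights keeps all qubits and X-stabilisers but deletes most Z-stabilisers, the X-soundness question is about the classical code $\ker(\partial_1)$: deleting Z-stabilisers enlarges $\ker(\partial_1^T)$, i.e. $\im(\partial_2)$ shrinks, so codewords of the classical X-code become \emph{harder} to reach, and $d(u,\ker\partial_1)$ can only grow. The key is to show it does not grow by more than a factor of roughly $\wz\qz l$.

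First I would take an arbitrary $u \in \mathcal{C}_1'$ and write it in the direct-sum decomposition of $\mathcal{C}_1' = \left(\mathbb{F}_2^{N}\otimes\mathbb{F}_2^{l-1}\right) \oplus \left(\mathbb{F}_2^{N_X}\otimes\mathbb{F}_2^{l}\right)$ coming from Figure \ref{thickeningAndChoosingHeightsCC}. Let $d' = d(u,\ker\partial_1)$ be its distance from the pre-height-choice X-code; pick a correction $z'$ with $|z'| = d'$ and $u + z' \in \ker\partial_1$. The problem is that $u+z'$ need not lie in the image of $\partial_2$ \emph{after} height choice, since $\partial_2$ has lost the Z-stabilisers $v\otimes w_{k'}$ with $k'\neq k$. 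But $\ker\partial_1$ itself is unchanged — what changes is which elements of it are trivial. So actually $d(u,\ker\partial_1)$ is literally unchanged after height choice; the quantity that matters for soundness is just the ratio $\frac{|\partial_1 u|}{\tilde N_X} \geq \tilde\rho_X \frac{d(u,\ker\partial_1)}{\tilde N}$, and since $\partial_1$, $\tilde N_X = N_X'$, $\tilde N = N'$ are all unchanged, we would get $\tilde\rho_X \geq \rho_X'$ for free. That cannot be the intended statement, so the subtlety must be that \emph{local testability is about violated stabilisers, and the number of X-stabilisers is unchanged but the claimed bound has a loss}; re-reading, I believe the actual content is the reverse direction — the soundness statement in the lemma quantifies how a correction that works for the thickened code can be \emph{rebuilt} using only the surviving Z-stabilisers when we instead think of the code dually, or the loss comes from needing to move support between heights.

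Concretely, then, I would argue as follows. Take $u$ representing an X-operator (a $1$-cochain). We want to put $u$ in the height-chosen code $\ker\tilde\partial_1$, but $\ker\tilde\partial_1 = \ker\partial_1'$ as sets, so take the optimal thickened correction $z'$ with $|z'| = \rho$-distance in the thickened code times appropriate factors; the real work is relating $|\partial_1 u|$ to $d(u, \ker\partial_1)$. Using $\rho_X'$ (known from Lemma \ref{thickeningSoundnessLemma}), $u$ violates at least $\frac{\rho_X' N_X'}{N'}\, d(u,\ker\partial_1)$ X-stabilisers in the thickened code. Each such violated X-stabiliser is unchanged by height choice, so it is still violated. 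The only slack is in the normalisation: the lemma's denominator $1 + \wz\qz l$ and the factor $\frac{N_Z'}{\tilde N_Z}$ suggest that the argument actually routes through putting $u$'s corresponding \emph{Z}-type partner in a dual code, where deleting Z-stabilisers forces extra flips — at most $\wz$ qubits per deleted stabiliser across $l$ heights, with $\qz$ stabilisers meeting a qubit — accounting for the $\wz\qz l$ factor. The main obstacle, and where I would spend the most care, is making precise which correction gets inflated and by exactly which combinatorial count: one must track, for each Z-stabiliser moved from its natural height to its chosen height $w_k$, how the boundary $H_Z(x_a)\otimes w_{k'}$ pieces get rerouted, bounding the induced change in the X-type correction by $\wz$ (stabiliser weight) times $\qz$ (column weight) times $l$ (range of heights), then feeding that into the two-case $\alpha$-balancing argument used in the copying proof to combine the "few stabilisers violated from the inflation" case with the "$\rho_X'$ gives many violations" case, and finally cleaning up the dimension factors $N_Z'/\tilde N_Z$ using $\tilde N_Z \leq N_Z'$ to get the stated $\Omega\!\left(\frac{\rho_X'}{\wz\qz l}\right)$.
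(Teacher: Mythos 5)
Your proposal has a genuine gap: it never engages with the actual difficulty of this lemma, and the one concrete claim it makes points at the wrong set of checks. Recall that $\rho_X$ is the soundness of the classical code associated with the \emph{Z}-stabilisers, so the quantity to be lower-bounded is the number of violated \emph{Z}-stabilisers of an X-operator $u$, and the relevant code is $\ker H_Z$ (the set of 1-cochains orthogonal to all Z-stabilisers), not $\ker\partial_1$. Your statement that ``each such violated X-stabiliser is unchanged by height choice, so it is still violated'' is true but irrelevant here (it would bear on $\rho_Z$, which the paper disposes of as a trivial Fact); the whole problem is that the checks being counted for $\rho_X$ are precisely the Z-stabilisers, most of which are deleted in the height-choice step, so a violated check in $C'$ may simply no longer exist in $\tilde{C}$.

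The paper's proof rests on two ingredients that are absent from your proposal. First, each removed Z-stabiliser $v\otimes w_m$ is exhibited as a product of kept stabilisers via $\partial_3(v\otimes p)=\left(v\otimes w_k+v\otimes w_m,\;H_Z^T(v)\otimes p\right)$; this shows the Z-stabiliser group (hence $\ker \tilde{H}_Z$ and $d(u,\ker\tilde{H}_Z)$) is unchanged, \emph{and} that an X-operator violating a removed stabiliser must violate at least one kept stabiliser in its decomposition. Second, distinct removed-but-violated stabilisers can share kept stabilisers in their decompositions, so one cannot count one kept violation per removed violation; the paper runs a greedy exclusion argument showing each removed stabiliser conflicts with at most $\wz\qz l$ others, which is exactly where the $1+\wz\qz l$ in the denominator comes from. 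Your proposal instead reverse-engineers this factor from the statement (``at most $\wz$ qubits per deleted stabiliser across $l$ heights, with $\qz$ stabilisers meeting a qubit'') and explicitly defers the combinatorial accounting as ``the main obstacle'' — but that accounting, together with the product-decomposition fact, \emph{is} the proof. As written, the proposal would not yield the lemma.
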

\begin{proof}
    Even though some Z - stabilisers are removed during the choosing of heights, it turns out that the thickened code, $C'$, and the code after thickening and choosing heights, $\tilde{C}$, have the same Z - stabiliser group. This fact was used by Hastings to determine the dimension and the distance of $\tilde{C}$, and we use it here to determine the soundness. The idea is as follows. Consider a stabiliser that is removed from the list of stabilisers of $C'$ when forming $\tilde{C}$, $v \otimes w_m$. We will show that this is still in the Z - stabiliser group of $\tilde{C}$, even if it is not now a ``stabiliser'' (i.e. not in the designated list of stabilisers). Indeed, consider an element $v \otimes p \in \mathcal{C}'_3 = \mathbb{F}_2^{N_Z}\otimes\mathbb{F}_2^{l-1}$, where $p$ is of the form $(0, 0, ..., 0, 0, 1, 1, ..., 1, 1, 0, 0, ..., 0, 0)$ and satisfies $H^T(p) = w_k + w_m$\footnote{For example, if $m > k$ then $p$ has 1's from the $k$'th to the $(m-1)$'th place.}. We then have

    \begin{equation}
        \partial_3(v \otimes p) = \left(v \otimes w_k + v \otimes w_m, H_Z^T(v) \otimes p\right).
    \end{equation}
    All of the basis elements in this linear combination represent Z - stabilisers in the thickened code. Note that there are two basis elements within $v \otimes w_k + v \otimes w_m$ but generally more than two in $H_Z^T(v) \otimes p$. Furthermore, the Z - stabiliser $v \otimes w_k$ is kept during the choosing of heights, as are all the Z - stabilisers within the linear combination of basis elements $H_Z^T(v) \otimes p$. This implies that the Z - stabiliser given by $v \otimes w_m$ is still in the stabiliser group after choosing heights because $\partial_2\partial_3(v \otimes p) = 0$ and so

    \begin{equation}
        \partial_2\left((v\otimes w_m,0)\right) = \partial_2\left(v\otimes w_k,H_Z^T(v) \otimes p\right).
    \end{equation}
    Thus, the Z - stabiliser corresponding to $v \otimes w_m$ is the same as the product of the Z - stabilisers corresponding to $v \otimes w_k$ and all those in the linear combination $H_Z^T(v) \otimes p$. Thus, $\ker H_Z'$ and $\ker \tilde{H}_Z$ are the same spaces, where the former refers to the classical code defined by the Z - stabilisers of the code after thickening (only), and the latter refers to the same thing after choosing heights.

    We may now consider an X - operator represented by some bit string $u \in \mathbb{F}_2^{\tilde{N}}$ i.e. some physical X - operator on the code after choosing heights. We may just as well consider this to be some physical X - operator on the code after thickening and indeed $d(u, \ker H_Z') = d(u, \ker \tilde{H}_Z)$ because, again, $\ker H_Z' = \ker \tilde{H}_Z$. In $C'$, $u$ will violate at least 

    \begin{equation}
        \frac{N_Z'}{N'}\rho_X'd(u, \ker H_Z')
    \end{equation}
    Z - stabilisers. The question now is how many of these stabilisers are removed when choosing heights. Consider a Z - stabiliser $v \otimes w_m$ that is violated in $C'$ but removed when forming $\tilde{C}$. At least one of the stabilisers in the linear combination of basis elements $\left(v\otimes w_k,H_Z^T(v) \otimes p\right)$ must be violated, because the Z - stabilizer corresponding to $v \otimes w_m$ is  the product of the Z - stabilizers corresponding to $\left(v\otimes w_k,H_Z^T(v) \otimes p\right)$, and the X - operator $u$ cannot anti-commute with a product of Z - stabilisers without anti-commuting with at least one of them. This observation will allow us to translate a number of violated stabilisers in $C'$ to a number of violated stabilisers in $\tilde{C}$, but there is one complication. Given two stabilisers violated in $C'$, $v_1 \otimes w_{m_1}$ and $v_2 \otimes w_{m_2}$, but removed when forming $\tilde{C}$, these two stabilisers may have some overlap in their linear combinations $\left(v_1\otimes w_{k_1},H_Z^T\left(v_1\right) \otimes p_1\right)$ and $\left(v_2\otimes w_{k_2},H_Z^T\left(v_2\right) \otimes p_2\right)$, and so we cannot necessarily conclude that for each stabiliser violated in $C'$ but removed when forming $\tilde{C}$, there is at least one violated in $\tilde{C}$. We must adjust this argument as follows.

    We first ask: when can there be overlap between the linear combinations $\left(v_1\otimes w_{k_1},H_Z^T\left(v_1\right) \otimes p_1\right)$ and $\left(v_2\otimes w_{k_2},H_Z^T\left(v_2\right) \otimes p_2\right)$? If $v_1 = v_2$, then $v_1\otimes w_{k_1} = v_2\otimes w_{k_2}$, and there is certainly overlap. Thus assume that $v_1 \neq v_2$. In this case, in order for there to be overlap between the two linear combinations, there must be some overlap between $H_Z^T(v_1)$ and $H_Z^T(v_2)$. Given a $v_1 \in \mathbb{F}_2^{N_Z}$, there may be at most $\wz\qz$ $v_2's$ such that $H_Z^T(v_1)$ and $H_Z^T(v_2)$ have some overlap.

    Thus, given a stabiliser $v_1 \otimes w_{m_1}$ that is violated in $C'$ but removed when forming $\tilde{C}$, we may conclude that there is some Z - stabiliser violated in $\tilde{C}$. Then, however, we \textit{exclude} all of the stabilisers $v_2 \otimes w_{m_2}$ that are violated in $C'$ but removed when forming $\tilde{C}$ that have some overlap in their linear combination with $v_1 \otimes w_{k_1}$. This means excluding at most $\wz\qz l$ stabilisers. After this, we may look at the non-excluded stabilisers of the form $v_3 \otimes w_{m_3}$ that are violated in $C'$ but removed when forming $\tilde{C}$. Picking one of these, its violation must imply the violation of some Z - stabiliser in $\tilde{C}$. We then exclude all of the stabilisers that have some overlap in their linear combination with $v_1 \otimes w_{m_1}$ or $v_3 \otimes w_{m_3}$, and continue in this way. Ultimately, we may conclude that having a certain number of stabilisers violated in $C'$ that are removed when forming $\tilde{C}$ implies the violation of at least a $\frac{1}{\wz\qz l}$ fraction of this many stabilisers in $\tilde{C}$.

    To conclude the proof, let us say that the X - operator $u$ violates $V_1 + V_2$ Z - stabilisers in $C'$, where $V_1$ is the number that are removed when forming $\tilde{C}$ and $V_2$ is the number that are kept. If $V_1 \leq \wz\qz l V_2$, then the number of Z - stabilisers violated in $\tilde{C}$ is at least 

    \begin{equation}
        V_2 \geq \frac{V_1 + V_2}{1+\wz\qz l} \geq \frac{N_Z'}{N'}\rho_X'\frac{d(u,\ker H_Z')}{1+\wz\qz l}
    \end{equation}
    and we get the required soundness in this case. If instead $V_1 \geq \wz\qz l V_2$, then the number of Z - stabilisers violated in $\tilde{C}$ is at least

    \begin{equation}
        \frac{V_1}{\wz\qz l} \geq \frac{V_1 + V_2}{1+\wz\qz l} \geq \frac{N_Z'}{N'}\rho_X'\frac{d(u,\ker H_Z')}{1+\wz\qz l}
    \end{equation}
    again giving the required soundness.
\end{proof}

\subsection{Coning}\label{coningSection}

We now provide a full analysis of soundness of a quantum code under the coning construction, which is the construction used to reduce \wz while preserving the constancy of all other parameters. This construction was presented in Section \ref{coningPrelimSection}. We will do this incrementally, first determining a lower bound on the soundness of the cone code in terms of that of the code pre-coning, and then determining a lower bound on the soundness of the reduced cone code. We recall that the reduced cone code is formed from the cone code first by (dually) thickening and choosing heights to reduce \qx, and then by cellulating the added 2-discs. We recall that we have an analysis of the soundness of a quantum code under the process of thickening and choosing heights from Lemma \ref{thickeningSoundnessLemma} and from Section \ref{thickeningAndChoosingHeightsSection}. It will therefore suffice to analyse the soundness of the cone code, and then analyse the soundness under the cellulation step. However, we mention that we will end up taking a larger value of $l$ for the thickening and choosing heights here than the one used by Hastings to reduce \qx to a constant, $l = \Theta\left(\log\left(\wz\right)^{2+2\epsilon}\wz^\epsilon\right)$, as in Lemma \ref{coningNonSoundnessLemma}. This will be for the benefit of our proof of soundness under the cellulation step, as we will go on to explain.

\subsubsection{Soundness of the Cone Code}

\begin{lemma}
Let parameters with and without a prime refer to those of the cone code and the code pre-coning respectively. Then

\begin{equation}
    \rho'_Z \geq \frac{N'}{N_X'}\frac{\rho_Z}{\wz\qx\wx\rho_Z + \frac{N}{N_X} + \wz\qx\wx\frac{N}{N_X}}.
\end{equation}
\end{lemma}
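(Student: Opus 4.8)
The plan is to follow the same two-case template used in the copying soundness proof: given a $Z$-operator $u$ on the qubits of the cone code, I specify an explicit procedure for ``putting $u$ in the cone code'' (flipping bits until it commutes with all $X$-stabilisers of $C_{cone}$), bound the number of bit flips by some quantity involving $d(\hat u,\ker H_X)$ of the projected operator $\hat u$ on the original qubits $\mathcal{C}_1$, and then argue that each of those flips is ``charged'' to a violated $X$-stabiliser — either a new one among the $\RRI$'s or an old one among $\mathcal{C}_0$.

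\medskip
\noindent\textbf{Key steps.}
First I would understand the structure of $\ker(\partial_1^{C_{cone}})^{T}$ — i.e. what it means for a $Z$-operator on $\mathcal{C}_1\oplus\bigoplus_i\XXI$ to commute with all $X$-stabilisers of the cone code. Writing $u = (u_0, (u_i)_{i=1}^{N_Z})$ with $u_0$ on $\mathcal{C}_1$ and $u_i$ on $\XXI$, commutation with the $X$-stabilisers in $\RRI$ forces each $u_i$ (viewed as an edge chain of the graph $G_i$) to be a closed edge chain; commutation with the $X$-stabilisers in $\mathcal{C}_0$ relates $u_0$ and the $u_i$ via the projection maps $\pi_q^{(i)}$ and $\pi_X^{(i)}$ together with $H_X$. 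The natural ``repair procedure'' is: (1) for each $i$, flip bits within $\XXI$ so that $u_i$ becomes a closed edge chain — this costs at most (number of violated $\RRI$-stabilisers worth of) flips, since a non-closed edge chain is exactly one violating some $-1$-cell check, and repairs can be done greedily paying one flip per violated check up to a factor controlled by $|R_i|\le\OO(|Q_i|\log|Q_i|)$ and the column weight bound $\OO(\log|Q_i|)$; (2) then put $\hat u_0 \in \mathcal{C}_1$ into $\ker H_X$ of the \emph{original} code by flipping $d(\hat u_0,\ker H_X)$ bits, and propagate these flips into the $\XXI$ copies, where each flipped original qubit $q$ forces flips in at most $\wz$ of the complexes $\BBARI$ and within each at most $\qx\wx$-many $0$-cells $(S,j)$ — hence at most $\wz\qx\wx\, d(\hat u_0,\ker H_X)$ additional flips in the auxiliary spaces; then re-close the $u_i$ if needed. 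This yields a bound $d(u,\ker\tilde H_X)\lesssim f + \wz\qx\wx\, d(\hat u_0,\ker H_X)$ where $f$ is the cost of step (1).

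\medskip
Then I would split into two cases exactly as in the copying lemma. If $f$ dominates, each unit of $f$ is chargeable to a violated $\RRI$-stabiliser (a non-closed edge chain violates some $-1$-cell check), giving many violated cone-code $X$-stabilisers. If instead $\wz\qx\wx\, d(\hat u_0,\ker H_X)$ dominates, then $\hat u_0$ is at distance $d(\hat u_0,\ker H_X)$ from $\ker H_X$ and so violates at least $\frac{\rho_Z N_X}{N}d(\hat u_0,\ker H_X)$ of the original $X$-stabilisers; these are (a subset of the) $X$-stabilisers of $C_{cone}$ in $\mathcal{C}_0$, and $u$ violates the corresponding cone-code stabiliser (one checks $\pi_X^{(i)}$ does not destroy the anticommutation, or more carefully one excludes the at most $\OO(1)$-per-qubit auxiliary contributions, analogously to the ``support inside $S$'' argument in the copying proof). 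Optimising the threshold $\alpha$ between the two cases to equalise the two lower bounds produces the stated expression, with the three denominator terms $\wz\qx\wx\rho_Z$, $\frac{N}{N_X}$, and $\wz\qx\wx\frac{N}{N_X}$ coming respectively from the case-balancing constant, the $\tilde N/\tilde N_X$ normalisation, and the cross term.

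\medskip
\noindent\textbf{Main obstacle.}
The hard part will be step (1): controlling the cost of re-closing the edge chains $u_i$ and ensuring the ``decongestion'' structure (simple-cycle basis of total weight $\OO(|Q_i|\log|Q_i|)$, each edge in $\OO(\log|Q_i|)$ basis cycles) does not blow up the flip count beyond a polynomial in the weights — and, more subtly, ensuring that when we flip bits in $\XXI$ we do not create \emph{new} violations of the $\mathcal{C}_0$-stabilisers that we then have to pay for again (i.e. that the repair procedure terminates with a controlled total cost rather than cascading). I expect this requires using the \emph{reasonableness} assumption (Claim \ref{reasonableClaim}) crucially: a $Z$-operator supported within a single $Q_i$ that commutes with all $X$-stabilisers must be a stabiliser, which is what prevents the auxiliary-space repairs from interacting badly with logical content and lets us localise the charging argument to each $i$ independently.
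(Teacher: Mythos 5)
Your two-case skeleton and the final optimisation over $\alpha$ match the paper, but the central step --- relating the cost of repairing the auxiliary components $u_2^{(i)} \in \XXI$ to the number of violated new X - stabilisers in \RRI --- is exactly where your argument has a gap, and you have correctly flagged it as the main obstacle without resolving it. Your proposed greedy repair, charging ``one flip per violated check up to a factor controlled by $|R_i|$ and the column weight,'' amounts to asserting a soundness bound for the classical codes $\ker\partial_0^{\BBARI}$, which is not available and is not what the decongestion lemma provides; nor is it clear your propagation step terminates without cascading. The paper sidesteps all of this: since \BBARI has trivial zeroth homology, the element of $\ker\partial_0^{\BBARI}$ nearest to $u_2^{(i)}$ lies in $\im\partial_1^{\BBARI}$, so one may add a Z - stabiliser supported on \QQI to assume without loss of generality that $\left|u_2^{(i)}\right| = d(u_2^{(i)},\ker\partial_0^{\BBARI})$, and in particular $u_2^{(i)} = 0$ if and only if it satisfies all the \RRI checks. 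The repair is then simply to zero out $u_2$ and afterwards fix $u_1$, giving $d(u,\ker H_X') \leq d(u_1,\ker H_X)+f$ with no $\wz\qx\wx$ factor in the distance bound; the charging in the $f$-dominant case is coarse: each block \XXI on which $u_2$ is nonzero contributes at least one violated \RRI - check, and there are at least $f/(\wz\qx\wx)$ such blocks because $|X_i|\leq\wz\qx\wx$. That is where the $\wz\qx\wx$ in the stated bound actually enters --- not from propagating flips of $u_1$ into the auxiliary spaces as in your accounting.

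Two smaller corrections. First, $\ker\partial_0^{\BBARI}$ is not the space of closed edge chains of $G_i$: closed edge chains are the co-closed $0$-chains $\ker(\partial_1^{\BBARI})^T$, whereas $\ker\partial_0^{\BBARI} = \im\partial_1^{\BBARI}$ is the cut space (sums of vertex stars) --- and it is precisely this identification that makes the without-loss-of-generality reduction by \QQI - stabilisers possible. Second, reasonableness is not used for this lemma; the paper invokes Claim \ref{reasonableClaim} only for the companion bound on $\rho_X'$.
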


\begin{proof}
For ease of reference, the structure of the chain complex of the cone code is shown in Figure \ref{coneCodeChainComplex}.

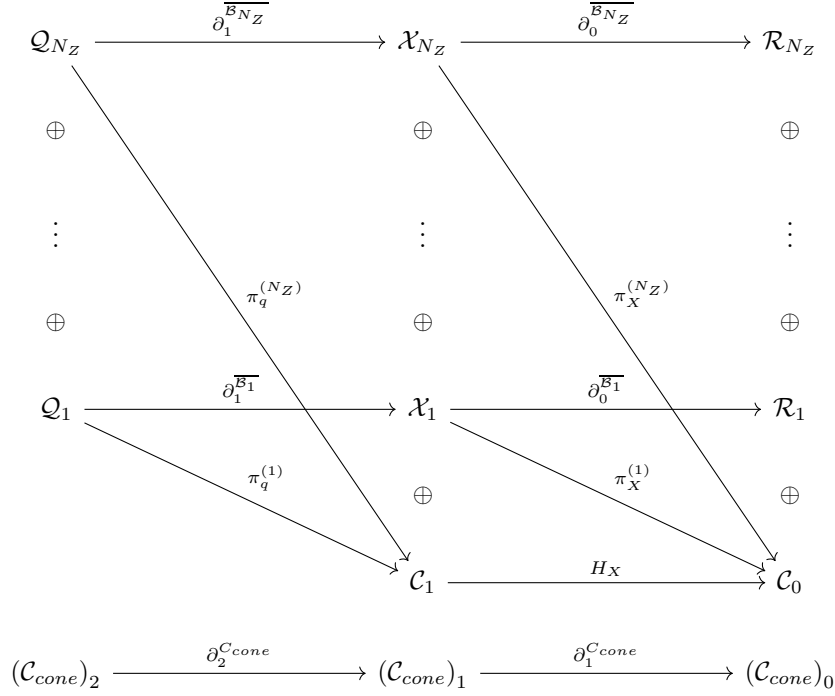
\begin{figure}[h]
\begin{center}
\begin{tikzcd}
\mathcal{Q}_{N_Z} \arrow[rrrr, "\partial^{\overline{\mathcal{B}_{N_Z}}}_1"] \arrow[rrrrdddddd, "\pi_q^{(N_Z)}"] &  &  &  & \mathcal{X}_{N_Z} \arrow[rrrr, "\partial^{\overline{\mathcal{B}_{N_Z}}}_0"] \arrow[rrrrdddddd, "\pi_X^{(N_Z)}"] &  &  &  & \mathcal{R}_{N_Z}                 \\
\oplus                                                                                                          &  &  &  & \oplus                                                                                                          &  &  &  & \oplus                            \\
\vdots                                                                                                          &  &  &  & \vdots                                                                                                          &  &  &  & \vdots                            \\
\oplus                                                                                                          &  &  &  & \oplus                                                                                                          &  &  &  & \oplus                            \\
\mathcal{Q}_1 \arrow[rrrr, "\partial^{\overline{\mathcal{B}_{1}}}_1"] \arrow[rrrrdd, "\pi_q^{(1)}"]             &  &  &  & \mathcal{X}_1 \arrow[rrrr, "\partial^{\overline{\mathcal{B}_{1}}}_0"] \arrow[rrrrdd, "\pi_X^{(1)}"]             &  &  &  & \mathcal{R}_1                     \\
                                                                                                                &  &  &  & \oplus                                                                                                          &  &  &  & \oplus                            \\
                                                                                                                &  &  &  & \mathcal{C}_1 \arrow[rrrr, "H_X"]                                                                               &  &  &  & \mathcal{C}_0                     \\
\left(\mathcal{C}_{cone}\right)_2 \arrow[rrrr, "\partial^{C_{cone}}_2"]                                         &  &  &  & \left(\mathcal{C}_{cone}\right)_1 \arrow[rrrr, "\partial^{C_{cone}}_1"]                                         &  &  &  & \left(\mathcal{C}_{cone}\right)_0
\end{tikzcd}
\end{center}\caption{The chain complex of the cone code. The spaces of the cone code, $\left(\mathcal{C}_{cone}\right)_i$, can be read off by ``direct summing vertically''. The boundary maps can also be determined by reading off the maps vertically and summing images in the same space. $\partial_1^{\overline{\mathcal{B}_i}}$ and $\partial_0^{\overline{\mathcal{B}_i}}$ are the boundary maps of the chain complex \BBARI, while $H_X$ is a boundary map coming from the chain complex of the pre-coning code. $\pi_q^{(i)}$ and $\pi_X^{(i)}$ are the obvious projection maps.}\label{coneCodeChainComplex}
\end{figure}
Let us consider some physical Z - operator in the cone code, represented by some bit string $u \in \left(\mathcal{C}_{cone}\right)_1$. Hereafter, we will refer to the cone code as $C'$, rather than $C_{cone}$, for brevity. Let us write $u = (u_1, u_2)$ for $u_1 \in \mathcal{C}_1$ and $u_2 = \bigoplus_{i=1}^{N_Z}u_2^{(i)} \in \bigoplus_{i=1}^{N_Z}\XXI$, where $u_2^{(i)} \in \XXI$. For each $\BBARI$, consider $d(u_2^{(i)}, \ker \partial_0^{\BBARI})$. $\exists$ $v_2^{(i)} \in \ker \partial_0^{\BBARI}$ such that $\left|u_2^{(i)} + v_2^{(i)}\right| = d(u_2^{(i)}, \ker \partial_0^{\BBARI})$. \BBARI has trivial zero-th homology, and therefore $v_2^{(i)} \in \im\partial_1^{\BBARI}$ i.e. $\exists$ $w_2^{(i)} \in \QQI$ such that $\partial_1^{\BBARI}w_2^{(i)} = v_2^{(i)}$. Adding the Z - stabiliser corresponding to $\bigoplus_{i=1}^{N_Z} w_2^{(i)}$ then shows us that without loss of generality, we may say that our Z - operator $u$ satisfies

\begin{equation}
    \left|u_2^{(i)}\right| = d(u_2^{(i)}, \ker \partial_0^{\BBARI})
\end{equation}
and, in particular, $u_2^{(i)} \in \ker \partial_0^{\BBARI} \iff u_2^{(i)} = 0$.

We may then put $u$ in the cone code as follows. Suppose that there are exactly $f$ bits in $u_2$ set to 1. We flip all of these bits to 0. Then, we flip the smallest number of bits in $u_1$ required to put it in $\ker H_X$, which will be $d(u_1, \ker H_X)$ bits. The result will be in the cone code. Therefore, we have

\begin{equation}
    d(u, \ker H'_X) \leq d(u_1, \ker H_X) + f
\end{equation}
where $H'_X$ is the parity-check matrix associated with the X - stabilisers of the cone code, i.e. $H'_X = \partial_1^{C_{cone}}$.

Consider first the case that $f \leq \alpha d(u_1,\ker H_X)$ for some positive number $\alpha$ to be determined. Then

\begin{equation}
    d(u, \ker H'_X) \leq (1 + \alpha)d(u_1, \ker H_X).
\end{equation}
$u_1$ violates at least $\rho_Z\frac{N_X}{N}d(u_1, \ker H_X)$ X - stabilisers in the pre-coning code, $C$. Consider the $f$ bits that get flipped from 1 to 0 in $u_2$ when we put $u$ in the code. Each of these bits is in some set $X_i$ (the set of basis elements of \XXI) and is labelled by some tuple $(T,j)$ for T some X - stabiliser of $C$ and j some qubit pair. Consider the following subset, S, of the X - stabilisers of $C$: an X - stabiliser, $T$, is in S if none of the $f$ bits in $u_2$ contain $T$ in their tuple. We have $|S| \geq N_X - f$ and then the number of X - stabilisers of $C$ that are violated by $u_1$ and that are in S is at least

\begin{align}
    \rho_Z\frac{N_X}{N}d(u_1, \ker H_X) - f &\geq \left(\rho_Z\frac{N_X}{N} - \alpha\right)d(u_1, \ker H_X)\\
    &\geq \frac{1}{1+\alpha}\left(\rho_Z\frac{N_X}{N} - \alpha\right)d(u, \ker H_X')\label{coningSoundnessVio1}.
\end{align}
At least this many X - stabilisers are violated by $u$ in the cone code, because an X - stabiliser in $C$ violated by $u_1$ which is also in $S$ must be violated by $u$ (as a basis element of $\mathcal{C}_0$ in the cone code).

Consider now the case where $f \geq \alpha d(u_1, \ker H_X)$. Then

\begin{equation}
    d(u, \ker H_X') \leq f\left(\frac{1+\alpha}{\alpha}\right).
\end{equation}
There are at most $\wz\qx\wx$ elements in $X_i$. Therefore, there are at least $\frac{f}{\wz\qx\wx}$ of the \XXI on which $u_2$ is supported. If $u_2$ is supported in some \XXI then there must be at least one violated stabiliser in \RRI, again because $u_2^{(i)} = 0 \iff u_2^{(i)} \in \ker \partial_0^{\BBARI}$. There are therefore at least

\begin{equation}
    \frac{f}{\wz\qx\wx} \geq \frac{\alpha}{\alpha+1}\frac{1}{\wz\qx\wx}d(u, \ker H_X')\label{coningSoundnessVio2}
\end{equation}
X - stabilisers violated by $u$ in $C'$. We let $\alpha$ be such that the right-hand sides of Equation \eqref{coningSoundnessVio1} and \eqref{coningSoundnessVio2} are equal, which turns out to be

\begin{equation}
    \alpha = \frac{\wz\qx\wx}{1+\wz\qx\wx}\frac{\rho_ZN_X}{N}.
\end{equation}
With this choice, there are at least

\begin{equation}
\frac{\rho_Z}{\wz\qx\wx\rho_Z + \frac{N}{N_X}+\wz\qx\wx\frac{N}{N_X}}d(u, \ker H_X')
\end{equation}
X - stabilisers violated by $u$, from which we deduce the required soundness.
\end{proof}

\begin{lemma}
    Let parameters with and without a prime refer to those of the cone code and the code pre-coning respectively. Then, assuming that the code pre-coning is reasonable,

    \begin{equation}
        \rho_X' \geq \frac{N'}{N_Z'}\frac{N_Z}{N}\frac{\rho_X}{1+\frac{N_Z}{N}\rho_X\wz\qx\wx + \qz\wz\qx\wx}.
    \end{equation}
\end{lemma}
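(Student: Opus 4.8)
The plan is to argue, as with the previous lemmas in the weight reduction section, by taking a physical $X$-operator $u$ on the cone code, describing an explicit process for putting $u$ into the cone code (i.e.\ into $\ker H_Z'$ where $H_Z' = (\partial_2^{C_{cone}})^T$), and then translating the number of bits flipped into a lower bound on the number of violated $Z$-stabilisers. Since the $Z$-stabilisers of the cone code live in $\bigoplus_{i=1}^{N_Z}\QQI$ and the boundary map $\partial_2^{C_{cone}}$ combines $\partial_1^{\BBARI}$ with the projections $\pi_q^{(i)}$ into $\mathcal{C}_1$, I would first decompose $u = (u_1, u_2)$ with $u_1\in\mathcal{C}_1$ and $u_2 = \bigoplus_i u_2^{(i)}$, $u_2^{(i)}\in\XXI$. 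The subtlety is that $u$ being in the cone code is \emph{not} simply $u_1$ being in $\ker H_X$; commutativity with the $Z$-stabilisers in $\QQI$ couples $u_1$ and $u_2^{(i)}$. The key structural input here is the \emph{reasonable} hypothesis: the support $Q_i$ of each $Z$-stabiliser is chosen so that no $Z$-logical lies inside it, which via Claim~\ref{reasonableClaim} should let me control how $u_1$ restricted to $Q_i$ interacts with the auxiliary complex $\BBARI$.

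The main steps I envisage are: (1) a without-loss-of-generality reduction, adding $X$-coboundaries (members of the $X$-stabiliser group of the cone code) to simplify $u$ — in particular using the $X$-stabilisers from $\XXI$ (which project to old $X$-stabilisers and also map into $\RRI$) to clean up $u_2$, analogously to how the copying proof used $X_{(q,i)}X_{(q,i+1)}$ stabilisers. (2) An explicit ``putting $u$ in the code'' procedure: first fix up $u_1$ by flipping $d(u_1,\ker H_Z)$ bits to land in the pre-coning $Z$-code $\ker H_Z$, then for each block $i$ correct $u_2^{(i)}$ using at most $O(|Q_i|) = O(\wz)$ flips per violated constraint — this should give $d(u,\ker H_Z') \lesssim d(u_1,\ker H_Z) + (\text{something}\cdot |u_2|)$, with the $\wz\qx\wx$ factors entering because each element of $\XXI$ is labelled by a stabiliser/pair and there are $O(\wz\qx\wx)$ of them per block. (3) The usual two-case split on whether $d(u_1,\ker H_Z)$ or $|u_2|$ dominates, optimising over a parameter $\alpha$; in the first case $u_1$ violates $\geq \tfrac{N_Z}{N}\rho_X d(u_1,\ker H_Z)$ $Z$-stabilisers of $C$, each lifting to a violated $Z$-stabiliser of the cone code (those supported away from the flipped part of $u_2$), and in the second case nonzero $u_2^{(i)}$ forces violated $Z$-stabilisers in $\QQI$. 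Collecting the bounds and setting $\alpha$ to equalise the two sides should produce exactly the claimed denominator $1 + \tfrac{N_Z}{N}\rho_X\wz\qx\wx + \qz\wz\qx\wx$, with the extra $\qz$ relative to the $\rho_Z'$ bound coming from the column weight of $H_Z$ bounding how many $Z$-stabilisers of the cone code sit on a given qubit of $\mathcal{C}_1$.

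The hard part will be step (2) combined with the reasonableness hypothesis: making precise why, after the w.l.o.g.\ reduction, a small $u_2$ means $u_1$ is already ``almost compatible'' block by block, and conversely why flipping bits in $u_2$ genuinely costs violated $Z$-stabilisers in $\QQI$ without double-counting across overlapping blocks — this is the same overlap issue that appeared in the thickening-and-choosing-heights proof, where two blocks $i,i'$ can share qubits of $\mathcal{C}_1$, and I expect to handle it by the same exclusion argument (pick a violated constraint, exclude the $O(\qz\wz\qx\wx)$ others overlapping it, iterate), which is precisely where the $\qz\wz\qx\wx$ term in the denominator originates. I would also need to double-check that the ``reasonable'' assumption is exactly what guarantees that a $Z$-operator supported inside a single $Q_i$ that commutes with all $X$-stabilisers is a product of the low-weight $Z$-stabilisers of the cone code coming from that block, so that the correction in step (2) is always available at bounded cost.
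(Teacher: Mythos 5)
Your plan follows essentially the same route as the paper's proof: decompose $u=(u_1,u_2)$, observe that membership in the cone code is the blockwise condition $q_i=\left(\partial_1^{\overline{\mathcal{B}_i}}\right)^T\left(u_2^{(i)}\right)$, put $u_1$ into $\ker H_Z$ first, invoke reasonableness to guarantee the $u_2$-correction exists, and run the two-case split in $\alpha$ counting violated $Z$-stabilisers in the $\mathcal{Q}_i$. Your flagged ``double-check'' on reasonableness is exactly the paper's argument: by Claim~\ref{reasonableClaim} the product of $Z$ over a connected component of $G_i$ lies in the stabiliser group, so the corrected $u_1'\in\ker H_Z$ has even overlap with every component, and connectedness then supplies the edge set $u_2'^{(i)}$.

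One mechanism in your plan is misattributed, though the parameter dependence you predict is right. The $\qz\wz\qx\wx$ term does not come from a thickening-style exclusion argument against double-counting: the spaces $\mathcal{Q}_i$ are disjoint summands, so violated $Z$-stabilisers credited to different blocks are automatically distinct and there is nothing to exclude. The actual issue in the case where $f$ dominates is that a block $i$ may have bits flipped in $\mathcal{X}_i$ merely because $q_i$ itself changed when $u_1$ was corrected to $u_1'$; such blocks witness no violation. The paper discounts these by noting that at most $\qz\, d(u_1,\ker H_Z)\leq \qz f/\alpha$ stabilisers of $C$ have support on the flipped $\mathcal{C}_1$ bits, leaving at least $\left(\frac{1}{\wz\qx\wx}-\frac{\qz}{\alpha}\right)f$ blocks that genuinely force a violated $Z$-stabiliser in $\mathcal{Q}_i$; the $\qz\wz\qx\wx$ then enters through the optimised $\alpha=\left(\frac{N_Z}{N}\rho_X+\qz\right)\wz\qx\wx$. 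Also, your proposed preliminary w.l.o.g.\ cleanup of $u_2$ is unnecessary (and as stated slightly confused, since $\mathcal{X}_i$ is a qubit space, not a set of stabilisers); the paper works with $u$ directly.
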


\begin{proof}
    Consider some X - operator in the cone code represented by some bit string $u \in \left(\mathcal{C}_{cone}\right)_1$ where, again, we will refer to the cone code as $C'$ rather than $C_{cone}$ from here on. We again write $u = (u_1, u_2)$ where $u_1 \in \mathcal{C}_1$ and $u_2 = \bigoplus_{i=1}^{N_Z}u_2^{(i)} \in \bigoplus_{i=1}^{N_Z}\XXI$, for $u_2^{(i)} \in \XXI$. Given $u_1$, consider the projection of $u_1$ into $\mathcal{Q}_i$. Call it $q_i$. Then $u$ is in the cone code if and only if

    \begin{equation}
        q_i = \left(\partial_1^{\BBARI}\right)^T\left(u_2^{(i)}\right).
    \end{equation}
    We note that this is only possible if $q_i$ has even support for every $i$, because the coboundary of every basis element in \XXI has size 2 and $q_i$ is expressed as a linear combination of such coboundaries. Note further that $q_i$ has even support if and only if $u_1$ satisfies the stabiliser labelled by $i \in \{1, ..., N_Z\}$. With $H_Z$ the parity-check matrix associated with the Z - stabilisers of the pre-coning code, $u_1$ violates at least $\frac{N_Z}{N}\rho_Xd(u_1,\ker H_Z)$ Z - stabilisers in the pre-coning code, $C$.

    We may now put $u_1$ in the code $C'$ as follows. First flip $d(u_1,\ker H_Z)$ bits in $u_1$ to put $u_1$ in the pre-coning code. Let the codeword of $\ker H_Z$ that it becomes be denoted $u_1'$. Let the projection of $u_1'$ into $\mathcal{Q}_i$ be $q_i'$. We want to argue that $\exists$ $u_2'^{(i)} \in \XXI$ such that $q_i' = \left(\partial_1^{\BBARI}\right)^T\left(u_2'^{(i)}\right)$. This is true, as follows.
    
    First, $q_i'$ must have even support because $u_1'$ is a codeword of $\ker H_Z$ and so commutes with all stabilisers of $C$. Consider the graph $G_i$ as defined in Section \ref{coningPrelimSection}. $q_i'$ represents a set of vertices in that graph of even size. Therefore, if $G_i$ is connected, we can always find a set of edges in $G_i$ (i.e. an element of \XXI), $u_2'^{(i)}$, such that $q_i' = \left(\partial_1^{\BBARI}\right)^T\left(u_2'^{(i)}\right)$. What if $G_i$ is disconnected? We need to have it such that $q_i'$ has even support on every connected component of $G_i$, or we will not be able to find any valid $u_2'^{(i)}$. However, $q_i'$ must have even support on every connected component of $G_i$ because of the following claim.

    \begin{claim}
        The product of Z's over all the vertices (qubits) in a connected component of $G_i$ is a member of the stabiliser group of $C$.
    \end{claim}

    \begin{proof}
        The Z - operator in question has an even overlap with every X - stabiliser of $C$ by construction of $G_i$. By Claim \ref{reasonableClaim}, this Z - operator is a member of the stabiliser group of $C$ because it is supported entirely on a subset of some Z - stabiliser, and we assume the code to be reasonable.
    \end{proof}

    The X - operator represented by the bit string $u_1'$ must therefore commute with the product of Z - operators over the vertices of any connected component of $G_i$, because $u_1'$ is a codeword of $\ker H_Z$, and as a result $q_i'$ must have even support on every connected component of $G_i$. Thus, indeed $\exists$ $u_2'^{(i)} \in \XXI$ such that $\left(\partial_1^{\BBARI}\right)^T\left(u_2'^{(i)}\right) = q_i'$.

    Having flipped $d(u_1, \ker H_Z)$ bits in $\mathcal{C}_1$ to turn $u_1$ into $u_1'$, we finish putting $u$ in the cone code by flipping the minimum number of bits in $u_2$ required to put $(u_1', u_2)$ in the cone code, which we have demonstrated is indeed possible by turning each $u_2^{(i)}$ into $u_2'^{(i)}$. Suppose that in doing this, we flip $f$ bits in $u_2$. Then

    \begin{equation}
        d(u, \ker H_Z') \leq d(u_1, \ker H_Z) + f
    \end{equation}
    where $H_Z'$ is the parity-check matrix associated with the Z - stabilisers of the cone code. Let us suppose first that $f \leq \alpha d(u_1, \ker H_Z)$ for some positive number $\alpha$ to be determined. Then $d(u, \ker H_Z') \leq (1+\alpha)d(u_1, \ker H_Z)$.

    As mentioned, $u_1$ violates at least $\frac{N_Z}{N}\rho_Xd(u_1, \ker H_Z)$ Z - stabilisers in $C$. Then, for each of these stabilisers labelled by some $i \in \{1, ..., N_Z\}$, $q_i$ has odd support so $\nexists$ $u_2^{(i)}$ such that $q_i = \left(\partial_1^{\BBARI}\right)^T\left(u_2^{(i)}\right)$ and therefore for each such $i$, at least one Z - stabiliser of $C'$ in \QQI is violated. Thus, at least

    \begin{equation}
        \frac{N_Z}{N}\rho_Xd(u_1, \ker H_Z) \geq \frac{1}{1+\alpha}\frac{N_Z}{N}\rho_Xd(u, \ker H_Z')\label{coningXVio1}
    \end{equation}
    Z - stabilisers are violated in $C'$.

    Suppose now that $f \geq \alpha d(u_1, \ker H_Z)$. Then $d(u, \ker H_Z') \leq f\left(\frac{\alpha+1}{\alpha}\right)$. Consider the set $S$ of Z - stabilisers of $C$ which have no support on any of the $d(u_1, \ker H_Z)$ bits that get flipped to turn $u_1$ into $u_1'$. Since at most $\qz d(u_1, \ker H_Z)$ Z - stabilisers may have support on those bits, $|S| \geq N_Z - \qz d(u_1, \ker H_Z)$. Furthermore, since $|X_i| \leq \wz\qx\wx$, there must be at least $\frac{f}{\wz\qx\wx}$ sets \XXI in which bits get flipped. As such, there are at least

    \begin{equation}
        \frac{f}{\wz\qx\wx}-\qz d(u_1, \ker H_Z) \geq \left(\frac{1}{\wz\qx\wx}-\frac{\qz}{\alpha}\right)f
    \end{equation}
    Z - stabilisers for which there are bits flipped in the corresponding \XXI and the bits on which they are supported are not flipped in $C_1$ to turn $u_1$ into $u_1'$. For any such Z - stabiliser labelled by $i \in \{1, ..., N_Z\}$, there must be at least one Z - stabiliser of the cone code in \QQI violated, because it means that while $q_i$ did have even support, $u_2^{(i)}$ did not satisfy $q_i = \left(\partial_1^{\BBARI}\right)^T\left(u_2^{(i)}\right)$. This is true because $f$ is the minimum number of bits we flip in $u_2$ to turn $(u_1',u_2)$ into a codeword of $\ker H_Z'$, and therefore bits would not be flipped in some \XXI for which $u_2^{(i)}$ did already satisfy $q_i = \left(\partial_1^{\BBARI}\right)^T\left(u_2^{(i)}\right)$. There are therefore at least

    \begin{equation}
        \left(\frac{1}{\wz\qx\wx}-\frac{\qz}{\alpha}\right)f \geq \left(\frac{1}{\wz\qx\wx}-\frac{\qz}{\alpha}\right)\frac{\alpha}{\alpha+1}d(u, \ker H_Z').\label{coningXVio2}
    \end{equation}
    Z - stabilisers violated by $u$ in $C'$. We then let $\alpha$ be such that the right-hand sides of Equations \eqref{coningXVio1} and \eqref{coningXVio2} are equal. This is $\alpha = \left(\frac{N_Z}{N}\rho_X + \qz\right)\wz\qx\wx$ and then the number of Z - stabilisers in $C'$ violated by $u$ is at least

    \begin{equation}
        \frac{1}{1+\frac{N_Z}{N}\wz\qx\wx + \qz\wz\qx\wx}\frac{N_Z}{N}\rho_Xd(u, \ker H_Z')
    \end{equation}
    from which we deduce the required soundness.
\end{proof}

\subsubsection{Soundness of the Reduced Cone Code}\label{reducedConeCodeSoundnessSection}

We will now establish a lower bound on the soundness of the reduced cone code in terms of that of the cone code. As discussed, forming the reduced cone code comes with steps of ``thickening and choosing heights'' and ``cellulating'', but we already understand the change in soundness under the ``thickening and choosing heights'' step. There is a mild complication in the fact that we only ``choose heights'' for the X - stabilisers in $\RRI \otimes \mathcal{E}_0$, not those in $\mathcal{C}_0\otimes \mathcal{E}_0$, but we can easily take account of this by changing the value of $N_X$ used. The chain complex of the cone code after thickening and choosing heights is depicted in Figure \ref{thickendAndHeightChosenConeCodeChainComplex}.
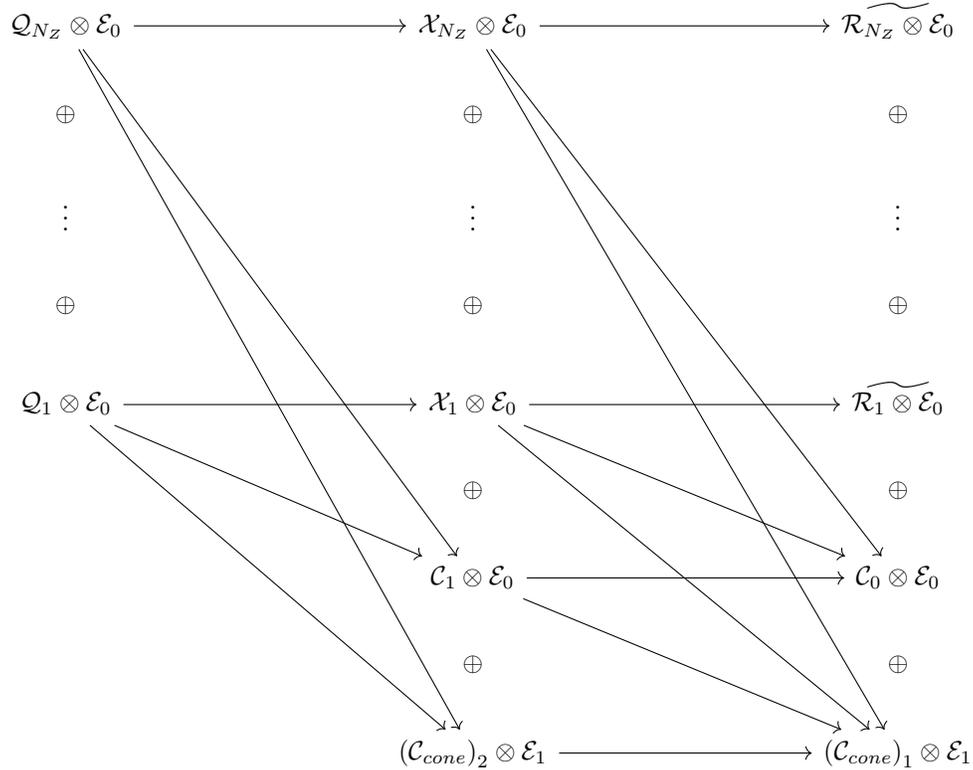
\begin{figure}[h]
\begin{center}
    
        \begin{tikzcd}
\mathcal{Q}_{N_Z}\otimes\mathcal{E}_0 \arrow[rrrr] \arrow[rrrrdddddd] \arrow[rrrrdddddddd] &  &  &  & \mathcal{X}_{N_Z}\otimes\mathcal{E}_0 \arrow[rrrr] \arrow[rrrrdddddd] \arrow[rrrrdddddddd] &  &  &  & \widetilde{\mathcal{R}_{N_Z}\otimes\mathcal{E}_0}     \\
\oplus                                                                                     &  &  &  & \oplus                                                                                     &  &  &  & \oplus                                                \\
\vdots                                                                                     &  &  &  & \vdots                                                                                     &  &  &  & \vdots                                                \\
\oplus                                                                                     &  &  &  & \oplus                                                                                     &  &  &  & \oplus                                                \\
\mathcal{Q}_1\otimes\mathcal{E}_0 \arrow[rrrrdddd] \arrow[rrrrdd] \arrow[rrrr]             &  &  &  & \mathcal{X}_1\otimes\mathcal{E}_0 \arrow[rrrr] \arrow[rrrrdddd] \arrow[rrrrdd]             &  &  &  & \widetilde{\mathcal{R}_{1}\otimes\mathcal{E}_0}       \\
                                                                                           &  &  &  & \oplus                                                                                     &  &  &  & \oplus                                                \\
                                                                                           &  &  &  & \mathcal{C}_1\otimes\mathcal{E}_0 \arrow[rrrrdd] \arrow[rrrr]                              &  &  &  & \mathcal{C}_0\otimes\mathcal{E}_0                     \\
                                                                                           &  &  &  & \oplus                                                                                     &  &  &  & \oplus                                                \\
                                                                                           &  &  &  & \left(\mathcal{C}_{cone}\right)_2\otimes\mathcal{E}_1 \arrow[rrrr]                         &  &  &  & \left(\mathcal{C}_{cone}\right)_1\otimes\mathcal{E}_1
\end{tikzcd}
\end{center}\caption{The chain complex of the cone code after thickening and choosing heights. The tildes on the $\RRI\otimes\mathcal{E}_0$ denote the fact that heights have been chosen in these spaces, i.e. that for every basis element $c \in \RRI$, we keep one and only one basis element of the form $c \otimes w_k$, where $\left(w_i\right)_{i=1}^l$ forms the standard basis for $\mathcal{E}_0$. To make the diagram as simple as is possible, we have abbreviated various sets of spaces into $\left(\mathcal{C}_{cone}\right)_i$.}\label{thickendAndHeightChosenConeCodeChainComplex}
\end{figure}

There will, however, be one slightly greater complication for us, and that is that we will thicken the cone code more than Hastings does. For the benefit of our later soundness proof, we will want to take a large enough $l$ and choose heights such that not only are qubits in $\XXI\otimes\mathcal{E}_0$ acted on by at most one X - stabiliser in $\widetilde{\RRI\otimes\mathcal{E}_0}$, which is what \cite{hastings2021quantum} does, but also each Z - stabiliser in $\QQI \otimes \mathcal{E}_0$ has overlap with at most one X - stabiliser in $\widetilde{\RRI\otimes\mathcal{E}_0}$. Let us now demonstrate that this is possible, in the style of an argument in \cite{hastings2021quantum}. This argument will go via ``strong hypergraph colouring''. The idea is as follows. Define a hypergraph\footnote{A hypergraph is a pair $(V,E)$ of a set of vertices, $V$, and a set of ``hyperedges'', $E$, which is a set of subsets of $V$.}, where vertices correspond to X - stabilisers of the cone code in \RRI, and hyperedges correspond to Z - stabilisers of the cone code in \QQI. A hyperedge corresponding to a Z - stabiliser will contain all the vertices corresponding to the X - stabilisers with which it has some overlap. Colouring the vertices of the hypergraph with $l$ colours, the objective behind strong hypergraph colouring is to make it such that no hyperedge contains two vertices of the same colour. Then, colouring the vertices corresponds to choosing heights for the stabilisers in $\widetilde{\RRI\otimes\mathcal{E}_0}$, and achieving a strong hypergraph colouring will make it such that no two X - stabilisers in $\widetilde{\RRI\otimes\mathcal{E}_0}$ have overlap with the same Z - stabiliser in $\QQI\otimes\mathcal{E}_0$.

We then use the trivial upper bound \cite{38853} that a hypergraph can be strongly coloured by a number of colours scaling with its maximum edge degree multiplied by its maximum vertex degree, which for us is $l = \Theta\left(\wz\qx\log\wz\right)$ colours, where these un-decorated parameters refer to those of the code pre-coning. Note that choosing heights such that no Z - stabiliser in $\QQI \otimes \mathcal{E}_0$ has overlap with more than one X - stabiliser in $\widetilde{\RRI\otimes\mathcal{E}_0}$ will automatically make it such that every qubit in $\XXI\otimes\mathcal{E}_0$ will be acted on by at most one X - stabiliser in $\widetilde{\RRI\otimes\mathcal{E}_0}$, assuming every qubit in \XXI is acted on by some Z - stabiliser in \QQI, which we see is true by the construction of \BBARI.

With this done, we may now cellulate the added 2 - discs, which we recall is necessary to turn the added X - stabilisers into low weight stabilisers. We do this according to the simple cellulation shown in Figure \ref{cellulationFigure} for every disc, where note in particular that no vertices are added, and therefore no Z - stabilisers are added in the cellulation step. The resulting chain complex - that of the reduced cone code - is shown in Figure \ref{reducedConeCodeChainComplex}.

\begin{figure}[h]
\begin{center}
\begin{tikzpicture}

  \draw[color = black](0,0) circle[radius = 2.5];

  \draw[fill=red] (2.500,0.000) circle (2pt);
  \draw[fill=red] (2.2839,1.0168) circle (2pt);
  \draw[fill=red] (1.6728,1.8579) circle (2pt);
  \draw[fill=red] (0.7725,2.3776) circle (2pt);
  \draw[fill=red] (-0.2613,2.4863) circle (2pt);
  \draw[fill=red] (-1.2500,2.1651) circle (2pt);
  \draw[fill=red] (-2.0225,1.4695) circle (2pt);
  \draw[fill=red] (-2.4454,0.5198) circle (2pt);
  \draw[fill=red] (-2.4454,-0.5198) circle (2pt);
  \draw[fill=red] (-2.0225,-1.4695) circle (2pt);
  \draw[fill=red] (-1.2500,-2.1651) circle (2pt);
  \draw[fill=red] (-0.2613,-2.4863) circle (2pt);
  \draw[fill=red] (0.7725,-2.3776) circle (2pt);
  \draw[fill=red] (1.6728,-1.8579) circle (2pt);
  \draw[fill=red] (2.2839,-1.0168) circle (2pt);

  \node at (3.5,0) {$\longmapsto$};

  \draw[color = black](7,0) circle[radius = 2.5];

  \draw[line width = 0.15mm] (7+2.2839,1.0168) -- (7+2.2839,-1.0168);
  \draw[line width = 0.15mm] (7+1.6728,1.8579) -- (7+1.6728,-1.8579);
  \draw[line width = 0.15mm] (7+0.7725,2.3776) -- (7+0.7725,-2.3776);
  \draw[line width = 0.15mm] (7+-0.2613,2.4863) -- (7+-0.2613,-2.4863);
  \draw[line width = 0.15mm] (7+-1.2500,2.1651) -- (7+-1.2500,-2.1651);
  \draw[line width = 0.15mm] (7+-2.0225,1.4695) -- (7+-2.0225,-1.4695);
  
  \draw[fill=red] (7+2.500,0.000) circle (2pt);
  \draw[fill=red] (7+2.2839,1.0168) circle (2pt);
  \draw[fill=red] (7+1.6728,1.8579) circle (2pt);
  \draw[fill=red] (7+0.7725,2.3776) circle (2pt);
  \draw[fill=red] (7+-0.2613,2.4863) circle (2pt);
  \draw[fill=red] (7+-1.2500,2.1651) circle (2pt);
  \draw[fill=red] (7+-2.0225,1.4695) circle (2pt);
  \draw[fill=red] (7+-2.4454,0.5198) circle (2pt);
  \draw[fill=red] (7+-2.4454,-0.5198) circle (2pt);
  \draw[fill=red] (7+-2.0225,-1.4695) circle (2pt);
  \draw[fill=red] (7+-1.2500,-2.1651) circle (2pt);
  \draw[fill=red] (7+-0.2613,-2.4863) circle (2pt);
  \draw[fill=red] (7+0.7725,-2.3776) circle (2pt);
  \draw[fill=red] (7+1.6728,-1.8579) circle (2pt);
  \draw[fill=red] (7+2.2839,-1.0168) circle (2pt);

\end{tikzpicture}
\end{center}\caption{The cellulation that we use for all discs. The high weight X - stabiliser represented by the left-hand disc is replaced by many low weight X - stabilisers, represented by the new faces on the right-hand side. Note that while new qubits have been added in the form of new edges, no new Z - stabiliser (vertices) have been added.}\label{cellulationFigure}
\end{figure}
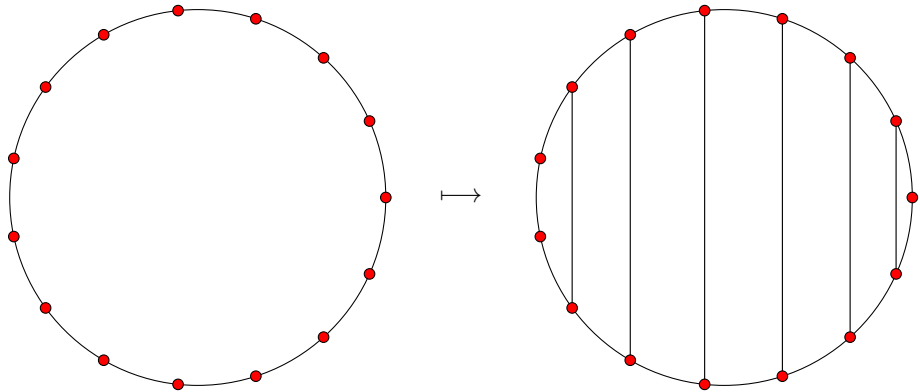

\begin{figure}[]
\begin{center}
    \begin{tikzcd}
                                                                                                              &  &  &  & \mathcal{X}_{N_Z}\otimes\mathcal{E}_0 \arrow[rrrrdddd] \arrow[rrrrdddddddddd] \arrow[rrrrdddddddddddd] &  &  &  &                                                         \\
                                                                                                              &  &  &  & \oplus                                                                                                 &  &  &  &                                                         \\
\mathcal{Q}_{N_Z}\otimes\mathcal{E}_0 \arrow[rrrruu] \arrow[rrrr] \arrow[rrrrdddddddd] \arrow[rrrrdddddddddd] &  &  &  & \widetilde{\mathcal{X}^\perp_{N_Z}\otimes\mathcal{E}_0} \arrow[rrrrdd]                                 &  &  &  &                                                         \\
\oplus                                                                                                        &  &  &  & \oplus                                                                                                 &  &  &  &                                                         \\
\vdots                                                                                                        &  &  &  & \vdots                                                                                                 &  &  &  & \widetilde{\hat{\mathcal{R}}_{N_Z}\otimes\mathcal{E}_0} \\
\oplus                                                                                                        &  &  &  & \oplus                                                                                                 &  &  &  & \oplus                                                  \\
\mathcal{Q}_1\otimes\mathcal{E}_0 \arrow[rrrr] \arrow[rrrrdd] \arrow[rrrrdddd] \arrow[rrrrdddddd]             &  &  &  & \mathcal{X}_1\otimes\mathcal{E}_0 \arrow[rrrrdd] \arrow[rrrrdddddd] \arrow[rrrrdddd]                   &  &  &  & \vdots                                                  \\
                                                                                                              &  &  &  & \oplus                                                                                                 &  &  &  & \oplus                                                  \\
                                                                                                              &  &  &  & \widetilde{\mathcal{X}^\perp_{1}\otimes\mathcal{E}_0} \arrow[rrrr]                                     &  &  &  & \widetilde{\hat{\mathcal{R}}_{1}\otimes\mathcal{E}_0}   \\
                                                                                                              &  &  &  & \oplus                                                                                                 &  &  &  & \oplus                                                  \\
                                                                                                              &  &  &  & \mathcal{C}_1\otimes\mathcal{E}_0 \arrow[rrrrdd] \arrow[rrrr]                                          &  &  &  & \mathcal{C}_0\otimes\mathcal{E}_0                       \\
                                                                                                              &  &  &  & \oplus                                                                                                 &  &  &  & \oplus                                                  \\
                                                                                                              &  &  &  & \left(\mathcal{C}_{cone}\right)_2\otimes\mathcal{E}_1 \arrow[rrrr]                                     &  &  &  & \left(\mathcal{C}_{cone}\right)_1\otimes\mathcal{E}_1  
\end{tikzcd}
\end{center}\caption{The chain complex of the reduced cone code, after each disc has been cellulated. The spaces $\widetilde{\mathcal{X}_i^\perp\otimes\mathcal{E}_0}$ are those spanned by the edges added to all of the discs in $\widetilde{\mathcal{R}_i\otimes\mathcal{E}_0}$. Note that we have used cellulations here that do not add vertices to the discs i.e. do not add Z - stabilisers to the code.}\label{reducedConeCodeChainComplex}
\end{figure}
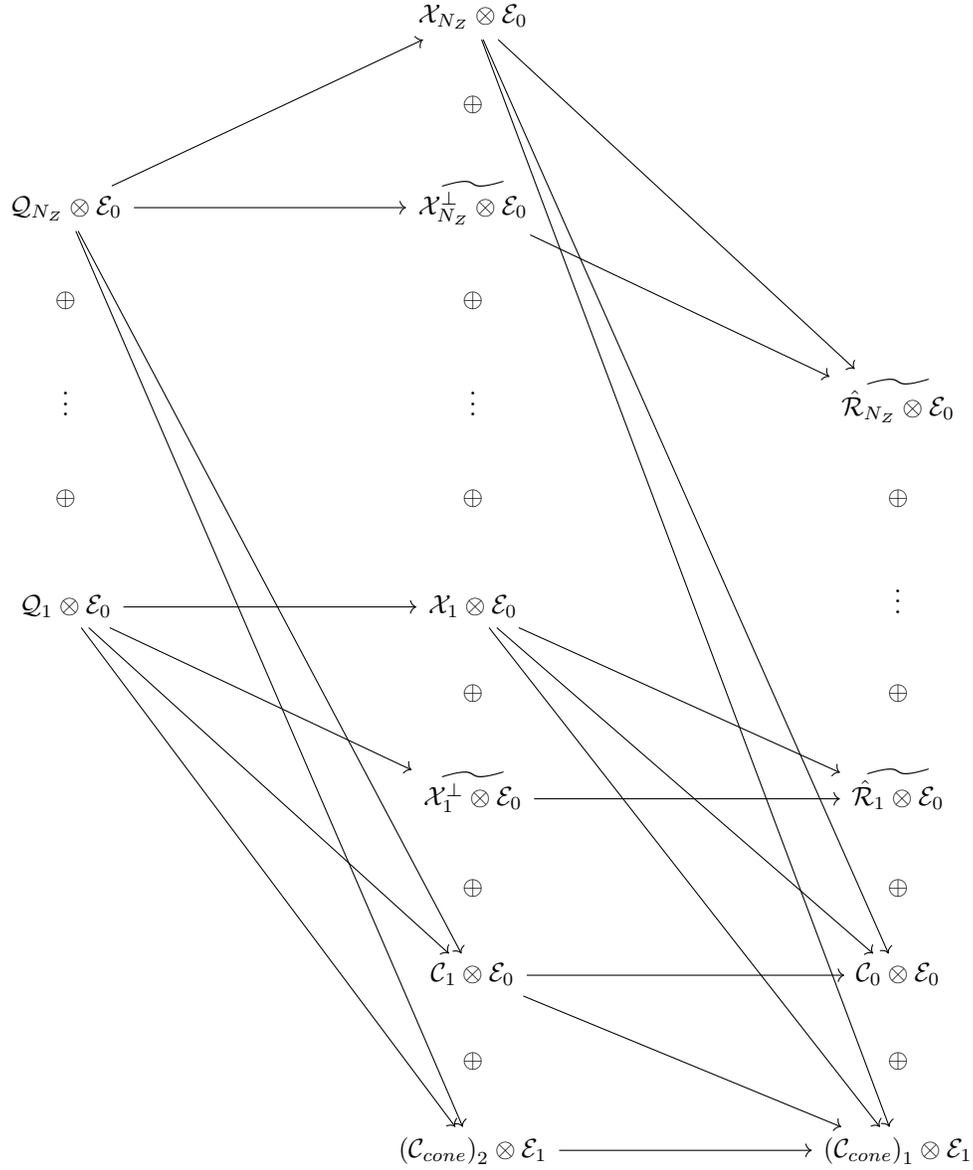

\begin{lemma}
    Let parameters with two primes denote those of the cone code after thickening and choosing heights. Let un-decorated parameters refer to those of the code pre-coning. Let parameters with a tilde refer to those of the reduced cone code. Then

    \begin{equation}
        \tilde{\rho}_Z \geq \frac{\tilde{N}}{\tilde{N}_X}\frac{N_X''}{N''}\frac{\rho_Z''}{1+\wz\left(\frac{N_X''\rho_Z''}{N''}+1\right)}.
    \end{equation}
\end{lemma}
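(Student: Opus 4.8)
The plan is to compare the reduced cone code directly with the cone code after thickening and choosing heights, since these two codes differ only in the cellulation step: cellulating the discs adds the new chord-qubits spanning the spaces $\widetilde{\mathcal{X}_i^\perp\otimes\mathcal{E}_0}$ and replaces each high-weight disc X-stabiliser in $\widetilde{\RRI\otimes\mathcal{E}_0}$ by constant-weight face X-stabilisers, while adding no new Z-stabilisers and leaving every other X-stabiliser untouched. Accordingly, given a physical Z-operator $u$ on the reduced cone code I would split it as $u=(u_a,u_b)$, with $u_a$ supported on the qubits inherited from the cone-after-thickening code and $u_b$ supported on the new chord-qubits, and regard $u_a$ as a Z-operator on that earlier code, whose X-stabiliser parity-check matrix I denote $H_X''$.

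The combinatorial heart is the structure of the chosen cellulation (Figure \ref{cellulationFigure}). For a single disc $D$ with boundary cycle of length $w\le\wz$, the nested chords $\{j,w-j\}$ cut $D$ into a path of faces $F_1,F_2,\dots$ so that each chord-qubit lies in exactly two consecutive faces and each original boundary edge of $D$ lies in exactly one face. From this I would extract two facts: (i) the parity of the number of faces of $D$ violated by $u$ equals $\sum_{e\in\partial D}u(e)$, which is $1$ precisely when $u_a$ anticommutes with the original disc X-stabiliser $R_D$; equivalently, flipping chord-bits of $u_b$ inside $D$ drives the number of violated faces of $D$ down to $[\,u_a\text{ violates }R_D\,]\in\{0,1\}$; and (ii) because the faces form a path, repairing all (necessarily even) face-violations of a disc costs at most $O(\wz)$ chord-flips, irrespective of how many violations there are.

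With these in hand I would bound $d(u,\ker\tilde H_X)$ by putting $u$ into the reduced cone code in two stages: first flip $d(u_a,\ker H_X'')$ bits to move $u_a$ into $\ker H_X''$; then, since each disc then has an even number of violated faces, clean up the chords disc-by-disc at total cost $O(\wz)$ times the number of faces violated after the first stage, which is at most $O(\wz)\bigl(V_{\mathrm{face}}+d(u_a,\ker H_X'')\bigr)$ — here I use that each flipped $u_a$-bit lies in the boundary of at most one disc (the enlarged height choice), hence disturbs at most one face-constraint, and $V_{\mathrm{face}}$ is the number of new face-stabilisers violated by $u$. On the other side, the soundness $\rho_Z''$ gives $|H_X''u_a|\ge\frac{N_X''}{N''}\rho_Z''\,d(u_a,\ker H_X'')$, and every X-stabiliser of the cone-after-thickening code violated by $u_a$ can be charged to a distinct violated X-stabiliser of the reduced cone code: a violated non-face stabiliser is violated by $u$ verbatim (it does not meet the new chords), while a violated disc stabiliser $R_D$ forces, by fact (i), at least one violated face of $D$. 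Hence the total number $V$ of violated X-stabilisers of the reduced cone code obeys $d(u_a,\ker H_X'')\le\frac{N''}{N_X''\rho_Z''}V$ and $V_{\mathrm{face}}\le V$; substituting into the distance bound, rearranging, and using $\wz/c\le\wz$ to absorb the constants gives $V\ge\frac{N_X''\rho_Z''}{N''}\cdot\frac{d(u,\ker\tilde H_X)}{1+\wz(N_X''\rho_Z''/N''+1)}$, and multiplying by $\tilde N/\tilde N_X$ yields the claimed bound on $\tilde\rho_Z$.

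I expect the main obstacle to be fact (i): checking carefully that the particular nested-chord cellulation turns the faces of each disc into a path whose violation-parity records anticommutation with $R_D$, uniformly at every height once heights have been chosen, and that no two discs (at the same height) share a boundary edge-qubit — the latter being exactly what makes the per-flip disturbance of face-constraints $O(1)$. Once these structural points are pinned down, the remainder is bookkeeping that mirrors the earlier coning-soundness proofs, in particular the two-case weighing of $f$ against a multiple of $d(u_a,\ker H_X'')$ that was used for the cone code itself.
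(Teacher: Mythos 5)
Your proposal is correct and rests on the same two structural facts as the paper's proof: that the parity of violated faces of a disc equals the parity of $u$ on the disc's boundary (so a violated disc stabiliser of $C''$ forces a violated face, and a satisfied one leaves an even number of violated faces, repairable using only chords), and that repairing a disc costs at most $\wz$ chord flips, with the height choice ensuring each external-edge qubit belongs to at most one disc. The only differences are organisational. The paper first argues that, by adding boundaries of the vertex Z-stabilisers, $u$ may be taken WLOG to have no support on the internal edges (this is where the strengthened height choice --- each Z-stabiliser meeting at most one disc stabiliser --- is used), and then runs a two-case analysis comparing the number $f$ of chord flips against $\alpha\, d(\hat u,\ker H_X'')$ for an optimised $\alpha=\wz\bigl(\tfrac{N_X''\rho_Z''}{N''}+1\bigr)$. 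You instead keep the chord part $u_b$ general and close the argument in one pass by charging every violated $C''$-stabiliser to a distinct violated stabiliser of $\tilde C$, then substituting $d(u_a,\ker H_X'')\le \tfrac{N''}{N_X''\rho_Z''}V$ and $V_{\mathrm{face}}\le V$ into the distance bound; carried out with the exact count (at most $\wz$ chords per disc, one disturbed face per flipped bit) this reproduces the denominator $1+\wz\bigl(\tfrac{N_X''\rho_Z''}{N''}+1\bigr)$ verbatim, so you should not need the $O(\wz)$ slack you mention. Your version is marginally cleaner in that it avoids both the WLOG step and the case split; the paper's version keeps the proof in the same template as its other coning soundness lemmas.
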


\begin{proof}
    We recall that the chain complex is thickened enough and heights are chosen such that each Z - stabiliser in $\QQI \otimes \mathcal{E}_0$ has overlap with at most one X - stabiliser in $\widetilde{\RRI\otimes\mathcal{E}_0}$. Letting the cone code after thickening and choosing heights be denoted $C''$ and the reduced cone code be denoted $\tilde{C}$, choose some Z - operator in the reduced cone code represented by some bit string $u \in \mathbb{F}_2^{\tilde{N}}$. Without loss of generality, we may say that $u$ is not supported on any of the spaces $\widetilde{\mathcal{X}^\perp_{i}\otimes\mathcal{E}_0}$, which are the spaces of ``internal edges'' of the discs, i.e. those edges that are added in the cellulation step, as opposed to the external edges of the discs, i.e. those that were already there. This is true because we can without loss of generality add to $u$ the boundaries of various Z - stabilisers in $\QQI \otimes \mathcal{E}_0$. If $u$ is supported on some internal edge, we can add to $u$ the boundary of one of the Z - stabilisers represented by a vertex on the end of this edge to remove $u$'s support on that internal edge. Because the chain complex was thickened such that each Z - stabiliser in $\QQI \otimes \mathcal{E}_0$ has overlap with at most one X - stabiliser in $\widetilde{\RRI\otimes\mathcal{E}_0}$, each vertex is in contact with at most one internal edge in any disc, and as such we can indeed push $u$'s support off of the internal edges.

    With this done, $u$ is without loss of generality supported only on the qubits of $C''$. We will use our lower bound on the soundness of $C''$, the cone code post thickening and choosing heights, to establish a lower bound on the soundness of the reduced cone code. For this, we will now show how to put $u$ in $\tilde{C}$. Let $\hat{u} \in \mathbb{F}_2^{N''}$ be the projection of $u$ into the $N''$ bits corresponding to the qubits of $C''$. We can flip $d(\hat{u},\ker H_X'')$ bits in $\hat{u}$ to put $\hat{u}$ in $C''$, where $H_X''$ is the parity-check matrix associated with the X - stabilisers of $C''$, the cone code after thickening and choosing heights. We flip the bits in $u$ corresponding to these bits. Then, $\hat{u}$ must satisfy all of the X - stabilisers in $C''$, however, $u$ may not necessarily satisfy all of the X - stabilisers in $\tilde{C}$. The reason for this is that, in satisfying all of the X - stabilisers of $C''$, $\hat{u}$ must have even support on the external edges of every disc. However, even with this being true, $\hat{u}$ may have odd support on the boundaries of some of the surfaces of the cellulated disc i.e. some X - stabilisers in $\widetilde{\hat{\RRI}\otimes\mathcal{E}_0}$, the space of cellulated discs (see Figure \ref{reducedConeCodeChainComplex}), may be violated.
    
    We may, however, satisfy these stabilisers only by flipping bits in $u$ corresponding to internal edges. This will not affect any other X - stabilisers, because, as shown in Figure \ref{reducedConeCodeChainComplex}, qubits in $\widetilde{\XXI^{\perp}\otimes\mathcal{E}_0}$ are only acted on by X - stabilisers in $\widetilde{\hat{\RRI}\otimes\mathcal{E}_0}$. Furthermore, because $\hat{u}$ has even support on the external edges of the whole disc, there must now be an even number of violated stabilisers represented by faces in $\widetilde{\hat{\RRI}\otimes\mathcal{E}_0}$. Because every internal edge in each disc is adjacent to two faces, any even number of these violated stabilisers may be fixed by flipping bits in $u$ corresponding to internal edges of the discs. This finishes putting $u$ in $\tilde{C}$.

    Suppose that in this latter step, $f$ bits of $u$ are flipped in total that correspond to internal edges of various discs. Then

    \begin{equation}
        d(u, \ker \tilde{H}_X) \leq d(\hat{u}, \ker H_X'') + f
    \end{equation}
    where $\tilde{H}_X$ denotes the parity-check matrix associated with the X - stabilisers of the reduced cone code. Let us consider first the case that $f \leq \alpha d(\hat{u}, \ker H_X'')$ for some positive number $\alpha$ to be determined. In this case, $d(u, \ker \tilde{H}_X) \leq (1 + \alpha)d(\hat{u}, \ker H_X'')$. $\hat{u}$ violates at least $\frac{N''_X}{N''}\rho_Z''d(\hat{u}, \ker H_X'')$ X - stabilisers in $C''$. For every X - stabiliser that is violated by $\hat{u}$ in $C''$, at least one is violated by $u$ in $\tilde{C}$, because $\hat{u}$ having odd support on the external edges of a given disc implies that $u$ has odd support on the boundary of at least one of the faces in the cellulation of that disc. Therefore, $u$ violates at least

    \begin{equation}
        \frac{N''_X}{N''}\rho_Z''d(\hat{u}, \ker H_X'') \geq \frac{1}{1+\alpha}\frac{N''_X}{N''}\rho_Z''d(u, \ker \tilde{H}_X)\label{cellZVio1}
    \end{equation}
    X - stabilisers in $\tilde{C}$. Suppose now that $f \geq \alpha d(\hat{u}, \ker H''_X)$. Then we have $d(u, \ker \tilde{H}_X) \leq f\left(\frac{\alpha+1}{\alpha}\right)$. Because there are at most \wz external edges around one of the discs, there are at most \wz internal edges in a cellulated disc. There are therefore at least $\frac{f}{\wz}$ discs which have their internal edge bits flipped when we put $u$ in the code. Furthermore, there are at most $d(\hat{u}, \ker H_X'')$ discs which have their external edge bits flipped when we put $u$ in the code. There are therefore at least

    \begin{equation}
        \frac{f}{\wz}-d(\hat{u}, \ker H_X'') \geq f\left(\frac{1}{\wz}-\frac{1}{\alpha}\right)
    \end{equation}
    discs with inner edge bits flipped and no external edge bits flipped. For such a disc, at least one of the X - stabilisers given by the faces of that cellulated disc must have been violated by $u$ and so $u$ violates at least

    \begin{equation}
        \left(\frac{1}{\wz}-\frac{1}{\alpha}\right)\frac{\alpha}{\alpha+1}d(u, \ker \tilde{H}_X)\label{cellZVio2}
    \end{equation}
    X - stabilisers in $\tilde{C}$. We now let $\alpha$ be such that the right-hand sides of Equations \eqref{cellZVio1} and \eqref{cellZVio2} are equal. This turns out to be $\alpha = \wz\left(\frac{N_X''\rho_Z''}{N''}+1\right)$. Then, there are at least

    \begin{equation}
        \frac{N_X''}{N''}\frac{\rho_Z''}{1+\wz\left(\frac{N_X''\rho_Z''}{N''}+1\right)}d(u, \ker \tilde{H}_X)
    \end{equation}
    X - stabilisers violated by $u$ in $\tilde{C}$, from which we deduce the required soundness.
\end{proof}

\begin{lemma}
    Let parameters with two primes denote those of the cone code after thickening and choosing heights. Let parameters with a tilde be those of the reduced cone code. Then

    \begin{equation}
        \tilde{\rho}_X \geq \frac{\tilde{N}}{N''}\rho_X''.
    \end{equation}
\end{lemma}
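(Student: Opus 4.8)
The plan is to reduce the soundness inequality for the reduced cone code $\tilde{C}$ directly to the one already available for $C''$, the cone code after thickening and choosing heights. Recall that the cellulation step turns $C''$ into $\tilde{C}$ by adding new qubits (the internal edges of the cellulated discs) and new X - stabilisers (the faces), while adding no new Z - stabilisers (no new vertices, as in Figure \ref{cellulationFigure}); in particular $\tilde{N}_Z = N_Z''$, and every Z - stabiliser of $\tilde{C}$ restricts, on the qubits inherited from $C''$, to the corresponding Z - stabiliser of $C''$. Throughout I write $\rho_X''$ for the soundness of the classical code $\ker(H_Z'')$ associated with the Z - stabilisers of $C''$.

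First I would reduce to the case where the chosen X - operator touches no internal edges. Let $u \in \mathbb{F}_2^{\tilde{N}}$ represent an X - operator on $\tilde{C}$, and write $u = (u', u_{\mathrm{int}})$, where $u'$ is supported on the $N''$ qubits of $C''$ and $u_{\mathrm{int}}$ on the internal edges. I claim that without loss of generality $u_{\mathrm{int}} = 0$. For each cellulated disc $D$, the map sending a subset of the faces of $D$ to the set of internal edges in its boundary is surjective onto the internal-edge space of $D$: this is exactly the statement that $H_1(D/\partial D;\mathbb{F}_2)=0$, which holds because collapsing the boundary of a $2$ - disc yields a $2$ - sphere, and it is insensitive to the particular cellulation used (the one of Figure \ref{cellulationFigure} adds no internal vertices, so the relative chain complex $C_*(D,\partial D;\mathbb{F}_2)$ has vanishing degree $0$ and the chord space is literally $\im\partial_2$). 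Since the internal edges of distinct discs are disjoint and each face lies in a single disc, we may pick a combination $y$ of face X - stabilisers of $\tilde{C}$ whose restriction to the internal edges equals $u_{\mathrm{int}}$; adding $\tilde{H}_X^T y$ to $u$ then clears all internal-edge support while only possibly further altering $u$ on qubits already present in $C''$. Because $\tilde{H}_X^T y \in \ker \tilde{H}_Z$, this change affects neither $|\tilde{H}_Z u|$ nor $d(u,\ker\tilde{H}_Z)$, so the reduction is legitimate.

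With $u = (u',0)$, two observations finish the argument. First, since $u$ has no internal-edge support, the Z - stabilisers of $\tilde{C}$ violated by $u$ are precisely the Z - stabilisers of $C''$ violated by $u'$, so $|\tilde{H}_Z u| = |H_Z'' u'|$. Second, if $c \in \ker H_Z''$ attains $|u'+c| = d(u',\ker H_Z'')$, then $(c,0) \in \ker \tilde{H}_Z$ (every Z - stabiliser of $\tilde{C}$ has even overlap with $(c,0)$ because it has even overlap with $c$ on the $C''$ qubits), whence $d(u,\ker\tilde{H}_Z) \le d(u',\ker H_Z'')$. Applying the soundness of $C''$ to $u' \in \mathbb{F}_2^{N''}$ gives
\begin{equation}
    |\tilde{H}_Z u| = |H_Z'' u'| \ \geq\ \frac{N_Z''\rho_X''}{N''}\, d(u',\ker H_Z'') \ \geq\ \frac{N_Z''\rho_X''}{N''}\, d(u,\ker\tilde{H}_Z),
\end{equation}
and, dividing by $\tilde{N}_Z = N_Z''$, we obtain $\frac{|\tilde{H}_Z u|}{\tilde{N}_Z} \geq \frac{\tilde{N}\rho_X''}{N''}\cdot\frac{d(u,\ker\tilde{H}_Z)}{\tilde{N}}$, i.e. $\tilde{\rho}_X \geq \frac{\tilde{N}}{N''}\rho_X''$, as desired.

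The step I expect to be the main obstacle is the ``without loss of generality'' reduction in the second paragraph: one must be certain that the faces of a single cellulated disc generate its entire internal-edge space modulo external edges, and that performing this disc-by-disc introduces no internal-edge support elsewhere. This is where the combinatorics of the cellulation — or, more conceptually, the contractibility of a disc — enters; once $u_{\mathrm{int}}=0$, the remainder is routine bookkeeping with the definition of soundness and the fact that cellulation leaves the Z - stabilisers untouched.
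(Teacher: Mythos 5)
Your proof is correct and follows essentially the same route as the paper's: reduce without loss of generality to an $X$ - operator with no support on the internal edges by adding boundaries of faces, observe that the cellulation adds no $Z$ - stabilisers so that $d(u,\ker\tilde{H}_Z)\leq d(u',\ker H_Z'')$ and violated $Z$ - stabilisers are in bijection, and then invoke the soundness of $C''$. The only difference is that you spell out the surjectivity of the face-to-internal-edge map and the disjointness across discs, which the paper asserts more briefly.
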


\begin{proof}
Let us consider some X - operator represented by some bit string $u \in \mathbb{F}_2^{\tilde{N}}$. Without loss of generality, $u$ is not supported on internal edges of any cellulated disc. This is because, if $u$ is supported on the internal edge of some disc, we can add the boundary of some collection of faces in that disc to push the support of $u$ onto the external edges of that disc. Because the set of Z - stabilisers of the reduced cone code, $\tilde{C}$, is the same as that of the cone code post-thickening and choosing heights, $C''$, we can put $u$ in $\tilde{C}$ simply by flipping $d(u, \ker H_Z'')$ bits in it, and therefore $d(u, \ker \tilde{H}_Z) \leq d(u, \ker H_Z'')$, abusing notation slightly to view $u$ simultaneously as a vector in $\mathbb{F}_2^{\tilde{N}}$ supported only on the $N''$ bits of $C''$, and as a vector in $\mathbb{F}_2^{N''}$.

As an operator in $C''$, $u$ will violate at least $\frac{N_Z''}{N''}\rho_X''d(u, \ker H_Z'')$ Z - stabilisers in that code, and this many are then also violated by $u$ in $\tilde{C}$, again because $u$ is not supported on any of the spaces $\widetilde{\XXI^\perp\otimes\mathcal{E}_0}$. $u$ therefore violates at least

\begin{equation}
    \frac{N_Z''}{N''}\rho_X''d(u, \ker \tilde{H}_Z)
\end{equation}
Z - stabilisers in $\tilde{C}$ and using $\tilde{N}_Z = N_Z''$, we deduce the required soundness.
\end{proof}

\subsection{Weight Reducing a Quantum Locally Testable Code}\label{fullWeightRed}

We will now fully analyse the lengths, number of stabilisers, weights and soundnesses of any quantum locally testable code under the successive application of copying, gauging, thickening and choosing heights, and coning \textit{in this order}, under the assumption that the code is reasonable before the coning step. For this, it suffices that the original code (before any weight reduction) satisfies $d_Z = \omega(\wz\wx)$. In total, from these four successive weight reduction steps, we have the following:

\begin{lemma}[Formal Version of Theorem \ref{wtRedThm}]
    Consider a quantum locally testable CSS code on $N$ physical qubits with weights \qx, \wx, \qz and \wz and whose Z - distance asymptotically exceeds the maximum weight of any Z - stabiliser multiplied by the maximum weight of any X - stabiliser; $d_Z = \omega(\wz\wx)$\footnote{This condition corresponds to the code before the last weight reduction step, coning, being ``reasonable''. This term was defined in Section \ref{coningPrelimSection}.}. Then, there is another quantum locally testable code whose locality is constant, whose number of physical qubits is $\tilde{N} \leq N\text{poly\;}(\qx,\wx,\qz,\wz)$\footnote{\label{foot}Note that these polynomial factors come with an additional $\text{polylog}(N)$ in the main result - Theorem 1 - of \cite{hastings2021quantum}. This does not seem to be necessary, although removing it essentially only affects the case of weight reducing a qLTC with constant locality (to one with a lower constant locality).}, whose dimension is equal to that of the original code and whose X - and Z - distances are each equal to at least their original values divided by $\text{poly\;}(\qx,\wx,\qz,\wz)$\textsuperscript{\ref{foot}}. A full lower bound on the soundness of the resulting code is presented in tables \ref{ZOpoSoundness} and \ref{XOpoSoundness} using tables \ref{allWeights}, \ref{allLengths}, \ref{allXStabs} and \ref{allZStabs} in Appendix \ref{WRFullParams} but in the common case that the original code satisfies $N=\Theta(N_X) = \Theta(N_Z)$, $\rho_X = \OO(1)$ and $\rho_Z = \OO(1)$, the resulting values of $\rho_X$ and $\rho_Z$ are at least their old values divided by $\text{poly\;}(\qx,\wx,\qz,\wz)$.
\label{wtRedThmFormal}\end{lemma}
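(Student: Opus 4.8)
The proof of this lemma is essentially an exercise in composition: every genuinely new ingredient — the soundness analyses of copying, of the height‑choice step, and of coning together with the cellulation step, alongside the previously known analyses of gauging (Lemma~\ref{correctedLemma}) and thickening (Lemma~\ref{thickeningSoundnessLemma}) — has already been established in Sections~\ref{copyingSection}--\ref{coningSection}, so the plan is to run copying, then gauging, then thickening and choosing heights, then coning, \emph{in this order}, and to push all parameters through the four steps using the per‑construction lemmas. For the four weights I would follow the ordering argument recalled in Section~\ref{wtRedPrelims}: copying forces $\qx\le 3$; gauging then forces $\wx\le 3$ while leaving $\qx$ constant; thickening and choosing heights forces $\qz=\max(\OO(1),\wx)$, already constant since $\wx$ is, while leaving $\qx,\wx$ constant; and coning finally forces $\wz=\qx+\OO(1)$ and leaves $\qx=\qx+\OO(1)$, $\wx=\OO(\wx^2\qz)$ and $\qz=\max(\qz,\OO(1))$ all constant. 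The decisive bookkeeping observation is that in each of the four steps every output parameter is bounded by a polynomial (possibly times $\log N$ — see the footnote to the statement) in the parameters of the \emph{immediately preceding} code; hence composing four times, every intermediate weight, and in particular each weight appearing in a soundness denominator, is bounded by $\text{poly}(\qx,\wx,\qz,\wz)$ in the original weights, and all four final weights are $\OO(1)$.

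For the remaining non‑soundness parameters I would argue as follows. Dimension is preserved exactly by each of the four non‑soundness lemmas, so $\tilde K=K$. Each step multiplies $N$ (and, under the assumption $N_X=\OO(N)$ used in those lemmas, also $N_X$ and $N_Z$) by a factor polynomial in the current weights, so $\tilde N\le N\,\text{poly}(\qx,\wx,\qz,\wz)$. Each step changes $d_X$ and $d_Z$ by at most a polynomial factor in the current weights — copying preserves $d_X$ and multiplies $d_Z$ by $\qx$; gauging preserves $d_Z$ and multiplies $d_X$ by $\Theta(1/\wx)$; thickening and choosing heights preserves $d_Z$ and \emph{increases} $d_X$ by the polynomial factor $l$; coning preserves $d_X$ up to a constant and multiplies $d_Z$ by at least $l/\wz$ — so after composition $d_X$ and $d_Z$ are each at least their original value divided by $\text{poly}(\qx,\wx,\qz,\wz)$. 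Here one must check the hypothesis of coning, namely that the code reaching that step is reasonable (Section~\ref{coningPrelimSection}), i.e.\ that its $d_Z$ is at least its $\wz$: tracking these two quantities through copying, gauging and thickening, $d_Z$ is only ever multiplied (by $\qx$, then preserved, then preserved) while $\wz$ grows to $\OO(\qx\wx\wz)$ in total, so the requirement becomes $d_Z\qx\gtrsim \qx\wx\wz$; the factors of $\qx$ cancel and the stated hypothesis $d_Z=\omega(\wz\wx)$ on the original code is exactly what is needed.

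For soundness, each relevant lemma has the shape $\tilde\rho\ \ge\ (\text{ratio of code lengths and stabiliser counts})\cdot\rho/\bigl(1+\text{poly in the current weights}\bigr)$, so composing these inequalities literally yields the promised full lower bound; I would record the assembled result in Tables~\ref{ZOpoSoundness} and~\ref{XOpoSoundness}, built from the weight, length and stabiliser‑count Tables~\ref{allWeights}, \ref{allLengths}, \ref{allXStabs} and~\ref{allZStabs} of Appendix~\ref{WRFullParams}. To extract the clean statement one imposes $N=\Theta(N_X)=\Theta(N_Z)$ and $\rho_X,\rho_Z=\OO(1)$: the explicit length and stabiliser‑count formulas then show these comparabilities persist through all four steps, so every length/count prefactor is $\Theta(1)$, and every denominator $1+\text{poly}(\text{current weights})$ is $\Theta\bigl(\text{poly}(\qx,\wx,\qz,\wz)\bigr)$ in the original weights; hence $\rho_X$ and $\rho_Z$ drop by at most $\text{poly}(\qx,\wx,\qz,\wz)$ overall. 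For the coning step I would, as in Section~\ref{reducedConeCodeSoundnessSection}, use the larger thickening amount $l=\Theta(\wz\log\wz\,\qx)$, still polynomial in the original weights, so this does not affect any of the above.

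The main obstacle is not conceptual but organisational: one must keep four interacting families of quantities mutually consistent across four steps, in particular (i) confirm that $N$, $N_X$ and $N_Z$ stay comparable after every step, so that no soundness lemma's length‑ratio prefactor hides a super‑polynomial loss, and (ii) take care that each soundness denominator is a polynomial in the \emph{current} weights — themselves polynomials in the original four — so that the composed polynomial, though of large and somewhat unwieldy degree, genuinely depends only on $\qx,\wx,\qz,\wz$ (and not on $N$, beyond the removable $\log N$ noted in the footnote). The fully general bound, retaining all $N$, $N_X$ and $N_Z$ dependence, is precisely what the appendix tables carry.
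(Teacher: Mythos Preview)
Your proposal is correct and follows essentially the same approach as the paper: apply copying, gauging, thickening-and-choosing-heights, and coning in that order, push each parameter through using the per-construction lemmas of Sections~\ref{copyingSection}--\ref{coningSection} and Section~\ref{wtRedPrelims}, verify reasonableness before coning via the tracked $d_Z$ and $w_Z$, and record the composed bounds in the appendix tables. The paper's own ``proof'' is in fact just this same bookkeeping referral to Tables~\ref{allWeights}--\ref{XOpoSoundness}, so your outline is, if anything, slightly more explicit than what the paper writes.
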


Tables \ref{allWeights} to \ref{XOpoSoundness} in Appendix \ref{WRFullParams} can be used to compute lower bounds on the resulting soundness of the code after a full weight reduction. While these provide full detail, we emphasise that this level of detail is usually unnecessary, as the assumptions $N=\Theta(N_X) = \Theta(N_Z)$ and $\rho_X, \rho_Z = \OO(1)$ hold in many cases. Nevertheless, for every stage of the construction, table \ref{allWeights} presents upper bounds on the weights, tables \ref{allLengths}, \ref{allXStabs} and \ref{allZStabs} present upper and lower bounds on the lengths and numbers of stabilisers, while tables \ref{ZOpoSoundness} and \ref{XOpoSoundness} present lower bounds on the soundnesses of the Z - and X - operators, respectively. Bounds may be calculated iteratively, calculating parameters at a given stage of the construction in terms of those at the previous stage, where the final weight reduced code (the reduced cone code) has parameters denoted with a superscript $(4)$. For the common case that $N=\Theta(N_X) = \Theta(N_Z)$ and $\rho_X, \rho_Z = \OO(1)$, we may use these tables to check that

\begin{equation}
    \rho^{(4)}\geq \frac{\rho}{\text{poly}(\qx,\wx,\qz,\wz)}
\end{equation}
for both soundness parameters $\rho \in \{\rho_Z, \rho_X\}$. Similar cases where, for example, $N$ and $N_X$ or $N_Z$ scale slightly differently will likely give very similar, or the same results.

\section{Soundness Amplification}\label{soundnessAmplificationSection}

We now present our soundness amplification procedure for quantum locally testable codes. We will show that this construction can amplify the soundness of a qLTC to a constant while maintaining its number of qubits, its distance and its dimension, also showing bounds on the effect on the locality of the code: both on the stabiliser weight and qubit degree.

The procedure relies heavily on the use of lossless expander graphs which we now define.

\begin{definition}
    A bi-regular graph $G = (L \sqcup R, E)$ that is $D$-left-regular is a $(K_{max},\epsilon)$-lossless expander if for every subset of the left vertices $S \subseteq L$ for which $|S| \leq K_{max}$, we have that $|N(S)| \geq (1-\epsilon)|S|D$, where $N(S)$ denotes the set of (right) vertices adjacent to some vertex in $S$.
\end{definition}
Explicit families of such graphs with constant $D$ were constructed in \cite{capalbo2002randomness,golowich2024new} and their existence for non-constant $D$ is also well-known. In our case, treating the situation in which $D$ is non-constant is important and for the sake of self-containment, as well as to show the existence of such graphs with the best possible bound on their right-degree, which for us is crucial, the existence of the needed graphs is shown in Section \ref{expanderExistenceSection}. The lemma we prove is as follows.

\begin{lemma}
    For each $N_X,M \in \mathbb{N}$ with $M \leq N_X$, and every $\epsilon \in (0,1)$, there exists a $(K_{max},\epsilon)$-lossless expander with $N$ left vertices, $M$ right vertices, left degree $D$ for which the degree of every right vertex is at most $\left\lceil\frac{N_X D}{M}\right\rceil$, where
    \begin{align}
        D &= \frac{1}{\epsilon}\log\left(8e^2 \frac{N_X}{M}\right)\\
        K_{max} &= \frac{1}{2e} \frac{\epsilon^{\frac{\epsilon D}{\epsilon D-1}}M}{D^{\frac{\epsilon D}{\epsilon D-1}}}.
    \end{align}\label{expanderExistenceLemma}
\end{lemma}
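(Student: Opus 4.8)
The plan is to use the probabilistic method: build a random bipartite graph with $N$ left vertices, $M$ right vertices and left-degree exactly $D$, by choosing for each left vertex independently a set of $D$ neighbours, and then show that with positive probability it is simultaneously (i) lossless up to size $K_{max}$ and (ii) has right-degree bounded by $\lceil N_X D/M\rceil$. Since we only need existence, a first-moment/union-bound argument over all ``bad'' configurations will suffice, and the two stated formulas for $D$ and $K_{max}$ are exactly what make the relevant union bound collapse below $1$.

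First I would set up the expansion failure event. For a fixed set $S\subseteq L$ with $|S|=s\le K_{max}$, expansion fails precisely when the $sD$ ``edge-slots'' from $S$ land in a set of right vertices of size $<(1-\epsilon)sD$; equivalently, more than $\epsilon s D$ of the edge-slots are ``repeats'' (fall into a right vertex already hit by an earlier slot). Bounding the probability that a fixed set $T\subseteq R$ of size $\lceil(1-\epsilon)sD\rceil$ contains all neighbours of $S$ gives roughly $\binom{sD}{\epsilon sD}(|T|/M)^{\epsilon sD}\le (e/\epsilon)^{\epsilon sD}(sD/M)^{\epsilon sD}$ after absorbing $|T|\le sD$. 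Union-bounding over the $\binom{N}{s}\le (eN/s)^s$ choices of $S$ (I would use $N\le N_X$, or more precisely carry $N$ through and note $N=N_X$ in the intended application, replacing $N$ by $N_X$ in the crude bound) and over $T$, the per-$s$ failure probability is at most
\begin{equation}
\left(\frac{eN_X}{s}\right)^s\left(\frac{esD}{\epsilon M}\right)^{\epsilon sD}.
\end{equation}
The exponent of $s$ inside is $(\epsilon D-1)$ per factor of $s$, so for this to be summable over $s$ up to $K_{max}$ we need $s^{\epsilon D-1}\lesssim \epsilon^{\epsilon D}M^{\epsilon D}/(eN_X \cdot (eD)^{\epsilon D}\cdots)$; solving for the largest admissible $s$ yields exactly $K_{max}=\frac{1}{2e}\,\epsilon^{\epsilon D/(\epsilon D-1)}M/D^{\epsilon D/(\epsilon D-1)}$ (the $\frac{1}{2e}$ and the precise exponent $\tfrac{\epsilon D}{\epsilon D-1}$ come from also beating the $\binom{N_X}{s}$ term and leaving a geometric-series slack), and the choice $D=\tfrac1\epsilon\log(8e^2 N_X/M)$ is what guarantees $8e^2N_X/M\le e^{\epsilon D}$, i.e. the logarithmic term needed to dominate the $(eN_X/s)^s$ factor even at $s=1$. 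Summing the geometric-type series over all $1\le s\le K_{max}$ then keeps the total expansion-failure probability below, say, $1/4$.

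Next I would handle the right-degree bound. The expected degree of a right vertex is $ND/M\le N_X D/M$; I would show that with probability at least, say, $1/2$ no right vertex exceeds $\lceil N_X D/M\rceil$. Here there are two cases. If $N_X D/M$ is large (super-logarithmic in $M$), a Chernoff bound on each right-vertex degree (it is a sum of $N$ independent indicators, each $D/M$ in expectation) together with a union bound over $M$ right vertices gives a tiny failure probability. If $N_X D/M$ is small, the cleanest route is to not choose the $D$ neighbours fully independently but to impose the degree constraint by construction — e.g. distribute the $ND$ total edge-slots as evenly as possible among the $M$ right vertices from the start (a ``configuration-model with balanced right side''), so the right-degree bound holds deterministically, and redo the expansion calculation in that model (the union bound only gets better since we are conditioning on a more spread-out neighbourhood distribution). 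I would present the balanced-configuration version since it makes the right-degree claim automatic and only helps expansion; the independent-choice heuristic above is just to motivate the formulas. A union bound over the expansion-failure event ($<1/4$) and any residual right-degree-failure event then leaves positive probability of a graph with all required properties, proving existence.

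The main obstacle is purely bookkeeping: getting the constants and exponents in $K_{max}$ to come out exactly as stated — in particular justifying the exponent $\tfrac{\epsilon D}{\epsilon D-1}$ rather than a plain $1$, which requires carefully tracking that the $\binom{N_X}{s}$ factor contributes an extra $s^{-1}$-type term that must be absorbed, and keeping enough slack for the geometric series to converge. A secondary subtlety is being careful that $\epsilon D>1$ (so the series is genuinely decaying in $s$), which follows from $D\ge \tfrac1\epsilon\log(8e^2)>\tfrac1\epsilon$. None of these steps is conceptually hard, but the extremal choice of $K_{max}$ must be matched to the union bound with some care, and that is where I would spend the effort.
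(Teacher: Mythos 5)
Your proposal is correct and follows essentially the same route as the paper: a balanced configuration model (half-edges distributed as evenly as possible on the right, making the right-degree bound $\left\lceil N_X D/M\right\rceil$ hold by construction), a first-moment bound on expansion failure via $\binom{N_X}{s}\binom{sD}{\epsilon sD}(sD/M)^{\epsilon sD}$ with the collision probabilities dominated by independent uniform sampling, and the stated $D$ and $K_{max}$ chosen precisely so the per-$s$ bound becomes geometrically summable below $1$. The only point to make fully rigorous is the domination step you gesture at ("conditioning on a more spread-out distribution only helps"), which the paper handles by noting that under the matching process a right vertex becomes less likely to be hit again once chosen.
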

Let us comment that the bound on the degree of the right vertices $\left\lceil\frac{N_X D}{M}\right\rceil$ is optimal, and in the situation that $M$ divides $N_X D$, the graph is bi-regular.

We now describe the construction. The same procedure is applied to the $X$-stabilisers and to the $Z$-stabilisers; we describe it only for the $X$-stabilisers\footnote{In fact, this procedure could be used to amplify the soundness of a classical locally testable code.}. The construction is intended to be used to amplify sub-constant soundness in a qLTC to constant, and we only prove that it works in this way, although the proof can likely be easily extended to the case of amplifying constant soundness to a larger constant.

We start by fixing some universal constant $\alpha \in (0,1)$ and setting $\kappa = \frac{1+\alpha}{2}$. Amplifying sub-constant soundness to a constant takes multiple rounds. One `round' of soundness amplification takes place as follow. Letting $N_X$ be the number of $X$-stabilisers in the old code, the new code has these same $N_X$ $X$-stabilisers acting on the same $N$ qubits. The new code has this set of qubits and no more but has additional $X$-stabilisers. There are multiple groups of stabilisers introduced where each new group is labelled by the index $i$ taking values $i = \log(1/\rho^\kappa), \log(1/\rho^\kappa)+1, ..., \log(1/\rho)-1, \log(1/\rho)$. Because we have assumed that $\rho \to 0$ as $N \to \infty$, we note that we have $i \to \infty$ in the same limit. The new stabilisers in each group are simply sums of old stabilisers according to the edges of a lossless expander. For the $i$-th group, we consider a lossless expander with the following parameters:
\begin{align}
    M &= \frac{N_X}{2^{i\alpha}}\\
    \epsilon &= \sqrt{\frac{i\alpha}{2^{i(1-\alpha)}}},
\end{align}
and we take the values of $D$ and $K_{max}$ specified in Lemma \ref{expanderExistenceLemma}. In the $i$-th group, there are $M$ stabilisers introduced, each of which is the sum of the old stabilisers to which the corresponding right vertex is adjacent. The changes in the soundness and locality in one round of soundness amplification are given by the following lemma.

\begin{lemma}
    Assuming that the code's soundness $\rho$ is below some universal constant $\mu$\ to be determined, after one round of soundness amplification, the new soundness is at least $\tilde{\rho} \geq \gamma \rho^\kappa$ for some universal constant $\gamma$. If the original stabiliser weight and qubit degree were $w$ and $q$ respectively then the new stabiliser weight and qubit degree are respectively
    \begin{align}
        \tilde{w} &\leq \gamma_w\frac{w}{\rho^{\kappa+\delta_w}}\\
        \tilde{q} &\leq \gamma_q\frac{q}{\rho^{\frac{1-\alpha}{2}+\delta_q}}
    \end{align}
    for some universal constants $\gamma_w$, $\gamma_q$ and any desired universal constants $\delta_w$, $\delta_q > 0$.\label{oneRoundSALemma}
\end{lemma}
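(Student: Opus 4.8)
The plan is to pass to the classical setting via Fact 17 of \cite{eldar2017local} and then exploit the unique-neighbour property of lossless expanders. It suffices to prove the corresponding classical soundness and locality bounds for the code $\ker\tilde H$, where $\tilde H$ is obtained from $H\in\mathbb{F}_2^{N_X\times N}$ (the $X$-stabiliser matrix) by appending, for each group $i$ and each right vertex $v$ of the associated expander, the row $\sum_{s\,\sim\, v}(\text{row }s\text{ of }H)$. Every new row is an $\mathbb{F}_2$-combination of old rows, so $\ker\tilde H=\ker H$ and distances from the code are unchanged. The new number of checks is $\tilde s=N_X+\sum_i M_i=N_X\bigl(1+\sum_i 2^{-i\alpha}\bigr)$, and since the smallest active index is $\log(1/\rho^\kappa)\to\infty$ the geometric tail $\sum_i 2^{-i\alpha}=\OO(\rho^{\kappa\alpha})=o(1)$, so $\tilde s=\Theta(N_X)$.

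First I would pin down the asymptotics of the expander parameters over the active range $\log(1/\rho^\kappa)\le i\le\log(1/\rho)$. Since $\epsilon_i D_i=\log(8e^2\,2^{i\alpha})=\Theta(i)$ is bounded away from $1$, one gets $D_i=\Theta\!\bigl(\sqrt i\,2^{i(1-\alpha)/2}\bigr)$, and checking that the exponent $\tfrac{\epsilon_i D_i}{\epsilon_i D_i-1}\to 1$ costs only a constant factor yields $K_{\max,i}=\Theta\!\bigl(N_X\,2^{-i}\bigr)$ and a right-degree $\lceil N_X D_i/M_i\rceil=\OO\!\bigl(2^{i\alpha}D_i\bigr)=\OO\!\bigl(\sqrt i\,2^{i\kappa}\bigr)$. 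Also $\epsilon_i\to 0$ uniformly over this range, which is where the hypothesis $\rho<\mu$ enters: taking $\mu$ a small enough universal constant forces $\epsilon_i<1/4$, hence $1-2\epsilon_i>1/2$, for every active $i$.

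For the soundness bound, fix $u\in\mathbb{F}_2^N$, set $\delta=d(u,\ker H)/N$, and let $T\subseteq[N_X]$ be the set of violated original checks, so $\beta:=|T|/N_X\ge\rho\delta$ by the original soundness. The core observation is that whenever $|T|\le K_{\max,i}$, the lossless-expansion inequality $|N(T)|\ge(1-\epsilon_i)|T|D_i$ forces, by counting the $|T|D_i$ edges leaving $T$, at least $(1-2\epsilon_i)|T|D_i\ge\tfrac12|T|D_i$ right vertices with exactly one neighbour in $T$; for each such vertex the corresponding new check is violated by $u$. Then I split on $\beta$. If $\beta$ exceeds a suitable constant multiple of $\rho^\kappa$ (equivalently $|T|>K_{\max,i}$ already at the smallest active $i$) the original checks alone suffice: $|\tilde Hu|\ge|Hu|=\beta N_X=\Omega(\rho^\kappa N_X)$, so $|\tilde Hu|/\tilde s=\Omega(\rho^\kappa)\ge\Omega(\rho^\kappa\delta)$ as $\delta\le1$. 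Otherwise $|T|\le K_{\max,i}$ holds at the smallest active $i$, and I take $i^*$ to be the largest active index with $|T|\le K_{\max,i^*}$; group $i^*$ then contributes $\ge\tfrac12\beta N_X D_{i^*}$ violated checks, whence $|\tilde Hu|/\tilde s=\Omega(\beta D_{i^*})$. If $i^*$ is clamped at $\log(1/\rho)$ then $D_{i^*}=\Omega(\rho^{-(1-\alpha)/2})=\Omega(\rho^{\kappa-1})$, so $\beta D_{i^*}\ge\rho\delta\cdot\rho^{\kappa-1}=\rho^\kappa\delta$; if not, maximality of $i^*$ together with $K_{\max,i}=\Theta(N_X 2^{-i})$ forces $2^{i^*}=\Theta(1/\beta)$, hence $D_{i^*}=\Omega(\beta^{\kappa-1})$ and $\beta D_{i^*}=\Omega(\beta^\kappa)\ge\Omega((\rho\delta)^\kappa)\ge\Omega(\rho^\kappa\delta)$ using $\kappa<1$ and $\delta\le1$. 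Either way $\tilde\rho\ge\gamma\rho^\kappa$ for a universal $\gamma$ (absorbing $\tilde s/N_X=\Theta(1)$).

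The locality bounds are then bookkeeping on the asymptotics of the second paragraph. A fixed qubit lies in at most $q$ original stabilisers, and (by $D$-left-regularity) each of these feeds into exactly $D_i$ new stabilisers in group $i$, so the qubit degree contributed by group $i$ is at most $qD_i$; the sum $\sum_i D_i$ is geometric with constant ratio and so equals $\Theta(D_{\log(1/\rho)})=\Theta(\rho^{-(1-\alpha)/2}\sqrt{\log(1/\rho)})$, and absorbing $\sqrt{\log(1/\rho)}=\OO(\rho^{-\delta_q})$ gives $\tilde q=\OO(q\,\rho^{-(1-\alpha)/2-\delta_q})$. Likewise a new stabiliser in group $i$ is a sum of $\lceil N_X D_i/M_i\rceil=\OO(\sqrt i\,2^{i\kappa})$ original stabilisers of weight $\le w$, so $\tilde w\le\max\bigl(w,\OO(w\,\rho^{-\kappa}\sqrt{\log(1/\rho)})\bigr)=\OO(w\,\rho^{-\kappa-\delta_w})$. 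The main obstacle is the parameter bookkeeping of the second paragraph --- in particular controlling the awkward exponent $\tfrac{\epsilon_i D_i}{\epsilon_i D_i-1}$ in $K_{\max,i}$ and calibrating $i^*$ so that the expansion hypothesis $|T|\le K_{\max,i^*}$ and the lower bound $D_{i^*}\gtrsim\rho^\kappa\delta/\beta$ hold at once; once $D_i=\Theta(\sqrt i\,2^{i(1-\alpha)/2})$ and $K_{\max,i}=\Theta(N_X 2^{-i})$ are established, the remainder is the clean two-regime case analysis above.
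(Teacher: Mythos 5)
Your proof is correct and follows essentially the same route as the paper's: a geometric ladder of lossless expanders, the unique-neighbour count $(1-2\epsilon)|T|D$ giving violated new checks, selection of the group whose scale matches the number of violated original checks, and identical locality bookkeeping. The only (immaterial) difference is that you select $i^*$ as the largest active index with $|T|\le K_{\max,i^*}$ and split on whether it is clamped, whereas the paper sandwiches $K$ between $\frac{\nu}{2^{i+1}}\frac{N_X}{N}d$ and $\frac{\nu}{2^{i}}\frac{N_X}{N}d$; both calibrations yield the same $\Omega(\rho^\kappa)$ bound.
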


\begin{remark}
    In the statement of Lemma \ref{oneRoundSALemma}, and in its proof, we emphasise that anything we take to be a constant is a \textit{universal} constant, which will be important when we eventually apply multiple rounds of soundness amplification (a number of rounds scaling to infinity) to the code family (whose size scales to infinity).
\end{remark}

\begin{proof}[Proof of Lemma \ref{oneRoundSALemma}]
    In the lossless expander corresponding to the $i$-th group of new stabilisers, we note that
    \begin{equation}
        D = \Theta\left(\sqrt{i2^{i(1-\alpha)}}\right)
    \end{equation}
    where we note that the constants hidden in the $\Theta$ are universal. We then further find
    \begin{equation}
        \left(\frac{\epsilon}{D}\right)^{\frac{1}{\epsilon D-1}} = \Theta\left(\frac{1}{2^{i(1-\alpha)}}\right)^{\frac{1}{\Theta(i)-1}}
    \end{equation}
    which is then bounded above and below by constants for any $i \to \infty$ (or indeed any constant $i$). Choosing $\mu$ small enough ensures that $i$ is always large enough that this expression is between some universal constants. We may therefore write
    
    \begin{equation}
    K_{max} = \frac{1}{2e}\left(\frac{e}{D}\right)^{\frac{1}{\epsilon D-1}}\frac{\epsilon M}{D}= \Theta\left(\frac{\epsilon M}{D}\right) = \Theta\left(\frac{N_X}{2^i}\right)
    \end{equation}
    where again the constants hidden in the $\Theta$ are universal. This allows us to write $K_{max} \geq \nu \frac{N_X}{2^i}$ for some universal constant $\nu$. We take $\nu \leq 1$ without loss of generality.

    The number of $X$-stabilisers in the new code is
    \begin{equation}
        \tilde{N}_X = N_X + \sum_{i=\log(1/\rho^\kappa)}^{\log(1/\rho)}\frac{N_X}{2^{i\alpha}} = \Theta(N_X)
    \end{equation}
    where once again the corresponding constants are universal.

    Suppose now that a $Z$-operator represented by some bit string $x$ violates $K$ stabilisers in the old code, where we know that $K \geq \rho \frac{N_X}{N}d\left(x,\ker(H)\right)$ by the guarantee on soundness in the old code. We note that because the stabiliser group is unchanged in a round of soundness amplification (because the new code has the same stabilisers as the old code as well as additional ones that are the sum of old stabilisers), the value of $d\left(x,\ker(H)\right)$ is unchanged from the old code to the new code. Let us suppose further that $x$ violates fewer than $\nu\rho^\kappa\frac{N_X}{N}d\left(x,\ker(H)\right)$ $X$-stabilisers. Let us write
    \begin{equation}
        \frac{\nu}{2^{i+1}}\frac{N_X}{N}d\left(x,\ker(H)\right) \leq K < \frac{\nu}{2^i}\frac{N_X}{N}d\left(x,\ker(H)\right).
    \end{equation}
    where such an $i$ must exist because we have taken $\nu \leq 1$, and we may take $\mu$ to be a small enough universal constant that $\nu\rho^\kappa > \rho$, i.e. we take $\mu \leq \nu^{\frac{1}{1-\kappa}}$.
    
    We then consider the group of newly introduced stabilisers labelled by the index $i$. We note that we have $K < \frac{\nu N_X}{2^i} \leq K_{max}$. The set of left vertices $S$ corresponding to the violated stabilisers has at least $(1-\epsilon)KD$ neighbours on the right-hand side. It must, therefore have $(1-2\epsilon)KD$ `unique' neighbours, meaning neighbours on the right-hand side that are adjacent to exactly one vertex in $S$. This is true because
    \begin{equation}
        (1-\epsilon)KD \leq |N(S)| \leq KD - |N_{> 1}(S)|,
    \end{equation}
    where $N_{> 1}(S)$ is the set of non-unique neighbours of $S$, implying $|N_{> 1}(S)| \leq \epsilon KD$. Each unique neighbour of $S$ corresponds to a violated stabiliser in the new code, meaning that there are at least $(1-2\epsilon)KD$ violated stabilisers in the new code. Setting $\mu$ small enough ensures that $(1-2\epsilon) \geq 1/2$ because a larger $i$ leads to a smaller $\epsilon$, and so we may pick a small enough universal constant $\mu$ that $\epsilon \leq 1/4$. As a result, we see that the number of violated stabilisers is at least
    \begin{align}
        \Omega(KD) &\geq \frac{\Omega(\sqrt{i2^{i(1-\alpha)}})}{2^{i}}\frac{N_X}{N}d\left(x,\ker(H)\right)\\
        &\geq \Omega\left(\sqrt{\frac{i}{2^{i(1+\alpha)}}}\right)\frac{\tilde{N}_X}{N}d\left(x,\ker(H)\right) \\&\geq \frac{1}{2^{i\kappa}}\frac{\tilde{N}_X}{N}d\left(x,\ker(H)\right)\label{thirdLineIneq}
    \end{align}
    where going into the second line we have absorbed the (universal) constant difference between $N_X$ and $\tilde{N}_X$ into $\Omega$ and going into the third line we have picked a small enough $\mu$ (ensuring that $i$ is large enough) that the inequality of Equation \eqref{thirdLineIneq} holds. In the new code, $x$ therefore violates at least $\rho^\kappa \frac{\tilde{N}_X}{N}d\left(x,\ker(H)\right)$ stabilisers.

    We find that any $Z$-operator $x$ that violated fewer than $\nu\rho^\kappa \frac{N_X}{N}d\left(x,\ker(H)\right)$ stabilisers now violates at least $\rho^\kappa \frac{\tilde{N}_X}{N}d\left(x,\ker(H)\right)$ stabilisers. From this, we may conclude that the new code has soundness at least $\nu \rho^\kappa \frac{N_X}{\tilde{N}_X} \geq \gamma \rho^\kappa$ for some universal constant $\gamma$.

    For the stabiliser weight, suppose that our initial (maximum) stabiliser weight is $w$. The right degree in the $i$-th lossless expander is at most $\left\lceil\frac{N_XD}{M}\right\rceil = \Theta\left(\sqrt{i}2^{i\kappa}\right) \leq  \mathcal{O}\left(\sqrt{\log(1/\rho)}\frac{1}{\rho^\kappa}\right) \leq \mathcal{O}\left(\frac{1}{\rho^{\kappa + \delta_w}}\right)$ for any fixed $\delta_w > 0$. Accordingly, the new stabiliser weight is at most
    \begin{equation}
        \mathcal{O}\left(\frac{w}{\rho^{\kappa+\delta_w}}\right).
    \end{equation}
    Noting again that the hidden constant is universal, the new stabiliser weight is at most
    \begin{equation}
        \gamma_w \frac{w}{\rho^{\kappa+\delta_w}}
    \end{equation}
    for some universal constant $\gamma_w$.

    Lastly, suppose that the qubit degree before this round of soundness amplification is $q$. Noting that the left degree of the $i$-th expander is $D = \Theta\left(\sqrt{i2^{i(1-\alpha)}}\right)$, the qubit degree is now at most
    \begin{equation}
        q + q\Theta\left(\sum_{i=(1/\rho^\kappa)}^{\log(1/\rho)}\sqrt{i2^{i(1-\alpha)}}\right) \leq q + q \Theta\left(\sum_{i=\log(1/\rho^\kappa)}^{\log(1/\rho)}\sqrt{2^{i(1-\alpha+2\delta_q)}}\right) \leq \Theta\left(\frac{q}{\rho^{\frac{1-\alpha}{2}+\delta_q}}\right).
    \end{equation}
    for any fixed $\delta_q > 0$. There is therefore some universal constant $\gamma_q$ such that the new qubit degree is at most
    \begin{equation}
        \gamma_q\frac{q}{\rho^{\frac{1-\alpha}{2}+\delta_q}}.
    \end{equation}    
\end{proof}

Applying multiple rounds of soundness amplification to a code with $\rho \to 0$ as $N \to \infty$ allows us to amplify its soundness to a constant, as is captured in the following result.

\begin{lemma}
Consider a quantum locally testable CSS code with soundness $\rho$, maximum stabiliser weight $w$ and maximum qubit degree $q$ which has $N_X$ and $N_Z$ $X$- and $Z$-stabilisers respectively. Then, there is another quantum locally testable code with constant soundness and of the same dimension, distance and number of qubits as the old one that has stabiliser weight at most $w\;\text{poly}(1/\rho)$ and qubit degree at most $q\;\text{poly}(1/\rho)$. In both cases, the $\text{poly}(1/\rho)$ factors may be taken to be $\mathcal{O}\left(\frac{1}{\rho^{1+\delta}}\right)$ for any desired $\delta > 0$. Finally, the new code has $\tilde{N}_X$ and $\tilde{N}_Z$ $X$- and $Z$-stabilisers, where $N_X \leq \tilde{N}_X$ and $\tilde{N}_X = \mathcal{O}(\text{polylog}\;(1/\rho)N_X)$, and similarly for $\tilde{N}_Z$.\label{SALemma}
\end{lemma}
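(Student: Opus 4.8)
The plan is to iterate Lemma \ref{oneRoundSALemma} a bounded number of times, tracking how the soundness, stabiliser weight, qubit degree, and number of stabilisers evolve. First I would set up the recursion: writing $\rho_0 = \rho$ for the initial soundness and $\rho_{j+1} \geq \gamma \rho_j^\kappa$ for the soundness after round $j+1$, where $\kappa = (1+\alpha)/2 \in (1/2,1)$ and $\gamma$ is the universal constant from Lemma \ref{oneRoundSALemma}. Since $\kappa < 1$, the map $\rho \mapsto \gamma \rho^\kappa$ drives any sub-constant $\rho$ towards the fixed point $\rho^* = \gamma^{1/(1-\kappa)}$, which is a universal constant; concretely, after $T = \mathcal{O}(\log\log(1/\rho))$ rounds the soundness rises above a fixed universal constant $\mu'$ (one must check $\mu' \leq \mu$ so that Lemma \ref{oneRoundSALemma} still applies at every intermediate step, which holds by taking $\mu$ small and noting the $\rho_j$ are increasing once they start below the fixed point). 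One should be slightly careful at the last round, where $\rho_{T-1}$ may already be close to constant: here we either invoke the remark after Lemma \ref{oneRoundSALemma} that the one-round argument extends to amplifying a constant to a larger constant, or simply stop once $\rho_j \geq \mu'$ and note the soundness is then constant. The key point is that distance and dimension are \emph{exactly} preserved in each round — the stabiliser \emph{group} is unchanged since new stabilisers are sums of old ones — so they are preserved over all $T$ rounds, and the number of physical qubits $N$ never changes.

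Next I would bound the accumulated stabiliser weight and qubit degree. From Lemma \ref{oneRoundSALemma}, round $j+1$ multiplies the weight by at most $\gamma_w / \rho_j^{\kappa + \delta_w}$ and the qubit degree by at most $\gamma_q / \rho_j^{(1-\alpha)/2 + \delta_q}$. Hence after $T$ rounds the weight is at most
\begin{equation}
    w \cdot \prod_{j=0}^{T-1} \frac{\gamma_w}{\rho_j^{\kappa+\delta_w}}.
\end{equation}
The product $\prod_{j} \gamma_w$ is $\gamma_w^T = \text{polylog}(1/\rho)$, which is absorbed into the final exponent. For the interesting factor, since $\rho_{j} \geq \gamma \rho_{j-1}^\kappa$, unfolding the recursion gives $\rho_j \geq c\, \rho^{\kappa^j}$ for a universal constant $c$ (strictly, $\rho_j \geq \gamma^{1/(1-\kappa)}(\rho/\gamma^{1/(1-\kappa)})^{\kappa^j}$), so $1/\rho_j \leq C (1/\rho)^{\kappa^j}$. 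Therefore
\begin{equation}
    \prod_{j=0}^{T-1} \frac{1}{\rho_j^{\kappa+\delta_w}} \leq C^T \left(\frac1\rho\right)^{(\kappa+\delta_w)\sum_{j\geq 0}\kappa^j} = \text{polylog}(1/\rho) \cdot \left(\frac1\rho\right)^{\frac{\kappa+\delta_w}{1-\kappa}}.
\end{equation}
With $\kappa = (1+\alpha)/2$ one computes $\frac{\kappa}{1-\kappa} = \frac{1+\alpha}{1-\alpha}$, which tends to $1$ as $\alpha \to 0$; choosing $\alpha$ small and $\delta_w$ small makes the overall exponent at most $1+\delta$ for any prescribed $\delta > 0$ (the polylogarithmic prefactor is itself swallowed into $1/\rho^{\delta'}$). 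The qubit-degree calculation is identical with $\kappa$ replaced by $(1-\alpha)/2$; there $\frac{(1-\alpha)/2}{1-\kappa} = \frac{(1-\alpha)/2}{(1-\alpha)/2} = 1$, so the exponent is again $1 + \delta$. For the number of stabilisers, each round sends $N_X \mapsto \Theta(N_X)$ with a universal constant factor bounded by, say, $2$ (from $\tilde N_X = N_X + \sum_i N_X/2^{i\alpha} = \Theta(N_X)$), so after $T = \mathcal{O}(\log\log(1/\rho))$ rounds we get $\tilde N_X \leq 2^{\mathcal O(T)} N_X = \text{polylog}(1/\rho)\, N_X$, and $N_X \leq \tilde N_X$ trivially since old stabilisers are retained; the same holds for $N_Z$ by running the procedure on the $Z$-side.

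The main obstacle — and the place requiring genuine care rather than routine bookkeeping — is ensuring that the universal constants from Lemma \ref{oneRoundSALemma} genuinely remain universal across an \emph{unbounded} number of rounds. Lemma \ref{oneRoundSALemma}'s hypothesis is that the soundness lies below a universal threshold $\mu$, and its conclusion's constants $\gamma, \gamma_w, \gamma_q$ are universal; the subtlety is that $\mu$ is chosen to make the index $i$ (which ranges over $\log(1/\rho^\kappa)$ to $\log(1/\rho)$) large enough that several inequalities hold. As $j$ grows, $\rho_j$ increases, so $i$ shrinks towards a constant — one must confirm that the inequalities in the proof of Lemma \ref{oneRoundSALemma} (e.g. $\epsilon \leq 1/4$, $(\epsilon/D)^{1/(\epsilon D - 1)}$ bounded, $\nu\rho^\kappa > \rho$) degrade gracefully and still hold, with the \emph{same} $\mu$, throughout, and handle the terminal round separately as noted above. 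Once this uniformity is pinned down, combining the geometric-sum exponent bounds with the $\text{polylog}$ absorption completes the proof.
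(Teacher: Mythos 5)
Your proposal is correct and follows essentially the same route as the paper's proof: iterate Lemma \ref{oneRoundSALemma} for $\Theta(\log\log(1/\rho))$ rounds until the soundness exceeds the universal threshold $\min(\mu,\gamma^{1/(1-\kappa)}/2)$, use invariance of the stabiliser group to preserve distance, dimension and qubit count, and bound the accumulated weight, degree and stabiliser counts via the geometric sums $\sum_j \kappa^j \leq 1/(1-\kappa)$ with $\gamma_w^n,\gamma_q^n,\gamma_X^n$ absorbed into $\text{polylog}(1/\rho)$ factors. The uniformity concern you flag is handled in the paper exactly as you suggest, by only applying a round while the soundness is below the universal threshold.
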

\begin{proof}[Proof of Lemma \ref{SALemma}]
    If the input code already has constant soundness, then we do nothing. If the input code has soundness approaching zero as $N \to \infty$, then we apply multiple rounds of soundness amplification. Soundness amplification leaves the number of qubits, distance and dimension unchanged because, as previously observed, the stabiliser group is invariant.
    
    For each member of the code family, we consider its soundness. If its soundness is below the universal constant $\min(\mu,\gamma^{\frac{1}{1-\kappa}}/2)$ (where we recall that $\mu$ and $\gamma$ are universal constants given by Lemma \ref{oneRoundSALemma} and its proof, $\kappa = \frac{1+\alpha}{2}$ and $\alpha \in (0,1)$ is a universal constant that may be chosen), then we amplify its soundness until it is above this value. We note that $\rho < \mu$ allows us to obtain a new soundness after one round of soundness amplification of at least $\gamma \rho^\kappa$. Furthermore, if $\rho < \gamma^{\frac{1}{1-\kappa}}$, then obtaining a new soundness of at least $\gamma \rho^\kappa$ represents an increase and indeed the soundness will approach the value $\gamma^{\frac{1}{1-\kappa}}$ by repeated application of soundness amplification; in particular it will eventually exceed $\gamma^{\frac{1}{1-\kappa}}/2$.

    After $n$ rounds of applying soundness amplification, we have soundness at least
    \begin{equation}
        \gamma^{1+\kappa + \kappa^2 + ... + \kappa^{n-1}}\rho^{\kappa^n} > \gamma^{\frac{1}{1-\kappa}}\rho^{\kappa^n}
    \end{equation}
    and so we apply $n = \Theta\left(\log\log(1/\rho)\right)$ rounds to obtain constant soundness. Moreover, after $n$ rounds, the stabiliser weight will be at most
    \begin{equation}
        \gamma_w^n\frac{w}{\rho^{(\kappa+\delta_w)(1+\kappa+\kappa^2+...+\kappa^{n-1})}} < \gamma_w^n\frac{w}{\rho^\frac{\kappa+\delta_w}{1-\kappa}} = \Theta\left(\log(1/\rho)\right)\frac{w}{\rho^\frac{\kappa+\delta_w}{1-\kappa}} = \mathcal{O}\left(\frac{w}{\rho^\frac{\kappa+\delta_w'}{1-\kappa}}\right)\label{finalStabWeight}
    \end{equation}
    where we recall that $\delta_w>0$ is a constant that may be freely chosen. $\delta_w'>0$ may similarly be freely chosen (depending on $\delta_w$). In the same way, after $n$ rounds, the qubit degree will be at most
    \begin{equation}
        \gamma_q^n\frac{q}{\rho^\frac{(1-\alpha)/2+\delta_q}{1-\kappa}} = \mathcal{O}\left(\frac{q}{\rho^\frac{(1-\alpha)/2+\delta_q'}{1-\kappa}}\right).\label{finalQubitDegree}
    \end{equation}
    for any $\delta_q'>0$ that we may choose freely (depending on $\delta_q$). We see that by choosing small enough universal constants $\alpha, \delta_w', \delta_q'$, we obtain a factor in the denominator of the right-hand sides of Equations \eqref{finalStabWeight} and \eqref{finalQubitDegree} of $\rho^{1+\delta}$ for any desired $\delta > 0$.
    
    Lastly, we recall from the proof of Lemma \ref{oneRoundSALemma} that after one round of soundness amplification, the new number of $X$-stabilisers is at least the old number and at most $\gamma_XN_X$ for some universal constant $\gamma_X$. After $n$ rounds, we will have at most $\gamma_X^nN_X = \mathcal{O}(\text{polylog}(1/\rho)N_X)$ $X$-stabilisers. The same argument holds for the $Z$-stabilisers.
\end{proof}

\subsection{Existence of the Lossless Expanders for Soundness Amplification}\label{expanderExistenceSection}

In this section, we prove Lemma \ref{expanderExistenceLemma} which gives the existence of the lossless expanders used in the soundness amplification procedure. The methods used are quite standard, although the proof is included because we are unaware of a similar proof with a similar bound on the degree of the right vertices, which is important for our construction. Indeed, to impose this, we use a slightly non-standard sampling procedure in the probabilistic method.

\begin{proof}[Proof of Lemma \ref{expanderExistenceLemma}]
    Consider a bipartite graph with $N_X$ left vertices and $M$ right vertices with left degree $D$. We consider there to be $N_XD$ half-edges leaving the left vertices and $N_XD$ half-edges leaving the right vertices. To do this, on the right side, we fix some set of the vertices to have degree $\left\lceil\frac{N_XD}{M}\right\rceil$, and the rest will have degree $\left\lfloor\frac{N_XD}{M}\right\rfloor$. One by one, we join the half-edges on the left side to the half-edges on the right side uniformly at random, therefore sampling uniformly from the $(N_XD)!$ possible graphs. We note that the distribution over the graphs obtained is the same regardless of the order in which the edges are joined.

    For any $K \leq K_{max}$, let $p_K$ be the probability that there is some set of left vertices $S$ of size $K$ for which $|N(S)| < (1-\epsilon)KD$. We consider some fixed such set $S$ and the $KD$ half-edges leaving $S$ and joining to vertices $v_1, ..., v_{KD}$ on the right-hand side, where there may be repeats in this list. The probability that $|N(S)| < (1-\epsilon)KD$ is less than or equal to the probability that there is some $\epsilon KD$-subset $\{v_{i_1}, ..., v_{i_{\epsilon KD}}\}$\footnote{This is not strictly a set but a multiset because vertices may be repeated within it.} for which every member is repeated outside of this subset within the list $v_1, ..., v_{KD}$. Consider some fixed such subset $\{v_{i_1}, ..., v_{i_{\epsilon KD}}\}$.

    Since the distribution over the graphs obtained is the same irrespective of the order in which the destinations of the half-edges are chosen, we imagine first that the vertices within $v_1, ..., v_{KD}$ are chosen outside of the subset $\{v_{i_1}, ..., v_{i_{\epsilon KD}}\}$, immediately followed by this subset. The probability that each of the vertices in this subset is chosen to be a vertex outside of this subset but within $v_1, ..., v_{KD}$ is upper bounded by the probability of the same event happening when $\{v_{i_1}, ..., v_{i_{\epsilon KD}}\}$ are instead chosen with a uniform and independent distribution over the right vertices\footnote{This is true because, under the sampling procedure, a right vertex becomes less likely to be chosen again if it has already been chosen.}, which is less than or equal to $\left(\frac{KD}{M}\right)^{\epsilon KD}$.

    By a union bound over all the $\epsilon KD$-subsets $\{v_{i_1}, ..., v_{i_{\epsilon KD}}\}$ of $v_1, ..., v_{KD}$, we find that the probability that this fixed $S$ has fewer than $(1-\epsilon)KD$ neighbours is at most $\begin{pmatrix} KD \\ \epsilon KD \end{pmatrix} \left(\frac{KD}{M}\right)^{\epsilon KD}$. By a further union bound over all possible such sets $S$, we find that
    \begin{equation}
        p_K \leq \begin{pmatrix} N_X \\ K \end{pmatrix} \begin{pmatrix} KD \\ \epsilon KD \end{pmatrix} \left(\frac{KD}{M}\right)^{\epsilon KD} \leq \left(\frac{N_X}{K}\frac{e^{\epsilon D+1}}{\epsilon^{\epsilon D}}\frac{K^{\epsilon D}D^{\epsilon D}}{M^{\epsilon D}}\right)^K = \left(\frac{N_X}{M}\frac{e^{\epsilon D+1}}{\epsilon^{\epsilon D}}\frac{K^{\epsilon D-1}D^{\epsilon D}}{M^{\epsilon D-1}}\right)^K
    \end{equation}
    where we used $\begin{pmatrix}n\\k\end{pmatrix}\leq \left(\frac{ne}{k}\right)^k$. Using $K \leq K_{max} = \frac{1}{2e} \frac{\epsilon^{\frac{\epsilon D}{\epsilon D-1}}M}{D^{\frac{\epsilon D}{\epsilon D-1}}}$ gives
    \begin{equation}
        p_K \leq \left(\frac{N_X}{M}\frac{2e^2}{2^{\epsilon D}}\right)^K
    \end{equation}
    and then using $D = \frac{1}{\epsilon}\log(8e^2\frac{N_X}{M})$ gives $p_K \leq 4^{-K}$. Then, the probability that some set of size $K \leq K_{max}$ exists with fewer than $(1-\epsilon)KD$ neighbours is at most
    \begin{equation}
        \sum_{K=1}^{K_{max}}p_K < \frac{1}{2}
    \end{equation}
    and so there exists some graph in the distribution that is a $(K_{max},\epsilon)$-lossless expander.
\end{proof}

\section{Distance Amplification}\label{DASection}

The Alon-Edmunds-Luby (AEL) distance amplification technique \cite{alon1995linear} is a well-established technique in classical coding theory. It is common for this technique to be used to boost the distance of a code, and then one may show that the resulting code (to some extent) inherits a desired property of the original code, such as local testability \cite{kopparty2017high, gopi2018locally, hemenway2019local}. This technique has only been used once before on the quantum side \cite{bergamaschi2022approaching} and has not been applied before to quantum locally testable codes. We show here how it could be used to construct a linear distance qLTC from an already-high distance qLTC, but we emphasise that this technique finds no present applications, but hopefully will in future, because no high-distance qLTCs are known to exist.

Our treatment is similar to that of \cite{bergamaschi2022approaching}, with modifications arising because we specifically want to work with codes over qubits, rather than codes over some larger alphabet. Suppose we have our original qLTC, called the `outer' code, already with a high distance which we wish to increase to linear. We label all of the parameters of this code with a subscript ``out''; for example, this code has $N_{out}$ physical qubits. Distance amplification then proceeds as follows:

\begin{enumerate}
    \item Consider some good quantum CSS LDPC code\footnote{Such a code has linear dimension, linear distance and constant locality. These codes are known to exist \cite{panteleev2022asymptotically,leverrier2022quantum,dinur2023good}.}, which we will call the `inner' code, whose parameters are labelled with a subscript ``in''. The first step of distance amplification is to divide the $N_{out}$ qubits into $b$ blocks of $K_{in}$ qubits each and encode each of these blocks into the inner code\footnote{For the sake of brevity, we avoid rounding issues by assuming that $K_{in}$ divides $N_{out}$.}. Each different set of qubits resulting from every copy of the inner code is referred to as a ``block'', so each block contains $N_{in}$ qubits at this stage, but the number of qubits in a block will change over the course of the construction.
    \item Second, we will apply a pseudorandom permutation to all the $b\cdot N_{in}$ qubits of the concatenated code, according to a bipartite expander graph that we will describe. We think of ``blocks'' as describing positions of qubits, not the qubits themselves, so we think of the qubits as moving from one block to another.
    \item The last step is then to encode each block into another good quantum LDPC CSS code, which we call the ``block'' code. Parameters of this code are denoted with a subscript ``block''. For clarity, we say that, for example, the qubits in the first block before the permutation may fill positions 1 to 10 and after the permutation may fill various positions. It is the new qubits in positions 1 to 10 that are then encoded into the ``block'' code, not the qubits that were previously in positions 1 to 10.
\end{enumerate}

Note that the result of this procedure is a CSS code since, as described in Section 2.3 of \cite{bergamaschi2022approaching}, the concatenation of two CSS codes may be chosen to be a CSS code. 

We will go on to show that if we use a certain permutation, if the inner and block codes each have linear distance, and if the outer code already has high distance, the resulting code will have linear distance, and we can obtain a lower bound on the soundness of the resulting code in terms of that of the outer code. Let us now describe the bipartite graphs used to define the pseudorandom permutation.

\begin{definition}
    An $N_{in}$-regular bipartite graph with vertex set $U \cup V$, where $|U| = |V| = b$, is called $\epsilon$-pseudorandom if for every pair of vertex sets $S \subseteq U$, $T \subseteq V$, 

    \begin{equation}
        \left||E(S,T)|-\frac{N_{in}|S||T|}{b}\right| \leq \epsilon N_{in}\sqrt{|S||T|}.
    \end{equation}
\label{pseudorandom}\end{definition}

By considering Ramanujan graphs, it is then true that explicit families of such graphs (with $b \rightarrow \infty$) exist as long as $N_{in} \geq \frac{4}{\epsilon^2}$ \cite{bergamaschi2022approaching}. With such a graph, we can define the permutation that is made in the second step of the above procedure. At the start of step 2, assume we have a code with $b = \frac{N_{out}}{K_{in}}$ blocks, where each block contains $N_{in}$ qubits. Then, we may associate the $j$-th qubit in the $i$-th block with the $j$-th edge leaving the $i$-th vertex of $U$ in the bipartite graph (where we have assigned some arbitrary numbering to the edges leaving each vertex). Assuming this edge forms the $j'$-th edge arriving at the $i'$-th vertex of $V$, this qubit is moved to the position of the $j'$-th qubit of the $i'$-th block. Given this definition of the permutation, and assuming that the graph satisfies Definition \ref{pseudorandom}, we have the following lemma, which is essential for our distance proof:

\begin{lemma}[Lemma 5.2 of \cite{bergamaschi2022approaching}]
    Consider any numbers $\alpha_{out}, \alpha_{in} \in (0,1)$ and some subset of the qubit blocks $T$ after the permutation of size $|T| \leq \left(\alpha_{in}-\epsilon\sqrt{\frac{\alpha_{in}}{\alpha_{out}}}\right)b$. There are then at most $\alpha_{out}b$ qubit blocks before the permutation containing more than $\alpha_{in}N_{in}$ qubits that get mapped to one of the blocks in $T$.
\label{permLemma}\end{lemma}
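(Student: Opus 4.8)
The plan is to reduce the statement to a single application of the $\epsilon$-pseudorandomness property of the bipartite graph underlying the permutation. First I would translate the combinatorial description of step 2 into graph language: by construction, the qubits sitting in the $i$-th block \emph{before} the permutation that are moved into one of the blocks indexed by $T$ are in bijection with the edges joining the left vertex $u_i \in U$ to the vertex set $T \subseteq V$. Hence ``the $i$-th block contains more than $\alpha_{in}N_{in}$ qubits that get mapped into $T$'' is exactly the statement $|E(\{u_i\},T)| > \alpha_{in}N_{in}$, and the lemma asks us to bound the size of $S := \{\, u_i \in U : |E(\{u_i\},T)| > \alpha_{in}N_{in} \,\}$ by $\alpha_{out}b$.

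Next I would dispose of the degenerate cases. If $\alpha_{in}-\epsilon\sqrt{\alpha_{in}/\alpha_{out}}\le 0$ then the hypothesis $|T|\le(\alpha_{in}-\epsilon\sqrt{\alpha_{in}/\alpha_{out}})b$ forces $T=\emptyset$ and the claim is vacuous; and if $S=\emptyset$ we are trivially done. So assume $\beta := \alpha_{in}-\epsilon\sqrt{\alpha_{in}/\alpha_{out}}>0$ and $S\neq\emptyset$. Summing the defining inequality over $u_i\in S$ gives $|E(S,T)| = \sum_{u_i\in S}|E(\{u_i\},T)| > \alpha_{in}N_{in}|S|$, while Definition \ref{pseudorandom} applied to the pair $(S,T)$ gives $|E(S,T)| \le \frac{N_{in}|S||T|}{b} + \epsilon N_{in}\sqrt{|S||T|}$. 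Combining the two bounds and dividing through by $N_{in}|S|>0$ yields $\alpha_{in} < \frac{|T|}{b} + \epsilon\sqrt{|T|/|S|}$, i.e. $\epsilon\sqrt{|T|/|S|} > \alpha_{in}-\frac{|T|}{b}$.

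Finally I would feed in the hypothesis on $|T|$. Since $|T|/b \le \beta = \alpha_{in}-\epsilon\sqrt{\alpha_{in}/\alpha_{out}}$, we get $\alpha_{in}-|T|/b \ge \epsilon\sqrt{\alpha_{in}/\alpha_{out}} > 0$, hence $\sqrt{|T|/|S|} > \sqrt{\alpha_{in}/\alpha_{out}}$, hence $|S| < \frac{\alpha_{out}}{\alpha_{in}}|T| \le \frac{\alpha_{out}}{\alpha_{in}}\,\beta\,b < \frac{\alpha_{out}}{\alpha_{in}}\,\alpha_{in}\,b = \alpha_{out}b$, which is the desired conclusion (in fact with strict inequality).

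As for the main obstacle: there is no deep one here — this is the standard ``degree into a fixed set'' estimate for pseudorandom/expander graphs. The only point requiring care is making the two square-root contributions cancel correctly, i.e. checking that the threshold $\alpha_{in}-\epsilon\sqrt{\alpha_{in}/\alpha_{out}}$ appearing in the hypothesis is precisely the quantity that closes the inequality; the rest is bookkeeping the $N_{in}$-regularity and $|U|=|V|=b$ normalisations in Definition \ref{pseudorandom}, and keeping the strict versus non-strict inequalities straight so that the bound emerges as $|S|\le\alpha_{out}b$. Notably, no rounding issues of the kind suppressed elsewhere in the paper arise, since the argument produces a direct numerical inequality on $|S|$.
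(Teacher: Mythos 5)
Your proposal is correct and follows essentially the same argument as the paper: define $S$ as the set of offending pre-permutation blocks, lower-bound $|E(S,T)|$ by $\alpha_{in}N_{in}|S|$ from the definition of $S$, upper-bound it via the $\epsilon$-pseudorandomness property, and combine with the hypothesis on $|T|$ to force $|S| \leq \alpha_{out}b$. The only difference is presentational — you derive the bound on $|S|$ directly while the paper phrases it as a contradiction under the assumption $|S|/b > \alpha_{out}$ — and your handling of the degenerate cases is a harmless addition.
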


\begin{proof}
    Let $S$ be the set of qubit blocks which contain more than $\alpha_{in}N_{in}$ qubits that get mapped to one of the blocks of $T$. We aim to show that $|S| \leq \alpha_{out}b$. We imagine $S$ and $T$ as subsets of the vertex sets $U$ and $V$ respectively in the bipartite graph that we use for the permutation. We have 
    
    \begin{equation}
        |E(S,T)| \geq |S| \alpha_{in}N_{in}\label{pseudorandomEqn1}
    \end{equation}
    by the definition of $S$. Also, we have from Definition \ref{pseudorandom} that

    \begin{align}
        |E(S,T)| &\leq \frac{N_{in}|S||T|}{b} + \epsilon N_{in}\sqrt{|S||T|}\\
        &< |S|\alpha_{in}N_{in}-|S|\epsilon N_{in}\sqrt{\frac{\alpha_{in}}{\alpha_{out}}}+|S|\epsilon N_{in}\sqrt{\frac{\alpha_{in}}{|S|/b}}\label{pseudorandomEqn2}
    \end{align}
    where in the first term we use $|T| \leq \left(\alpha_{in}-\epsilon\sqrt{\frac{\alpha_{in}}{\alpha_{out}}}\right)b$ and in the second term we use simply $|T| < \alpha_{in}b$. Equations \eqref{pseudorandomEqn1} and \eqref{pseudorandomEqn2} are in contradiction if $\frac{|S|}{b} > \alpha_{out}$, and so we have $\frac{|S|}{b} \leq \alpha_{out}$, and then the result.
\end{proof}

The proof of the distance of the final code naturally follows from this lemma. We may prove that the final code has distance at least $d$ by showing that a word can be correctly decoded if there are errors on at most $\frac{d}{2}$ of its physical qubits. With this in mind, we have:

\begin{lemma}
    Let parameters with and without a tilde be those of the final, distance amplified code, and the original, `outer' code, respectively. For any code, we let the relative distance be denoted $\Delta = \frac{d}{N}$ i.e. the distance divided by the number of physical qubits. The final code has relative distance

    \begin{equation}
        \tilde{\Delta} \geq \Delta_{block}\left(\frac{\Delta_{in}}{2}-\epsilon\sqrt{\frac{\Delta_{in}}{\Delta_{out}}}\right).
    \end{equation}
\label{DAdistanceLemma}\end{lemma}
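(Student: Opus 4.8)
The plan is to show that the distance-amplified code can correct any error pattern of weight at most $\frac{\tilde{d}}{2}$, where $\tilde{d} = \tilde{\Delta}\tilde{N}$, using a three-stage decoding procedure that mirrors the three-stage construction (block code, then permutation, then inner code). First I would fix a word with errors on at most $\frac{\tilde{d}}{2}$ physical qubits and set the parameters $\alpha_{in} = \Delta_{in}/2$ and $\alpha_{out} = \Delta_{out}$ to invoke Lemma \ref{permLemma}. Call a post-permutation block \emph{bad} if it contains more than $\alpha_{in} N_{in} = \frac{\Delta_{in}}{2} N_{in}$ errors. Since the total number of errors is at most $\frac{\tilde{d}}{2} = \frac{\Delta_{block}}{2}\left(\frac{\Delta_{in}}{2}-\epsilon\sqrt{\frac{\Delta_{in}}{\Delta_{out}}}\right) b \cdot N_{block}$ — here I need to be careful about which ``number of qubits per block'' is current at which stage — the count of bad post-permutation blocks is below the threshold $\left(\alpha_{in}-\epsilon\sqrt{\frac{\alpha_{in}}{\alpha_{out}}}\right) b$ appearing in Lemma \ref{permLemma}, essentially because the block code has relative distance $\Delta_{block}$ so a block with at most $\frac{\Delta_{block}}{2}$ fraction of errors decodes correctly.

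The key steps, in order, would be: (1) Decode each post-permutation block using the block code; a block decodes correctly provided its error fraction is below $\frac{\Delta_{block}}{2}$, so the set $T$ of blocks that fail decoding has size $|T| \le \left(\alpha_{in}-\epsilon\sqrt{\frac{\alpha_{in}}{\alpha_{out}}}\right)b$ by the weight budget. (2) Apply Lemma \ref{permLemma} with this $T$: at most $\alpha_{out} b = \Delta_{out} b$ of the pre-permutation blocks send more than $\alpha_{in} N_{in} = \frac{\Delta_{in}}{2} N_{in}$ of their qubits into $T$. (3) Invert the permutation; a pre-permutation block whose image avoids this collection of ``heavily-corrupted-into-$T$'' blocks has at most $\frac{\Delta_{in}}{2} N_{in}$ erroneous (or unrecovered) qubits, hence decodes correctly under the inner code. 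So after inner decoding, fewer than $\Delta_{out} b$ of the $b$ outer-code ``symbols'' (= $K_{in}$-qubit blocks) are wrong. (4) Finally, $b = N_{out}/K_{in}$ and the outer code has relative distance $\Delta_{out}$, so strictly fewer than $\Delta_{out} b$ erroneous blocks means strictly fewer than $d_{out}$ erroneous qubits at the outer level (using that each wrong block contributes some qubits but the count of wrong blocks is below the outer distance in block-units), and the outer code corrects this. Reassembling, the whole word is decoded, proving the distance bound; then $\tilde{\Delta} = \tilde{d}/\tilde{N} \ge \Delta_{block}\left(\frac{\Delta_{in}}{2}-\epsilon\sqrt{\frac{\Delta_{in}}{\Delta_{out}}}\right)$ after tracking how $\tilde{N}$ relates to $b$, $N_{in}$, $N_{block}$.

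The main obstacle I anticipate is the bookkeeping of qubit counts across the three encoding stages and making the ``erroneous blocks $\Rightarrow$ erroneous outer symbols'' step rigorous at the CSS level rather than for a single classical code — in particular, arguing that if a block of qubits is decoded to the wrong logical value, it corresponds to corrupting exactly the $K_{in}$ logical qubits of the outer code associated with that block, and that the outer code's distance (in the metric of number of corrupted $K_{in}$-blocks, which is the relevant ``alphabet'' here) is genuinely $\Delta_{out} b$. One must also handle the CSS structure: the argument should be run separately for $X$-type and $Z$-type errors, or else phrased purely in terms of decoding radius, using that the concatenation of CSS codes is CSS (as noted in the paragraph before Definition \ref{pseudorandom}). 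A secondary subtlety is ensuring the inequality $N_{in} \ge 4/\epsilon^2$ needed for the pseudorandom graph to exist is compatible with the chosen $\epsilon$; but that is a parameter-choice matter rather than a genuine obstruction, and the distance statement as phrased simply carries $\epsilon$ along, so I would not belabour it.
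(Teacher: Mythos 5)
Your overall strategy is exactly the paper's: bound the number of final blocks with more than $\tfrac{d_{block}}{2}$ errors, decode the rest with the block code, feed the surviving bad blocks into Lemma \ref{permLemma} as the set $T$, decode with the inner code, and finish with the outer code. However, there is a genuine gap in your instantiation of Lemma \ref{permLemma}: you set $\alpha_{out}=\Delta_{out}$, whereas the argument requires $\alpha_{out}=\Delta_{out}/2$ (the paper's choice, together with $\alpha_{in}=\Delta_{in}/2$, which is also what makes the threshold $\bigl(\alpha_{in}-\epsilon\sqrt{\alpha_{in}/\alpha_{out}}\bigr)b$ come out to exactly $\bigl(\tfrac{\Delta_{in}}{2}-\epsilon\sqrt{\Delta_{in}/\Delta_{out}}\bigr)b$, matching the weight budget). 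With your choice, after inner decoding you are left with up to $\Delta_{out}\,b$ corrupted outer symbols, i.e.\ up to $\Delta_{out}\,b\,K_{in}=d_{out}$ erroneous qubits at the outer level. Your step (4) then asserts that ``strictly fewer than $d_{out}$ erroneous qubits'' is correctable by the outer code; this conflates the error-detection radius with the error-correction radius. A code of distance $d$ corrects only up to $\lfloor (d-1)/2\rfloor$ errors, so $d_{out}-1$ errors need not be uniquely decodable, and the argument fails at the last stage. Taking $\alpha_{out}=\Delta_{out}/2$ leaves at most $\tfrac{\Delta_{out}}{2}b$ bad outer symbols, hence at most $\tfrac{d_{out}}{2}$ erroneous outer qubits, which is within the correction radius; this repairs the proof and recovers the stated bound. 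Your remaining bookkeeping concerns (qubit counts per block at each stage, running the argument per error type using the CSS structure) are handled implicitly in the paper and do not present further obstructions.
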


\begin{proof}
    Suppose there are errors on a fraction of the qubits of the final code of size at most

    \begin{equation}
        \frac{\Delta_{block}}{2}\left(\frac{\Delta_{in}}{2}-\epsilon\sqrt{\frac{\Delta_{in}}{\Delta_{out}}}\right), 
    \end{equation}
    meaning there are at most $\frac{d_{block}}{2}\left(\frac{\Delta_{in}}{2}-\epsilon\sqrt{\frac{\Delta_{in}}{\Delta_{out}}}\right)b$ errorful qubits in the final code. There are therefore at most $\left(\frac{\Delta_{in}}{2}-\epsilon\sqrt{\frac{\Delta_{in}}{\Delta_{out}}}\right)b$ blocks containing more than $\frac{d_{block}}{2}$ errorful qubits. We can decode all of the blocks from the block code that contain at most $\frac{d_{block}}{2}$ errorful qubits, leaving at most $\left(\frac{\Delta_{in}}{2}-\epsilon\sqrt{\frac{\Delta_{in}}{\Delta_{out}}}\right)b$ blocks containing errors. Undoing the permutation will, by Lemma \ref{permLemma}, result in at most $\frac{\Delta_{out}}{2}b$ blocks containing more than $\frac{\Delta_{in}}{2}N_{in}$ errorful qubits. We can decode all the blocks from the inner code that contain at most $\frac{\Delta_{in}}{2}N_{in}$ errorful qubits, leaving at most $\frac{\Delta_{out}}{2}b$ blocks containing errors. This implies that we have $N_{out}$ qubits, of which at most $\frac{\Delta_{out}}{2}N_{out} = \frac{d_{out}}{2}$ are errorful, which is a word we may decode from the outer code.
\end{proof}

With this expression, we must take care that our final relative distance is in fact constant. Indeed, given that $\Delta_{block} = \Delta_{in} = \Theta(1)$, we take $\epsilon = \Theta(\sqrt{\Delta_{out}})$ to ensure $\tilde{\Delta} = \Theta(1)$. We must keep in mind, however, that for the existence of the expander graphs, we require $N_{in} = \Omega\left(\frac{1}{\epsilon^2}\right)$. It will make sense for the sake of local testability to have $N_{in}$ as small as possible, and so we indeed let $N_{in} = \Theta\left(\frac{1}{\epsilon^2}\right) = \Theta\left(\frac{1}{\Delta_{out}}\right)$. Seeing this, we find that we can, at least in principle, apply this technique in all cases where $\Delta_{out} = \omega\left(\frac{1}{N_{out}}\right)$, because we require the number of blocks $b = \frac{N_{out}}{K_{in}} = \Theta\left(\frac{N_{out}}{N_{in}}\right) \rightarrow \infty$. In practice, this technique would only be useful with $\Delta_{out}$ being greater than inverse polynomial, because using this technique with inverse polynomial $\Delta_{out}$ would quickly lead to very poor soundness and locality.

With this in mind, we will state and prove the full parameters of the final code in terms of the original `outer' code.

\begin{lemma}[Formal Version of Theorem \ref{DAInformalThm}]\label{DALemma}
    Let parameters with a tilde refer to those of the final code and parameters with a subscript ``out'' refer to those of the original code, where we require $d_{out} = \omega(1)$. The `inner' and `block' codes, denoted respectively with a subscript ``in'' and ``block'', are both good quantum LDPC CSS codes, where $K_{block} = N_{in}$, $N_{in}$ = $\Theta(N_{X,in}) = \Theta(N_{Z,in})$ and $N_{block} = \Theta(N_{X,block}) = \Theta(N_{Z,block})$. We define
    
    \begin{align}
        b &= \frac{N_{out}}{K_{in}}\\
        \hat{\rho}_{Z,out} &= \frac{N_{X,out}\rho_{Z,out}}{N_{out}}\\
        \hat{\rho}_{X,out} &= \frac{N_{Z,out}\rho_{X,out}}{N_{out}}\\
        \alpha_Z &= \frac{\hat{\rho}_{Z,out}}{N_{in}K_{in}w_{out}+\hat{\rho}_{Z,out}+K_{in}(1+N_{in})\left(\hat{\rho}_{Z,out}+w_{out}\right)+1}\\
        \alpha_X &= \frac{\hat{\rho}_{X,out}}{N_{in}K_{in}w_{out}+\hat{\rho}_{X,out}+K_{in}(1+N_{in})\left(\hat{\rho}_{X,out}+w_{out}\right)+1}.
    \end{align}With $N_{in} = \Theta\left(\frac{N_{out}}{d_{out}}\right)$, we have

    \begin{enumerate}
        \item $\tilde{N} = bN_{block}$, $\tilde{N}_X = N_{X,out} + bN_{X,in} + bN_{X,block}$, $\tilde{N}_Z = N_{Z,out} + bN_{Z,in} + bN_{Z,block}$.
        \item $\tilde{K} = K$.
        \item $\tilde{w} = \OO(w_{out}N_{in}^2)$.
        \item $\tilde{d} = \Theta(\tilde{N})$.
        \item $\tilde{\rho}_Z \geq \frac{\tilde{N}}{\tilde{N}_X}\frac{\alpha_Z}{N_{in}N_{block}}$.
        \item $\tilde{\rho}_X \geq \frac{\tilde{N}}{\tilde{N}_Z}\frac{\alpha_X}{N_{in}N_{block}}$.
    \end{enumerate}
    where we recall that $w$ refers to the overall locality of a code.
\end{lemma}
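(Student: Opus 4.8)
The plan is to establish each of the six items largely independently, treating distance amplification as a concatenation followed by a pseudorandom permutation followed by another concatenation, so that the non-soundness parameters (items 1--4) follow from bookkeeping together with Lemma \ref{DAdistanceLemma}, while the soundness bounds (items 5--6) require a genuinely new argument. For item 1, I would simply count: after step 1 the code has $b$ blocks of $N_{in}$ qubits, the permutation of step 2 is a bijection on qubits so the count is unchanged, and step 3 replaces each block of $N_{in}=K_{block}$ qubits by a copy of the block code, giving $\tilde N = bN_{block}$ physical qubits; the stabiliser counts add contributions from the (unchanged) outer checks, the $b$ copies of the inner code's checks, and the $b$ copies of the block code's checks. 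Item 2 is immediate since concatenation composes encoding maps and a permutation is an isomorphism, so the logical dimension is preserved; here I would invoke the fact (cited to Section 2.3 of \cite{bergamaschi2022approaching}) that the concatenation of CSS codes is CSS, so no stabiliser commutation subtleties arise. Item 3 is again a weight count: a check of the block or inner code has constant weight in terms of that code's own qubits, but an outer check, when pushed through two layers of encoding, can touch up to $\OO(w_{out})$ blocks at the inner level and then, because each inner-encoded bit spreads across a block which is re-encoded into the block code, picks up a further $\OO(N_{in})$ factor at each of the two concatenation layers, yielding $\OO(w_{out}N_{in}^2)$; I would also check the qubit-degree side of locality is no worse. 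Item 4 is exactly Lemma \ref{DAdistanceLemma} with $\Delta_{block}=\Delta_{in}=\Theta(1)$ and the stated choice $\epsilon=\Theta(\sqrt{\Delta_{out}})$, $N_{in}=\Theta(1/\Delta_{out})=\Theta(N_{out}/d_{out})$, which makes $\tilde\Delta=\Theta(1)$ while keeping $b=\Theta(N_{out}/N_{in})\to\infty$ so the Ramanujan graphs of Definition \ref{pseudorandom} exist.

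The substance is in items 5 and 6, and by the $X$/$Z$ symmetry of the construction it suffices to prove the $Z$-soundness bound (item 5), with item 6 following by applying the same argument to the dual code. The strategy mirrors the soundness proofs in the weight-reduction sections: take a $Z$-operator $u$ on the $\tilde N$ physical qubits, show how to ``put $u$ in the code'' (flip it into $\ker\tilde H_X$) by a bounded number of bit flips, and argue that flipping that many bits forces many stabilisers to have been violated. Concretely I would decode in three stages matching the three construction steps in reverse. First, within each block I correct $u$ to the nearest codeword of the block code; the number of flips is controlled by the block code's (good) distance and soundness, and any block requiring a large correction certifies $\Omega(d_{block})$ violated block-code $X$-checks. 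Second, having reached a block-code codeword in each block, the block-code logical data in each block is an inner-code word (after undoing the permutation, which moves no qubits between blocks at the logical level); I correct each to the nearest inner codeword, again paying in inner-code distance and certifying violated inner checks. Third, the inner logical data across all blocks forms a word on $N_{out}$ qubits, which I correct to the nearest outer codeword, paying $d(\,\cdot\,,\ker H_{X,out})$ and certifying $\hat\rho_{Z,out}$ times that many violated outer checks. Summing the three flip-counts bounds $d(u,\ker\tilde H_X)$, and a case analysis (splitting on which stage contributed the dominant share of the flips, exactly as in the copying and coning lemmas, with a parameter $\alpha$ chosen to balance the cases) converts the certified violations into the claimed bound; the normalisation $\frac{\tilde N}{\tilde N_X}\frac{1}{N_{in}N_{block}}$ and the explicit $\alpha_Z$ come out of tracking how one outer violation, diluted across $b$ blocks of $N_{block}$ qubits and then weighted by the inner and block structure, translates into a fraction of the $\tilde N_X$ checks.

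The main obstacle I anticipate is the interaction between the pseudorandom permutation and the decoding chain: the clean three-stage decoding works because each stage respects the block structure, but the permutation deliberately scrambles qubits across blocks, so I must be careful that the errors seen by the inner decoder after undoing the permutation are genuinely localised. This is exactly where Lemma \ref{permLemma} enters --- it guarantees that a small set of still-errorful blocks after the block-code stage pulls back, under the inverse permutation, to a small set of blocks each only lightly corrupted --- and the quantitative choice $|T|\le(\alpha_{in}-\epsilon\sqrt{\alpha_{in}/\alpha_{out}})b$ there is what forces the particular form of the denominators in $\alpha_Z$ and $\alpha_X$. A secondary subtlety is that ``putting $u$ in the code'' at the block level may, after re-encoding, reintroduce violations of outer checks that straddle many blocks; I would handle this the way the cellulation lemma handles internal edges, namely by first using stabilisers (here, block-code and inner-code stabilisers, which are all available in $\tilde C$) to push $u$ into a normal form supported only on the ``logical'' qubits of each layer, so that the counting is clean. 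Once that normal form is set up, the remaining work is the routine two- or three-way case split and the algebra of choosing the balancing parameter, which I would not expand here.
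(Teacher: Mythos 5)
Your handling of items 1--4 is essentially the paper's (qubit/stabiliser counting, preservation of dimension, the $w_{out}N_{in}N_{block}=\OO(w_{out}N_{in}^2)$ weight propagation through the two concatenation layers, and Lemma \ref{DAdistanceLemma} for the distance), and your overall architecture for the soundness bound -- reverse three-stage decoding, a case split on which stage dominates, and a balancing parameter -- is also the right skeleton. However, one of your central quantitative claims is false: a block of the received word that fails to be a valid block-code codeword certifies only \emph{one} violated block-code check, not $\Omega(d_{block})$ of them. The block and inner codes are merely good LDPC codes, not locally testable codes, so a word far from such a code may violate very few checks. The paper's proof is built on exactly this weaker guarantee -- each invalid (``erased'') block at the block-code or inner-code stage contributes a single violated stabiliser, which is precisely why the factor $\frac{1}{N_{in}N_{block}}$ appears in the final bound. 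An argument that assumes $\Omega(d_{block})$ violations per bad block is not repairable as stated and would prove something stronger than is true.

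Two further points. First, Lemma \ref{permLemma} plays no role in the soundness proof; it is needed only for the distance proof. In the soundness argument the permutation is a bijection on qubit positions, so it is handled by the trivial equalities $d(v^z,C_{cat})=d(w^z,C_{perm})$ and $|E_v|=|E_w|$; the denominators of $\alpha_Z$ come from the erasure bookkeeping across levels (the factors $K_{in}$, $N_{in}$ and $w_{out}$ measuring how an erased block at one level contaminates checks at the next), not from pseudorandomness. Second, you do not address the contamination between stages, which is where the real work lies: when a block is invalid at the block-code stage, its decoded content is undefined, so you must fill it arbitrarily (the paper's $w_0^z$, $v_0^z$, $u_0^z$) and then (i) count as violated inner checks only those coming from blocks of $u^z$ that are erased even though their preimage blocks in $v^z$ contained no erasures (the set $F_u\setminus F_v$), and (ii) subtract from the outer-check count the at most $(|E_v|K_{in}+|E_u|)w_{out}$ outer stabilisers whose support meets an erasure, since the arbitrary filling says nothing about $z$ there. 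Without this accounting the violations certified at different stages need not be genuine violations of $\tilde{C}$, and the three cases cannot be balanced to give $\alpha_Z$.
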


\begin{proof}
    The proof of distance is provided in Lemma \ref{DAdistanceLemma} and the proof of the soundness will be provided in Lemma \ref{DAsoundnessLemma}. We prove the remaining parameters here. Item 1 is true because the code finishes with $b$ blocks filled with $N_{block}$ qubits. There are also three types of X - stabilisers. First, there are the $N_{X,out}$ X - stabilisers of the outer code which get encoded when the initial concatenation happens with the inner code. These then get permuted and then encoded again into the block code. Second, there are the X - stabilisers of the inner code, of which there are $N_{X,in}$ for each of the $b$ copies of the code. These get permuted and then encoded into the block code. Finally, there are the X - stabilisers of the block code, of which there are $N_{X,block}$ for each of its $b$ copies. The same all goes for the Z - stabilisers. Item 2 is immediate because the original qubits encoded into the outer code are the only qubits encoded in the final code.

    For item 3, let us first consider the maximum weight of a stabiliser in the final code, where the following can apply to either X - or Z - stabilisers. Consider the first type of stabiliser in the previous paragraph, one arising from the outer code that gets encoded, permuted, and then finally encoded again. As a stabiliser of the outer code, it has maximum weight $w_{out}$ and the qubits on which it acts may be spread across at most $w_{out}$ blocks. Upon encoding into the inner code, each operator within each block becomes a logical operator of the inner code, which may have weight at most $N_{in}$, so the whole stabiliser may have weight at most $w_{out}N_{in}$. This weight is the same upon permutation. Finally, the qubits on which this stabiliser acts are spread across at most $w_{out}N_{in}$ blocks, and again, each operator within each block becomes a logical operator of the block code, giving a final weight at most $w_{out}N_{in}N_{block} = \OO(w_{out}N_{in}^2)$. It may be checked that the weights of the other stabilisers do not exceed this.

    Let us also consider qubit degree, where the following argument holds for either X or Z type stabilisers. Let us first consider just the stabilisers arising from the outer code. At most $w_{out}$ of these (of either X or Z type) can act on a given qubit in the outer code. After each block is encoded into the inner code, a qubit in a particular block of this concatenated code may be acted on by the encoding of any stabiliser that acted on some qubit of that block in the outer code, because every operator in each block gets encoded to a logical operator of the inner code. Therefore, after this initial concatenation step, each qubit is acted on by at most $w_{out}K_{in}$ stabilisers. This remains true after the permutation step. The same consideration then goes for the final encoding step; the number of stabilisers that may act on a given qubit in a particular block in the final code is at most the number of stabilisers that act somewhere in that block before the concatenation with the block code. This is then at most $w_{out}K_{in}N_{in} = \OO(w_{out}N_{in}^2)$. Contributions to the qubit degree from the other types of stabilisers can be checked to not exceed this.
\end{proof}

We may now prove a lower bound on the soundness parameters of the final code in terms of the original, outer code. There are similar ideas to the local testability proof of \cite{kopparty2017high}, although here we must address the differences in our construction, as well as our different definition of local testability. 

\begin{lemma}
Let parameters with and without a tilde denote those of the final code and the original, `outer' code, respectively. We have
\begin{equation}
    \tilde{\rho}_Z \geq \frac{\tilde{N}}{\tilde{N}_X}\frac{\alpha}{N_{in}N_{block}}
\end{equation}
where
\begin{equation}
    \alpha = \frac{\hat{\rho}_{Z,out}}{N_{in}K_{in}w_{out}+\hat{\rho}_{Z,out}+K_{in}(1+N_{in})\left(\hat{\rho}_{Z,out}+w_{out}\right)+1}.
\end{equation}
and
\begin{equation}
    \hat{\rho}_{Z,out} = \frac{N_{X,out}\rho_{Z,out}}{N_{out}}.
\end{equation}
Under the reasonable assumptions that $N_{out} = \Theta(N_{X,out}) = \Theta(N_{Z,out})$ and $\rho_{out} = \OO(1)$, we have
\begin{equation}
    \tilde{\rho}_Z = \Omega\left(\frac{\rho_{Z,out}}{N_{in}^4w_{out}}\right).
\end{equation}
The corresponding expressions for $\tilde{\rho}_X$ can be found by swapping all X's and Z's in the above expressions.
\label{DAsoundnessLemma}\end{lemma}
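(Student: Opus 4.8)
The plan is to prove the soundness bound by reversing the three-step construction, just as the distance proof (Lemma \ref{DAdistanceLemma}) does, but this time tracking \emph{violated stabilisers} rather than errorful qubits. Fix a $Z$-operator represented by a bit string $u$ on the $\tilde{N}$ physical qubits of the final code; I want to lower-bound the number of violated $X$-stabilisers of the final code in terms of $d(u,\ker \tilde{H}_X)$. As in the copying and coning proofs of Section \ref{wtRedSection}, I will specify an explicit procedure for putting $u$ into the final code, which gives an upper bound on $d(u,\ker \tilde{H}_X)$ as a sum of contributions from each ``undoing'' step, and then argue that each contribution forces a certain number of stabiliser violations. The procedure mirrors the decoding in Lemma \ref{DAdistanceLemma}: first correct, block-by-block, each block whose restriction of $u$ is within decoding radius of the block code; then undo the permutation; then correct each inner block within decoding radius of the inner code; finally put the resulting outer-code word into $\ker H_{X,out}$ using the outer code's soundness guarantee.

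Concretely, I would run the following case analysis. Write $u$'s distance to the final code as (up to constants) a sum $f_{block} + f_{in} + f_{out}$ where $f_{block}$ counts bits flipped while decoding block codes, $f_{in}$ counts bits flipped while undoing the permutation and decoding inner codes, and $f_{out}$ counts bits flipped to land in the outer code, the last being at most $N_{in}N_{block}\, d(\hat u,\ker H_{X,out})$ for the induced outer-code word $\hat u$ (each outer qubit expands into $\le N_{in}N_{block}$ physical qubits). Introduce a parameter $\alpha$ and split: if $f_{block}+f_{in}$ dominates $\alpha\cdot(\text{outer contribution})$, then many block-code or inner-code blocks are ``errorful'' and hence — since the inner and block codes are good LDPC codes with \emph{constant} soundness — each such block carries $\Omega(1)$ violated stabilisers, giving $\Omega((f_{block}+f_{in})/(N_{in}N_{block}))$ violations, which is $\Omega(d(u,\ker\tilde H_X)/(N_{in}N_{block}))$ after using $\alpha$ to balance. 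Here I need the observation that the locality and soundness of the good inner/block codes are $\Theta(1)$, so ``within decoding radius'' and ``a constant fraction of checks violated'' are equivalent up to constants, and also that I can, after permutation, count violations block-by-block because the pseudorandom permutation only moves qubits — the expander property of Definition \ref{pseudorandom} and Lemma \ref{permLemma} controls how block-local errors before the permutation relate to block-local errors after it. In the complementary case, $\hat u$ is at distance $\ge d(\hat u,\ker H_{X,out})$ from the outer code and hence violates $\ge \hat\rho_{Z,out}\, d(\hat u,\ker H_{X,out})$ outer $X$-stabilisers; I then need that most of these outer stabilisers ``survive'' the two encoding steps, i.e. are not among the $O(K_{in}N_{in}+\dots)$ outer stabilisers whose support touched a block that got corrected. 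Subtracting off the corrected-block contribution (bounded using qubit degree $\le w_{out}K_{in}N_{in}$ from Lemma \ref{DALemma}, item 3) leaves a positive fraction of violated outer stabilisers, each of which corresponds to at least one violated $X$-stabiliser in the final code. Choosing $\alpha$ to equate the two lower bounds yields exactly the stated $\alpha_Z$, after collecting the $N_{in}$, $K_{in}$, $w_{out}$ and $\hat\rho_{Z,out}$ factors, and the leading $\frac{\tilde N}{\tilde N_X}\frac{1}{N_{in}N_{block}}$ prefactor comes from converting ``number of violated stabilisers'' into the normalised soundness ratio and from the fact that one outer qubit error becomes up to $N_{in}N_{block}$ physical errors.

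The main obstacle is the bookkeeping at the two encoding boundaries: an errorful block of the \emph{final} code, after undoing the block-code encoding, need not look like a single nicely-localised error on the corresponding pre-encoding block unless that block was genuinely outside the decoding radius, so I must be careful to define the correction procedure so that ``a block we did not correct'' $\Rightarrow$ ``that block violates $\Omega(1)$ of the block code's checks'' — this uses the \emph{soundness}, not merely the distance, of the good LDPC code. The symmetric subtlety on the outer side is that a violated outer stabiliser only certifies a violated final-code stabiliser if its support avoids the corrected blocks, which is why I subtract $w_{out}\times(\text{number of corrected blocks})$ — and I need the number of corrected blocks to itself be controlled by $d(\hat u,\ker H_{X,out})$ (via $f_{out}$ and the $\le \alpha$ case), which is what makes the case split close. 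The permutation step is comparatively benign: Lemma \ref{permLemma} is exactly the tool that lets me pass between ``few errorful blocks after permutation'' and ``few errorful blocks before permutation'' with only an additive $\epsilon\sqrt{\cdot}$ slack, which with our choice $\epsilon=\Theta(\sqrt{\Delta_{out}})$ and $N_{in}=\Theta(1/\Delta_{out})$ is absorbed into the constants. Finally, the $\tilde\rho_X$ statement follows by the identical argument with $X$ and $Z$ interchanged throughout, since the construction is manifestly symmetric in the two stabiliser types (the inner and block codes being CSS with both distances linear).
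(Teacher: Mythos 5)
Your high-level strategy — reverse the three encoding steps, bound $d(u,\ker\tilde H_X)$ by a sum of per-level contributions, and balance cases with a parameter — is indeed the shape of the paper's argument. But there are two genuine gaps. First, you lean on the inner and block codes having ``constant soundness'' and assert that ``within decoding radius'' and ``a constant fraction of checks violated'' are equivalent up to constants. This is not available: the inner and block codes are only required to be \emph{good LDPC} codes (linear distance and dimension, constant locality), and good qLDPC codes are not known to be locally testable — if they were, the paper's main problem would already be solved. The paper's proof deliberately avoids any soundness assumption on these codes: it tests each block for \emph{exact} codeword membership, replaces non-codeword blocks by erasures, and uses only the trivial fact that a non-codeword block violates at least one check. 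That single guaranteed violation per bad block of $\sim N_{in}N_{block}$ qubits is precisely the source of the $\frac{1}{N_{in}N_{block}}$ loss in the stated bound; your stronger (false) premise would, if true, give a much better bound, which should have been a warning sign.

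Second, the boundary bookkeeping that you correctly identify as ``the main obstacle'' is not resolved in your proposal, and it is where the real work lies. The paper introduces partially-erased words $w^z, v^z, u^z$ together with arbitrary completions $w_0^z, v_0^z, u_0^z$ and erasure sets $E_w, E_v, E_u$, and proves a chain of inequalities such as $d(w^z,C_{perm}) \leq d(w_0^z,C_{perm}) + |E_w|$ and $d(v_0^z,C_{cat}) \leq N_{in}\,d(u^z,C_{out})$; it then runs a \emph{three}-way case split on $|E_w|$ and $|E_u|$ with \emph{two} balancing parameters $\alpha,\beta$, separately counting violated block stabilisers, violated encoded inner stabilisers (via $|F_u\setminus F_v|$), and surviving violated encoded outer stabilisers. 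Your two-case split with a single parameter cannot reproduce the stated $\alpha_Z$, whose form comes from equating three expressions. A smaller point: the permutation step requires no expansion property here — Definition \ref{pseudorandom} and Lemma \ref{permLemma} are used only for the distance proof (Lemma \ref{DAdistanceLemma}); for soundness the permutation is a pure relabelling and one only needs $|F_v|\leq|E_v|$, i.e.\ that $|E_v|$ erasures can touch at most $|E_v|$ inner blocks.
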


\begin{proof}
    The construction is symmetric in $Z$ and $X$ and so it will suffice for us to prove the expression for $\tilde{\rho}_Z$ only.

    Let us first define the following codes. We consider the initial outer code, $C_{out}$ on $N_{out}$ qubits. We then define the code $C_{cat}$ to be the code immediately after the concatenation with the inner code, which is on $bN_{in}$ qubits, recalling the number of blocks is $b = \frac{N_{out}}{K_{in}}$. Next, we define the code $C_{perm}$ to be the code immediately after the permutation step, also on $bN_{in}$ qubits. Lastly, the final code, after the concatenation with the block code, is denoted $\tilde{C}$, and is on $\tilde{N}=bN_{block}$ qubits. As usual, we will consider a Z - operator on the final code represented by some bit string $z \in \mathbb{F}_2^{\tilde{N}}$. For brevity, we will abuse notation slightly and also refer to the classical codes defined by the X - stabilisers of various quantum codes by those quantum codes themselves, so for example the classical code defined by the X - stabilisers of the final code will also be denoted $\tilde{C} \subseteq \mathbb{F}_2^{\tilde{N}}$.

    It will make sense to give names to the three types of stabilisers. Recall that there are the stabilisers arising from the outer code that are encoded into the inner code, permuted, and finally encoded into the block code. We call these the ``encoded outer stabilisers''. Secondly, there are the stabilisers arising from the inner code that are permuted and then encoded into the block code. We call these the ``encoded inner stabilisers''. Lastly, there are the stabilisers arising from the block code, which we call the ``block stabilisers''.

    Given the vector $z \in \mathbb{F}_2^{\tilde{N}}$, we may define a vector 

    \begin{equation}
        w^z \in \left(\mathbb{F}_2\cup\{?\}\right)^{bN_{in}}
    \end{equation}
    meaning a vector of length $bN_{in}$ on the set of elements $\{0,1,?\}$, where a ``$?$'' is called an ``erasure''. $w^z$ is defined as follows. For each block of $z$ that is a valid codeword of the block code, we decode this codeword and place the result in the corresponding block of $w^z$. For each block of $z$ that is not a valid codeword of the block code, we define each character in the corresponding block of $w^z$ to be $?$. We let $E_w$ be the set of erasures in $w^z$ i.e.
    \begin{equation}
        E_w =\{i \in [bN_{in}] \text{ s.t.} \left(w^z\right)_i = \; ?\}.
    \end{equation}
    Next, we define a binary vector $w_0^z \in \mathbb{F}_2^{bN_{in}}$ that agrees with $w^z$ outside of $E_w$, but is otherwise arbitrary. We will then undo the action of the permutation on both $w^z$ and $w_0^z$ to define respectively $v^z \in \left(\mathbb{F}_2\cup\{?\}\right)^{bN_{in}}$ and $v_0^z \in \mathbb{F}_2^{bN_{in}}$. We may then define the set of erasures in $v^z$ as $E_v$ i.e.

    \begin{equation}
        E_v =\{i \in [bN_{in}] \text{ s.t.} \left(v^z\right)_i = \; ?\}.
    \end{equation}
    We will then define a vector $u^z \in \left(\mathbb{F}_2\cup\{?\}\right)^{bK_{in}}$ given by taking $v_0^z$ and decoding every block that is a valid codeword of the inner code, and replacing every block that is not with erasures. Lastly, we may define $E_u$ to be the set of erasures in $u^z$ i.e.

    \begin{equation}
        E_u =\{i \in [bK_{in}] \text{ s.t.} \left(u^z\right)_i = \; ?\}.
    \end{equation}
    and define a binary vector $u_0^z \in \mathbb{F}_2^{bK{in}}$ that agrees with $u^z$ outside of $E_u$ but is otherwise arbitrary.

    We may now give some preliminary facts that will be true in all cases. We have that

    \begin{equation}
        d(z,\tilde{C}) \leq N_{block}d(w^z,C_{perm})\label{blockRelation}
    \end{equation}
    because the change of one character in $w^z$ can affect the change of at most $N_{block}$ bits in $z$\footnote{Note that we are able to still talk about the quantity $d(w^z,C_{perm})$ even though $w^z$ is a vector over the set $\mathbb{F}_2\cup\{?\}$, simply as the minimum number of characters that must be changed in $w^z$ in order to put it in the code.}. Next, we have

    \begin{equation}
        d(w^z,C_{perm}) \leq d(w_0^z,C_{perm}) + \left|E_w\right|\label{wErasureRelation}
    \end{equation}
    since $w^z$ can be put in the code $C_{perm}$ by mirroring how $w_0^z$ is put in the code on its non-erasure bits, and then changing its erasure bits as necessary. Since the code $C_{cat}$ is just the ``un-permutation'' of the code $C_{perm}$, we have that 
    
    \begin{align}
        d(v^z,C_{cat}) &= d(w^z,C_{perm})\label{vwEquality}\\
        d(v_0^z,C_{cat}) &= d(w_0^z,C_{perm})\label{vw0Equality}\\
        |E_w| &= |E_v|
    \end{align}
    In particular, we have
    \begin{equation}
        d(v^z,C_{cat}) \leq d(v_0^z,C_{cat}) + |E_v|.
    \end{equation}
    Lastly, similar to Equations \eqref{blockRelation} and \eqref{wErasureRelation}, we have

    \begin{align}
        d(v_0^z,C_{cat}) &\leq N_{in}d(u^z,C_{out})\label{vToURelation}\\
        d(u^z,C_{out}) &\leq d(u_0^z,C_{out}) + |E_u|.
    \end{align}
    With all this in place, we may consider three cases. First, let us consider the case for which $|E_w| \geq \alpha d(w^z,C_{perm})$ for some $\alpha \in (0,1)$ to be determined. For each block in $w^z$ filled with erasures, there must be at least one violated block stabiliser, meaning that there are at least

    \begin{equation}
        \frac{|E_w|}{N_{in}} \geq \frac{\alpha}{N_{in}}d(w^z,C_{perm}) \geq \frac{\alpha}{N_{in}}\frac{d(z,\tilde{C})}{N_{block}}\label{firstVioStabs}
    \end{equation}
    violated X - stabilisers in this case. Next, let us consider the case for which $|E_w| \leq \alpha d(w^z,C_{perm})$ and $|E_u| \geq \beta d(u^z,C_{out})$ for some other number $\beta \in (0,1)$ to be determined. In this case, Equations \eqref{wErasureRelation}, \eqref{vwEquality} and \eqref{vw0Equality} give us that $d(v_0^z,C_{cat})\geq(1-\alpha)d(v^z,C_{cat})$. Let us define $F_v \subseteq [b]$ and $F_u \subseteq [b]$ respectively to be the sets of blocks in $v^z$ and $u^z$ that contain erasures. Note that $u^z$ contains erasures ``uniformly'' in its blocks i.e. if one character in a block of $u^z$ is an erasure, all of the characters in that block are erasures, whereas the erasures in $v^z$ may be spread ``non-uniformly''. We therefore have $|F_u| = \frac{|E_u|}{K_{in}}$, and we also have $|F_v| \leq |E_v|$. We now consider $F_u \setminus F_v$, the set of erasure blocks in $u^z$ for which there were no erasures in $v^z$. The number of violated encoded inner stabilisers is then at least

    \begin{align}
        \left|F_u \setminus F_v\right| &\geq |F_u| - |F_v|\\
        &\geq \frac{|E_u|}{K_{in}}-|E_v|\\
        &\geq \left(\frac{\beta(1-\alpha)}{K_{in}N_{in}}-\alpha\right)d(w^z,C_{perm})\\
        &\geq \left(\frac{\beta(1-\alpha)}{K_{in}N_{in}}-\alpha\right)\frac{d(z,\tilde{C})}{N_{block}}\label{secondVioStabs}
    \end{align}
    where going into the third line we have used Equations \eqref{vwEquality} and \eqref{vToURelation} as well as the facts specific to this case. Going into the last line, we have assumed that we have $\alpha$ and $\beta$ such that $\frac{\beta(1-\alpha)}{K_{in}N_{in}} - \alpha \geq 0$, which will turn out to be true for our eventual choices of $\alpha$ and $\beta$, and we have used Equation \eqref{blockRelation}.

    Lastly, we consider the case for which $|E_w| \leq \alpha d(w^z,C_{perm})$ and $|E_u| \leq \beta d(u^z,C_{out})$ for which we have $d(w_0^z,C_{perm}) \geq (1-\alpha)d(w^z,C_{perm})$ and $d(u_0^z,C_{out}) \geq (1-\beta)d(u^z,C_{out})$. Now, if the unencoded outer stabilisers act on $u_0^z$, there will be at least $\frac{N_{X,out}}{N_{out}}\rho_{Z,out}d(u_0^z,C_{out})$ violated stabilisers. We may then ask how many encoded outer stabilisers will be violated when acting on $z$. There are at most $(|E_v|K_{in} + |E_u|)w_{out}$ unencoded stabilisers from the outer code that, when acting on $u_0^z$, act on qubits that were erasures in $u^z$, or came from blocks containing erasures in $v^z$. There are therefore at least

    \begin{equation}
        \frac{N_{X,out}}{N_{out}}\rho_{Z,out}d(u_0^z,C_{out})-\left(|E_v|K_{in}+|E_u|\right)w_{out}
    \end{equation}
    violated encoded outer stabilisers when checking $z$. After some algebra, we find that this is at least

    \begin{equation}
        \left[  \left(  \frac{N_{X,out}}{N_{out}}\rho_{Z,out}(1-\beta) - \beta w_{out}      \right)\frac{1-\alpha}{N_{in}} - \alpha K_{in}w_{out}     \right]\frac{d(z,\tilde{C})}{N_{block}}\label{finalVioStabs}
    \end{equation}
    where in making this calculation, we assume that our $\alpha$ and $\beta$ are chosen such that the quantity in the square brackets of Equation \eqref{finalVioStabs} is non-negative, which will turn out to be the case.

    We then choose $\alpha$ and $\beta$ such that the right-hand sides of Equations \eqref{firstVioStabs}, \eqref{secondVioStabs} and \eqref{finalVioStabs} are equal, which are, with $\hat{\rho}_{Z,out} = \frac{N_{X,out}\rho_{Z,out}}{N_{out}}$, 

    \begin{align}
    \alpha &= \frac{\hat{\rho}_{Z,out}}{N_{in}K_{in}w_{out}+\hat{\rho}_{Z,out}+K_{in}(1+N_{in})\left(\hat{\rho}_{Z,out}+w_{out}\right)+1}\\
    \beta &=\frac{\hat{\rho}_{Z,out}\left(1+N_{in}\right)K_{in}}{2K_{in}N_{in}w_{out}+1+\hat{\rho}_{Z,out}\left(1+N_{in}\right)K_{in} + K_{in}w_{out}}.
    \end{align}
    The required soundness then follows from Equation \eqref{firstVioStabs}.
\end{proof}

\section*{Acknowledgements}

We are grateful for extensive and fruitful discussions with Matthew Hastings on his weight reduction techniques for quantum codes. We also thank Sergii Strelchuk for his continued support of our work. Thanks to Louis Golowich for a discussion on the asymptotically good testable quantum code. TCL was supported in part by funds provided by the U.S. Department of Energy (D.O.E.) under the cooperative research agreement DE-SC0009919.

\bibliographystyle{unsrt}
\bibliography{references}
\pagebreak
\appendix
\section{Tables of Parameters Under the Weight Reduction Construction}\label{WRFullParams}

We now present detailed tables showing the change in each parameter over each stage of the weight reduction procedure, as discussed in Section \ref{fullWeightRed}. Bounds may be calculated iteratively, moving in rows down the tables, where un-decorated parameters are those of the code pre-weight reduction, and those labelled with a superscript $(4)$ are those of the fully weight reduced code.

\renewcommand{\arraystretch}{1.5}
\begin{table}[h]
\centering
\begin{tabular}{c||c|c|c|c}
Original&\qx&\wx&\qz&\wz\\\hline\hline
Post-Copying&$\qx^{(1)}=\OO(1)$&$\wx^{(1)}=\OO(\wx)$&$\qz^{(1)}=\OO(\qz)$&$\wz^{(1)}=\OO(\qx\wz)$\\\hline
Post-Gauging&$\qx^{(2)}=\OO(1)$&$\wx^{(2)}=\OO(1)$&$\qz^{(2)}=\OO(\wx^{(1)}\qz^{(1)})$&$\wz^{(2)}=\OO(\wz^{(1)}\wx^{(1)})$\\\hline
\makecell{Post-Thickening \\ and Choosing Heights}&$\qx^{(3)}=\OO(1)$&$\wx^{(3)}=\OO(1)$&$\qz^{(3)}=\OO(1)$&$\wz^{(3)}=\OO(\wz^{(2)})$\\
\hline
Reduced Cone Code&$\qx^{(4)}=\OO(1)$&$\wx^{(4)}=\OO(1)$&$\qz^{(4)}=\OO(1)$&$\wz^{(4)}=\OO(1)$
\end{tabular}\caption{Upper bounds on the four weights throughout the four primary steps of the construction.}\label{allWeights}
\end{table}
\renewcommand{\arraystretch}{1}

\vspace{1cm}

\renewcommand{\arraystretch}{1.5}
\begin{table}[h]
\centering
\begin{tabular}{c||c}
Original&$N$\\\hline\hline
Post-Copying&$N^{(1)}=N\qx$\\\hline
Post-Gauging&$N^{(2)}=\Theta\left(N^{(1)}\right)$\\\hline
\makecell{Post-Thickening \\ and Choosing Heights}&$N^{(3)}=\Theta\left(l_1(N^{(2)}+N_X^{(2)})\right)$\\\hline
Cone Code&$N^{(4a)} = \OO(N^{(3)}+N_Z^{(3)}\wz^{(3)})$; $N^{(4a)} = \Omega(N^{(3)}+N_Z^{(3)})$\\\hline
\makecell{Thickened \\ Cone Code} & $N^{(4b)} = \Theta\left(l_2(N_Z^{(4a)}+N^{(4a)})\right)$\\\hline
\makecell{Thickened Cone Code \\ with Full Height Choice} & $N^{(4c)} = N^{(4b)}$\\\hline
\makecell{Thickened Cone Code \\ with Partial Height Choice} & $N^{(4d)} = N^{(4c)}$\\\hline
Reduced Cone Code&$N^{(4)} = \Theta(N^{(4d)})$
\end{tabular}\caption{Upper and lower bounds on the length of the code as it undergoes the four steps. We must define several lengths over the course of the coning procedure for the benefit of the later soundness bounds. In particular, it will be necessary for the benefit of $N_X$ that we differentiate between the parameters labelled $4c$ and $4d$, which are respectively those of the code under a ``full height choice'' (where we make a choice of heights for every X - stabiliser in $\left(\mathcal{C}_{cone}\right)_0\otimes\mathcal{E}_0$), and under the partial height choice that we actually use, where we only make a choice of heights for each X - stabiliser in $\RRI \otimes \mathcal{E}_0$, and not $\mathcal{C}_0\otimes\mathcal{E}_0$. The values of $l_1$ and $l_2$ are the amounts we thicken by the first and second time we thicken respectively, first to reduce \qz to \OO(1), second during the coning construction to reduce \qx to \OO(1). These are, where we may pick any $\epsilon > 0$, $l_1 = \Theta\left((\qz^{(2)})^{1+\epsilon}\min(\qz^{(2)}\wz^{(2)},N^{(2)})^{\OO(\epsilon)}\right)$ and $l_2 = \Theta\left(\wz^{(3)}\log(\wz^{(3)})\right)$.}\label{allLengths}
\end{table}
\renewcommand{\arraystretch}{1}

\renewcommand{\arraystretch}{1.5}
\begin{table}[]
\centering
\begin{tabular}{c||c}
Original&$N_X$\\\hline\hline
Post-Copying&$N_X^{(1)}=\Theta(N_X+N\qx)$\\\hline
Post-Gauging&$N_X^{(2)} = \Theta(N_X^{(1)})$\\\hline
\makecell{Post-Thickening \\ and Choosing Heights}&$N_X^{(3)} = N_X^{(2)}l_1$\\\hline
Cone Code&$N_X^{(4a)} = \OO(N_X^{(3)}+N_Z^{(3)}\wz^{(3)}\log\wz^{(3)})$; $N_X^{(4a)} = \Omega(N_X^{(3)}+N_Z^{(3)})$\\\hline
\makecell{Thickened \\ Cone Code} & $N_X^{(4b)} = \Theta\left(l_2\left(N_X^{(4a)}+N^{(4a)}\right)\right)$\\\hline
\makecell{Thickened Cone Code \\ with Full Height Choice} & $N_X^{(4c)}=\Theta\left(l_2N^{(4a)}+N_X^{(4a)}\right)$\\\hline
\makecell{Thickened Cone Code \\ with Partial Height Choice} & $N_X^{(4d)} = \Theta\left(l_2N^{(4a)}+l_2N^{(3)}+N_X^{(4a)}\right)$\\\hline
Reduced Cone Code&$N_X^{(4)} = \OO\left(l_2\left(N^{(3)}+N_Z^{(3)}\wz^{(3)}\right)\right)$; $N_X^{(4)} = \Omega\left(N_X^{(4d)}\right)$
\end{tabular}\caption{Upper and lower bounds on the number of X - stabilisers of the code as it undergoes the four steps. The amounts we thicken by, $l_1$ and $l_2$, are defined in Table \ref{allLengths}.}\label{allXStabs}
\end{table}
\renewcommand{\arraystretch}{1}

\renewcommand{\arraystretch}{1.5}
\begin{table}[]
\centering
\begin{tabular}{c||c}
Original&$N_Z$\\\hline\hline
Post-Copying&$N_Z^{(1)}=N_Z$\\\hline
Post-Gauging&$N_Z^{(2)} = N_Z^{(1)}$\\\hline
\makecell{Post-Thickening \\ and Choosing Heights}&$N_Z^{(3)} = \Theta\left(N_Z^{(2)}+l_1N^{(2)}\right)$\\\hline
Cone Code&$N_Z^{(4a)} = \OO(N_Z^{(3)}\wz^{(3)})$; $N_Z^{(4a)} = \Omega(N_Z^{(3)})$\\\hline
\makecell{Thickened \\ Cone Code} & $N_Z^{(4b)} = l_2N_Z^{(4a)}$\\\hline
\makecell{Thickened Cone Code \\ with Full Height Choice} & $N_Z^{(4c)} = N_Z^{(4b)}$\\\hline
\makecell{Thickened Cone Code \\ with Partial Height Choice} & $N_Z^{(4d)}=N_Z^{(4c)}$\\\hline
Reduced Cone Code&$N_Z^{(4)} = N_Z^{(4d)}$
\end{tabular}\caption{Upper and lower bounds on the number of Z - stabilisers of the code as it undergoes the four steps. The amounts we thicken by, $l_1$ and $l_2$, are defined in Table \ref{allLengths}.}\label{allZStabs}
\end{table}
\renewcommand{\arraystretch}{1}

\renewcommand{\arraystretch}{2}
\begin{table}[]
\centering
\begin{tabular}{c||c}
Original&$\rho_Z$\\\hline\hline
Post-Copying&$\rho_Z^{(1)} = \Omega\left(\frac{N^{(1)}}{N_X^{(1)}}\frac{\rho_Z}{\qx\rho_Z+\frac{N}{N_X}\qx^3}\right)$\\\hline
Post-Gauging&$\rho_Z^{(2)} = \Omega\left(\frac{N^{(2)}}{N_X^{(2)}}\frac{N_X^{(1)}}{N^{(1)}}\frac{\rho_Z^{(1)}}{\wx^{(1)}\left(1+\frac{N_X^{(1)}}{N^{(1)}}\rho_Z^{(1)}\right)}\right)$\\\hline
\makecell{Post-Thickening \\ and Choosing Heights}&$\rho_Z^{(3)}=\Omega\left(\frac{N^{(3)}}{N_X^{(3)}}\min\left(\frac{N_X^{(2)}}{N^{(2)}}\rho_Z^{(2)},1\right)\frac{1}{l_1}\right)$\\\hline
Cone Code&$\rho_Z^{(4a)} = \Omega\left(\frac{N^{(4a)}}{N_X^{(4a)}}\frac{\rho_Z^{(3)}}{\wz^{(3)}\left(\rho_Z^{(3)}+\frac{N^{(3)}}{N_X^{(3)}}\right)}\right)$\\\hline
\makecell{Thickened \\ Cone Code} & $\rho_Z^{(4b)} = \Omega\left(\frac{N^{(4b)}}{N_X^{(4b)}}\min\left(\frac{N_X^{(4a)}\rho_Z^{(4a)}}{N^{(4a)}},1\right)\frac{1}{l_2}\right)$\\\hline
\makecell{Thickened Cone Code \\ with Full Height Choice} & $\rho_Z^{(4c)} = \Omega\left(\frac{N_X^{(4b)}}{N_X^{(4c)}}\frac{\rho_Z^{(4b)}}{\wz^{(3)}\log\wz^{(3)}l_2}\right)$\\\hline
\makecell{Thickened Cone Code \\ with Partial Height Choice} & $\rho_Z^{(4d)} = \Omega\left(\frac{N_X^{(4c)}}{N_X^{(4d)}}\rho_Z^{(4c)}\right)$\\\hline
Reduced Cone Code&$\rho_Z^{(4)} = \Omega\left(\frac{N^{(4)}}{N_X^{(4)}}\frac{N_X^{(4d)}}{N^{(4d)}}\frac{\rho_Z^{(4d)}}{\wz^{(3)}\left(\frac{N_X^{(4d)}\rho_Z^{(4d)}}{N^{(4d)}}+1\right)}\right)$
\end{tabular}\caption{Lower bounds on the soundness of the Z - operators at each stage of the construction. The amounts we thicken by, $l_1$ and $l_2$, are defined in Table \ref{allLengths}. The final soundness of the Z - operators, after the full weight reduction, is given by $\rho_Z^{(4)}$.}\label{ZOpoSoundness}
\end{table}
\renewcommand{\arraystretch}{1}

\renewcommand{\arraystretch}{2}
\begin{table}[]
\centering
\begin{tabular}{c||c}
Original&$\rho_X$\\\hline\hline
Post-Copying&$\rho_X^{(1)} = \Omega\left(\qx\rho_X\right)$\\\hline
Post-Gauging&$\rho_X^{(2)} = \Omega\left(\frac{N^{(2)}}{N^{(1)}}\rho_X^{(1)}\right)$\\\hline
\makecell{Post-Thickening \\ and Choosing Heights}&$\rho_X^{(3)}=\Omega\left(\frac{N^{(3)}}{N_Z^{(3)}}\frac{1}{\wz^{(2)}\qz^{(2)}l_1^2}\min\left(\frac{N_Z^{(2)}}{N^{(2)}}\rho_X^{(2)},1\right)\right)$\\\hline
Cone Code&$\rho_X^{(4a)} = \Omega\left(\frac{N^{(4a)}}{N_Z^{(4a)}}\frac{N_Z^{(3)}}{N^{(3)}}\frac{\rho_X^{(3)}}{\wz^{(3)}\left(1+\frac{N_Z^{(3)}}{N^{(3)}}\rho_X^{(3)}\right)}\right)$\\\hline
\makecell{Thickened \\ Cone Code} & $\rho_X^{(4b)} = \Omega\left(\frac{N^{(4b)}}{N_Z^{(4b)}}\min\left(\frac{N_Z^{(4a)}\rho_X^{(4a)}}{N^{(4a)}},1\right)\frac{1}{l_2}\right)$\\\hline
\makecell{Thickened Cone Code \\ with Full Height Choice} & $\rho_X^{(4c)} = \rho_X^{(4b)}$\\\hline
\makecell{Thickened Cone Code \\ with Partial Height Choice} & $\rho_X^{(4d)} = \rho_X^{(4c)}$\\\hline
Reduced Cone Code&$\rho_X^{(4)} = \Omega\left(\frac{N^{(4)}}{N^{(4d)}}\rho_X^{(4d)}\right)$
\end{tabular}\caption{Lower bounds on the soundness of the X - operators at each stage of the construction. The amounts we thicken by, $l_1$ and $l_2$, are defined in Table \ref{allLengths}. The final soundness of the X - operators, after the full weight reduction, is given by $\rho_X^{(4)}$.}\label{XOpoSoundness}
\end{table}
\renewcommand{\arraystretch}{1}

\end{document}